
\documentclass[11pt]{article}



\RequirePackage[colorlinks,citecolor=blue,urlcolor=black,backref=true,linktocpage=true]{hyperref}
\usepackage{hyperref}
\usepackage{graphicx,amsmath,latexsym,amssymb}
\usepackage{float,epsfig,multirow,rotating,times}
\usepackage{url}
\usepackage{amsthm}
\usepackage{relsize}
\pagestyle{plain}
\usepackage{subfigure}

\setlength{\evensidemargin}{.25in}
\setlength{\oddsidemargin}{.25in}
\setlength{\textwidth}{6in}
\setlength{\topmargin}{.25in}
\setlength{\textheight}{8in}
\addtolength{\evensidemargin}{-21pt}
\addtolength{\oddsidemargin}{-21pt}
\addtolength{\textwidth}{44pt}
\addtolength{\topmargin}{1pt}
\addtolength{\textheight}{30pt}
\setlength{\headsep}{0in}
\setlength{\headheight}{0in}
\newdimen\dummy
\dummy=\oddsidemargin
\addtolength{\dummy}{72pt}
\marginparwidth=.675\dummy
\marginparsep=.1\dummy

\newtheorem{theorem}{Theorem}[section]
\newtheorem{lemma}[theorem]{Lemma}
\newtheorem{proposition}[theorem]{Proposition}
\newtheorem{corollary}[theorem]{Corollary}
\newtheorem{result}[theorem]{Result}
\newcommand{\abs}[1]{|#1|}

\begin{document}

\normalsize

\title{
\textbf{Gaussian Random Functional Dynamic Spatio-Temporal Modeling of Discrete-Time Spatial Time Series Data}}
\author{Suman Guha\thanks{Suman Guha is a Phd student in Interdisciplinary Statistical Research Unit, Indian Statistical Institute, 
203, B. T. Road, Kolkata 700108. His research is supported by CSIR SPM Fellowship, Govt. of India. \emph{Corresponding e-mail address:} 
sumanguha\_r@isical.ac.in.} \ and Sourabh Bhattacharya\thanks{Sourabh Bhattacharya is an Associate Professor in Interdisciplinary Statistical Research Unit, Indian Statistical Institute, 203, B. T. Road, Kolkata 700108.
\emph{Corresponding e-mail address:} sourabh@isical.ac.in.}}
\maketitle

\begin{abstract}
Discrete-time spatial time series data arise routinely in meteorological and environmental studies. Inference and 
prediction associated with them are mostly carried out using any of the several variants of the linear state space 
model that are collectively called linear dynamic spatio-temporal models (LDSTMs). However, real world environmental 
processes are highly complex and are seldom representable by models with such simple linear structure. Hence, 
nonlinear dynamic spatio-temporal models (NLDSTMs) based on the idea of nonlinear observational and evolutionary 
equations have been proposed as an alternative. However, in that case, the caveat lies in selecting the specific 
form of nonlinearity from a large class of potentially appropriate nonlinear functions. Moreover, modeling by 
NLDSTMs requires precise knowledge about the dynamics underlying the data. In this article, we address 
this problem by introducing the Gaussian random functional dynamic spatio-temporal model (GRFDSTM). 
Unlike the LDSTMs or NLDSTMs, in GRFDSTM both the functions governing the observational and evolutionary equations 
are composed of Gaussian random functions. 
We exhibit many interesting theoretical properties of the GRFDSTM and 
demonstrate how model fitting and prediction can be carried out coherently in a Bayesian framework. 
We also conduct an extensive simulation study and apply our model to a real, $\mathrm{SO_{2}}$ pollution data over Europe. 
The results are highly encouraging.   
 \\[2mm]
{\it Keywords:} {Evolutionary equation; Gaussian process; Gibbs sampler; MCMC; Observational equation; Posterior predictive distribution; State-space model.}
\\[2mm]
{\emph{AMS 2000 Subject Classification:}} {Primary 62M20, 62M30; Secondary 60G15.}

\end{abstract}

\section{Introduction}
\label{sec:intro}
Spatio-temporal modeling has received much attention in recent years. Particularly, the rise in global temperature being a major environmental concern, scientists are now taking keen interest in developing appropriate statistical models to study such climatic phenomena \cite{Furrer:Sain,Jun:Knutti,Sain:Furrer:Cressie,Sang:Jun,Smith:Tebaldi:Nychka,Tebaldi:Sanso}. Other closely 
related meteorological phenomena, that are also drawing much attention of modelers, are rainfall \cite{Cox:Isham,Sanso:Guenni} and precipitation (mist, snowfall, sulphate, nitrate \cite{Sahu:Gelfand:Holland} etc.) 
across different regions. Apart from meteorology, challenging spatio-temporal data also arise from environmental and 
ecological science. To mention a few, studies on ground level concentration of ozone \cite{Bruno:Guttorp:Sampson:Cocchi,Dou:Le:Zidek,Guttorp:Meiring:Sampson,Huerta:Sanso:Stroud}, $\mathrm{SO_{2}}$ \cite{Giannitrapani:Bowman:Scott:Smith,Holland:Oliveira:Cox:Smith}, $\mathrm{NO_{2}}$ \cite{Amini et al.} and PM \cite{Paciorek et al.} related air pollution, 
species distribution over a region \cite{Chakraborty et al.}, change in land usage pattern over time \cite{Deng:Wang:Hong:Qi}, etc. 

While the term spatio-temporal data encompasses the above few and includes many other different types of data, the above-mentioned ones belong to a particularly interesting class called discrete-time spatial time series data. Typically, they are obtained by collecting observations at arbitrary but fixed spatial locations at times $t=1, 2, 3,\cdots, T$. The data $Y(s_{i},t)$ can be represented as $Y_{s_{i}}(t)$, and then we have a whole set of time series, one for each spatial location. Although in reality, the data is available only at finitely many spatial locations (generally called monitoring sites), it is conceptually always useful to assume the existence of time series at every spatial location. An alternative view is to consider it as a time varying spatial random field. The problem facing the statistician is to develop an appropriate model, infer about the spatio-temporal process $Y(s,t)$ and possibly predict at new sites based on this partial realization. 

In what follows, we review some existing models and discuss the advantages and problems associated with them. Then we develop the Gaussian random functional dynamic spatio-temporal model (GRFDSTM) and study its theoretical properties in much detail. We demonstrate how model fitting and prediction can be done coherently using a Bayesian approach. Finally, we examine its performance in an extensive simulation study and apply it to an $\mathrm{SO_{2}}$ pollution dataset over Europe, and obtain very encouraging results. The proofs of all our mathematical results are deferred to the Appendix.

\section{Existing Approaches for Discrete-Time Spatial Time Series Data}
\label{sec:existing_approaches}
As already mentioned, discrete-time spatial time series data $y(\bold{s}_{i},t)$ consists of observations taken at some arbitrary but fixed spatial locations $\bold{s}_{1},\bold{s}_{2},\bold{s}_{3}\cdots, \bold{s}_{n}$ at times $t=1, 2, 3,\cdots, T$. Note that any method applicable to them can easily be extended to data collected at non equispaced time points. Hence, for the sake of simplicity we confine ourselves only to equispaced time points.

There exist two different perspectives from which one can develop a model for such data. One is the marginal approach, 
in which one develops a joint distribution for $y(\bold{s}_{i},t)$. In the case of the Gaussian model, this amounts to 
specifying a spatio-temporal mean function and covariance function. The other approach relies on specifying the 
conditional distribution of the current process realizations given the past process realizations. Although theoretically 
it is equivalent to specifying the conditional distribution or the marginal distribution, one being derivable from the other, 
the conditional approach being more closer to the etiology of the phenomena under study, is preferred over the marginal one. For more discussions on this issue interested reader may look into the book by \cite{Cressie10}. In fact, \cite{Cressie10} covers very wide range of materials on dynamic spatio-temporal modeling and also serves as the main resource for many of the models that we consider here.  

Although there exists a vast literature on marginal models, here we mention only a few of them. \cite{Reinsel et al.} proposed a regional-effects model for the analysis of stratospheric ozone data. Later, \cite{Bloomfield et al.} considered the spatio-temporal random effect $\eta(\bold{s},t)$ and extended it to the following model (see equation 6.16 on the page 306 of \cite{Cressie10})
\begin{align*}
     Y(\bold{s},t)=\gamma(t)+\kappa_{k}(t)+\eta(\bold{s},t)+\epsilon(\bold{s},t),\ \ \  \text{for all}\  \bold{s},
\end{align*}
where $\gamma(t)$ is the random time effect common to all spatial locations, $\kappa_{k}(t)$ is the random regional time effect 
common to all locations belonging to the $k$th region, $\eta(\bold{s},t)$ is the spatio-temporal random 
effect and $\epsilon(\bold{s},t)$ is a spatio-temporal white noise process representing the microscale 
spatio-temporal variability. Another interesting marginal model that hinges not only on the spatio-temporal 
effect but also incorporates useful covariate information is the following spatio-temporal hedonic model for house prices 
proposed by \cite{Gelfand:Ecker:Knight:Sirmans}: 
\begin{align*}
     Y(\bold{s},t)=\bold{X}(\bold{s},t)^{T}\boldsymbol{\beta}(\bold{s},t)+\alpha(t)+w(\bold{s})+\epsilon(\bold{s},t).
\end{align*}
Here, $Y(\bold{s},t)$ is the log selling price, $\alpha(t)$ is the common time effect for all the locations, 
$w(\bold{s})$ is the spatial effect, $\epsilon(\bold{s},t)$ is the Gaussian white noise and $\bold{X}(\bold{s},t)$ is the design matrix containing useful covariate information. This form allows spatio-temporally varying coefficients, which is perhaps more than what is required. Hence, $\boldsymbol{\beta}(\bold{s},t)=\boldsymbol{\beta}$ is frequently adopted. Setting $\boldsymbol{\beta}(\bold{s},t)=\boldsymbol{\beta}(t)$ yields an extension of the model proposed by \cite{Knight:Dombrow:Sirmans}.

Akin to the marginal approach several models for the conditional distribution have been proposed and most of them are based on the idea of linear state space models in time series. These models are generally referred to as linear dynamic spatio-temporal models (LDSTMs). A common representative form for them would be
\begin{align*}
     &Y(\bold{s},t)=\mu_{t}(\bold{s})+\epsilon_{t}(\bold{s}),\ \ \  \text{for all}\  \bold{s};\\
     &\mu_{t}(\bold{s})=\bold{X}_{t}(\bold{s})^{T}\boldsymbol{\beta}_{t};\\
     &\boldsymbol{\beta}_{t}=\bold{G}_{t}\boldsymbol{\beta}_{t-1}+\boldsymbol{\eta}_{t},\ \ \  \text{for all}\  \bold{s},
\end{align*}
where $\bold{X}_{t}(\bold{s})^{T}\boldsymbol{\beta}_{t}$ controls the temporal effect and $\epsilon_{t}(\bold{s})$ handles the spatial variation. The design matrix $\bold{X}_{t}(\bold{s})$ either consists of deterministic functions of locations or important covariate information associated with the locations. The observed data arise from the process $Y(\bold{s},t)$, which is temporally driven by the latent (or, state) process $\boldsymbol{\beta}_{t}$ and the temporal evolution of $\boldsymbol{\beta}_{t}$ is given by some suitable conditional model, preferably a Markovian one. \cite{Stroud:Muller:Sanso} applied the above model in 
the environmental context, where they chose to define $\bold{X}_{t}(\bold{s})=\bold{X}(\bold{s})$ as a linear combination of basis functions of the location $\bold{s}$. Their model is attractive in being applicable to any dataset that is continuous in space and discrete in time. 
\cite{Wikle:Cressie} used the same model in their dimension reduction approach. However, they obtained the model from a more 
general 
underlying process. \cite{Sanso:Schmidt:Nobre} also used this model, but without the error term.
The scope of the model is widened further as \cite{Lopes:Salazar:Gamerman} allowed the columns of $\bold{X}(\bold{s})$ to vary as random functions of locations. Although developed primarily for modeling Gaussian spatial time series data, LDSTMs are easily amenable to non-Gaussian data. Particularly, it can be modified as follows, so as to be suitable for spatio-temporal data that 
are associated with an exponential family model 
\begin{align*}
     Y(\bold{s},t)   \sim & f(y(\bold{s},t)|\mu_{t}(\bold{s}))\\ 
     \text{where}\ \ \  f(y(\bold{s},t)|\mu_{t}(\bold{s})) = & 
     \exp\left[\gamma\left\{y(\bold{s},t)\mu_{t}(\bold{s})-\zeta(\mu_{t}(\bold{s}))\right\}+h(y(\bold{s},t),\gamma)\right].
\end{align*}
Here $\zeta(\cdot)$ is a known, twice differentiable function, $h(\cdot,\cdot)$ is a known function and $\gamma$ is a scalar
dispersion parameter.

All these models are dynamically specified but the dynamics remain invariant across the space. Significant extension can be achieved by allowing the state variable $\boldsymbol{\beta}_{t}$ to vary across space. This obviously implies a substantial increase in the number of parameters and may lead to identifiability problem. Putting restrictions on the parameter space keeps the problem manageable. \cite{Paez:Gamerman:Landim:Salazar} proposed such a spatially varying coefficient dynamic model where they took $\boldsymbol{\beta}_{t}(\bold{s})=\bar{\boldsymbol{\gamma}}_{t}+\boldsymbol{\gamma}_{t}(\bold{s})$. The common trend $\bar{\boldsymbol{\gamma}}_{t}$ evolves dynamically with time but the spatio-temporal part $\boldsymbol{\gamma}_{t}(\bold{s})$ are 
iid Gaussian processes on space. \cite{Banerjee:Gamerman:Gelfand} considered a more general model in the 
environmental context where both $\bar{\boldsymbol{\gamma}}_{t}$ and $\boldsymbol{\gamma}_{t}(\bold{s})$ vary dynamically 
with respect to time. They used this model in 
the 
context of 
regression. \cite{Huerta:Sanso:Stroud} used a similar model additionally equipped with seasonal components to analyze a spatio-temporal ozone dataset. The spatially varying state variables for all these LDSTMs can be represented as
\begin{align*}
     &\boldsymbol{\beta}_{t}(\bold{s})=\bar{\boldsymbol{\gamma}}_{t}+\boldsymbol{\gamma}_{t}(\bold{s}),\ \ \  \text{for all}\  \bold{s};\\ 
      &\bar{\boldsymbol{\gamma}}_{t}=\bar{\bold{G}}_{t}\bar{\boldsymbol{\gamma}}_{t-1}+\bar{\boldsymbol{\omega}}_{t};\\
      &\boldsymbol{\gamma}_{t}(\bold{s})=\bold{G}_{t}\boldsymbol{\gamma}_{t-1}(\bold{s})+\boldsymbol{\omega}_{t}(\bold{s}),\ \ \  \text{for all}\  \bold{s}.
\end{align*}
However, a number of authors proposed an apparently different form for the evolution of $\boldsymbol{\beta}_{t}(\bold{s})$. They assumed that
\begin{align*}
     \beta_{t}(\bold{s})=\int K(\bold{u},\bold{s}){\beta}_{t-1}(\bold{u})d\bold{u}+\eta_{t}(\bold{s}),\ \ \  \text{for all}\  \bold{s},
\end{align*} where $K(\bold{u},\bold{s})$ is a redistribution kernel that determines how the state variables at the previous time point influence the state variables at present. Typically, spatially nearer observations get more weight. Although, it appears to be very different from the form considered earlier, a discrete convolution based approximation to the integral would yield that usual form. \cite{Wikle:Cressie} considered the above formulation in their dimension reduction approach and \cite{Storvik:Frigessi:Hirst} considered it in the context of temperature data modeling.

Model fitting for LDSTMs can be carried out either by the hierarchical Bayesian approach or by classical Kalman filtering. 
\cite{Wikle:Cressie,Huang:Cressie,Mardia:Goodall} used classical Kalman filter or its variations whereas 
\cite{Stroud:Muller:Sanso,Sanso:Guenni,Tonellato,Huerta:Sanso:Stroud,Banerjee:Gamerman:Gelfand} considered the 
hierarchical Bayesian approach. For more on LDSTMs an excellent reference would be \cite{Gelfand:Diggle:Fuentes:Guttorp}.

So far we have considered dynamic spatio-temporal models whose temporal evolution can be described by linear equations. However, real life environmental processes are complex and require much more sophistication in model specification. In particular, processes like precipitation, deposition, etc. are driven by complex interactions among atmospheric processes and are best represented by nonlinear models. An LDSTM model as considered above would be simply unsuitable in this situation. Apart from atmospheric processes, sometimes sea-surface temperature data is modeled by nonlinear dynamic spatio-temporal models (NLDSTMs) owing to the complex dynamics of sea waves, that influence it heavily. Also, many processes in the context of growth curve modeling exhibit state-dependent or density-dependent growth, e.g., $\partial Y/\partial t = Yg(Y;\boldsymbol{\theta})$ for some nonlinear growth function $g(\cdot)$ (e.g., logistic, Ricker, Beverton-Holt, etc.). In addition, many processes exhibit what is 
sometimes referred to as nonlinear advection, e.g., in one spatial dimension, $\partial Y/\partial t = Y\partial Y/\partial s_{1}$ (see \cite{Cressie10}). More general nonlinear dynamic spatio-temporal models are required to accommodate such processes, among others.
\cite{Sanso:Guenni} considered such a truncation based NLDSTM for modeling a Venezuelan rainfall dataset, given as follows:
\begin{align*}
     Y(\bold{s},t)=\begin{cases} 
                    {\boldsymbol{\beta}_{t}(\bold{s})}^{b_{t}}\ \ \  &\text{if}\  \boldsymbol{\beta}_{t}(\bold{s})>0;\\
                    0\ \ \  &\text{if}\  \boldsymbol{\beta}_{t}(\bold{s})\leq 0, 
                    \end{cases}
\end{align*} where $\boldsymbol{\beta}_{t}(\bold{s})$ is the state process and $b_{t}$ is a time varying parameter associated with the truncation equation. A more general formulation for this model can be written as (see equation 7.39 on page 380 of \cite{Cressie10})
\begin{align*}
 Y(\bold{s},t)=a_{t}(\bold{s})+h_{t}(\bold{s}){\boldsymbol{\beta}_{t}(\bold{s})}^{b_{t}(\bold{s})}+\epsilon_{t}(\bold{s}),\ \ \  \text{for all}\  \bold{s}, 
\end{align*} where $\epsilon_{t}(\bold{s})$ is a spatial-temporal noise process. The above model assumes that the observational equation that connects the observed process $Y(\bold{s},t)$ with the state or latent process $\boldsymbol{\beta}_{t}(\bold{s})$, is nonlinear. Instead of that, nonlinearity can be introduced into the evolutionary equation. In that case, we would have 
\begin{align*}
 \boldsymbol{\beta}_{t}(\bold{s})=\varPsi(\boldsymbol{\beta}_{t-1}(\bold{s}))+\eta_{t}(\bold{s}),
\end{align*} where $\varPsi(\cdot)$ is some appropriate nonlinear function. \cite{Cressie10} discussed nonlinear state-dependent models $\boldsymbol{\beta}_{t}(\bold{s})=\varPsi_{t}(\boldsymbol{\beta}_{t-1}(\bold{s}))\boldsymbol{\beta}_{t-1}(\bold{s})+\epsilon_{t}(\bold{s})$ which are time varying versions of it. Often the nonlinear function $\varPsi_{t}(\boldsymbol{\beta}_{t-1}(\bold{s}))\boldsymbol{\beta}_{t-1}(\bold{s})$ is taken to be a threshold function as follows:
\begin{align*}
 \boldsymbol{\beta}_{t}(\bold{s})=\varPsi_{t}(\boldsymbol{\beta}_{t-1}(\bold{s}))\boldsymbol{\beta}_{t-1}(\bold{s})+\eta_{t}(\bold{s})=
 \begin{cases}
    G_{1}\boldsymbol{\beta}_{t-1}(\bold{s})+\eta_{1,t}(\bold{s})\ \ \  &\text{if}\  f_{1}(\gamma_{t}) \in c_{1};\\ 
    \vdots\\
    G_{K}\boldsymbol{\beta}_{t-1}(\bold{s})+\eta_{K,t}(\bold{s})\ \ \  &\text{if}\  f_{K}(\gamma_{t}) \in c_{K},\\      
 \end{cases}
\end{align*} where $f_{k}(\gamma_{t})$ is a function of a time varying parameter $\gamma_{t}$, and 
$c_{k}; k=1,2,\cdots,K$, is the condition under which the $k$th equation is to be followed (see equation 7.69 on page 406 of \cite{Cressie10}). An example of such a model is given by \cite{Berliner:Wikle:Cressie} with regard to long-lead forecasting of tropical Pacific sea surface temperature. \cite{Hughes:Guttorp} used such a model in atmospheric application and \cite{Hooten:Wikle} employed them for the analysis of an ecological dataset. Another interesting class of NLDSTM, that \cite{Wikle:Hooten} (also see \cite{Cressie10}) referred to as the General Quadratic Nonlinear (GQN) Model, is given by
\begin{align*}
 \beta_{t}(\bold{s}_{i})=\sum_{j=1}^{n}a_{ij}\beta_{t-1}(\bold{s}_{j})+\sum_{k=1}^{n}\sum_{l=1}^{n}b_{i,kl}\beta_{t-1}(\bold{s}_{k})g(\beta_{t-1}(\bold{s}_{l});\boldsymbol{\theta}^G)+\eta_{t}(\bold{s}_{i});\ \ \  \text{for}\  i=1,2,\cdots,n,
\end{align*} where $\sum_{j=1}^{n}a_{ij}\beta_{t-1}(\bold{s}_{j})$ is a linear combination of the process at the previous time and \newline $\sum_{k=1}^{n}\sum_{l=1}^{n}b_{i,kl}\beta_{t-1}(\bold{s}_{k})g(\beta_{t-1}(\bold{s}_{l});\boldsymbol{\theta}^G)$ contains quadratic interactions of the process and potentially some transformation of the lagged process, at the previous time. 
The model is flexible enough to accommodate nonlinear transformations of the process through the function $g(\cdot)$, 
which might depend upon the unknown parameter vector $\boldsymbol{\theta}^G$. GQN constitutes a very rich class 
of models and many complex process models including the one considered by \cite{Berliner:Milliff:Wikle} 
for the so-called quasi-geostrophic flow in the ocean, are special cases of it. \cite{Wikle:Holan} considered an 
extension of GQN by incorporating higher order polynomial interaction terms in the following way:
\begin{align*}
 \beta_{t}(\bold{s}_{i})=&\sum_{j_{1}=1}^{n}a_{i,j_{1}}^{(1)}\beta_{t-1}(\bold{s}_{j_{1}})+\sum_{j_{2}=1}^{n}\sum_{j_{1}=1}^{n}a_{i,j_{1}j_{2}}^{(2)}\beta_{t-1}(\bold{s}_{j_{2}})g(\beta_{t-1}(\bold{s}_{j_{1}});\boldsymbol{\theta}_{g_{1}})+\eta_{t}(\bold{s}_{i})\\
 &+\sum_{j_{3}=1}^{n}\sum_{j_{2}=1}^{n}\sum_{j_{1}=1}^{n}a_{i,j_{1}j_{2}j_{3}}^{(3)}\beta_{t-1}(\bold{s}_{j_{3}})\beta_{t-1}(\bold{s}_{j_{2}})g(\beta_{t-1}(\bold{s}_{j_{1}});\boldsymbol{\theta}_{g_{2}})\\
 & \ \ \ \ \ \ \ \ \ \ \ \ \ \ \ \ \ \ \ \ \vdots\\
 &+\sum_{j_{p}=1}^{n}\cdots\sum_{j_{2}=1}^{n}\sum_{j_{1}=1}^{n}a_{i,j_{1}j_{2}\cdots j_{p}}^{(p)}\beta_{t-1}(\bold{s}_{j_{p}})\cdots \beta_{t-1}(\bold{s}_{j_{2}})g(\beta_{t-1}(\bold{s}_{j_{1}});\boldsymbol{\theta}_{g_{p}})\\
 &+\eta_{t}(\bold{s}_{i}).\\
\end{align*}
It is called the General Polynomial Nonlinear (GPN) model. For an excellent overview about such NLDSTMs the reader 
may look into chapter 7 of \cite{Cressie10}.

NLDSTMs provide us with a way out when the usual LDSTMs turn out to be too naive for the phenomena under study, but that too comes with a cost. The issues of dimensionality and efficient parametrization present the most significant challenges for statistical modeling of LDSTMs and these issues get even more critical for NLDSTM's. However, what is more daunting is that without 
very precise knowledge of the underlying dynamics, it is almost impossible to elicit an appropriate nonlinear model from a large class of probable nonlinear functions. A selected nonlinear model, that is unsuitable for the physical process under study, would show grossly poor predictive performance. Seemingly irrelevant departure from reality at the level of model specification cumulates over time and the outcome may be devastating.

\section{Gaussian Random Functional Dynamic Spatio-Temporal Model}
\label{sec:our_proposal}

As already discussed, a focal issue for the NLDSTMs is the selection of the form of nonlinearity, and 
such a task is highly non-trivial. We propose a dynamic spatio-temporal model where both the observational and evolutionary equations are random functions. Hence, we no longer have to decide about the specific functional forms; instead, all we need to do is to ensure that the probabilistic law for the random function is so chosen that it gives enough probability to a set of functions that seem potentially appropriate for the data at hand. Moreover, unlike the LDSTMs or NLDSTMs where the functional form is fixed, a random functional form is more adaptable to the data and expected to represent the true underlying process, which may be complex and highly nonlinear, more accurately. As the probabilistic law for those random functions, we specify 
appropriate Gaussian processes. The motivation for choosing Gaussian process comes from the fact that 
Gaussian processes are good natural priors for 
nonparametric regression and classification problems and under increasingly dense observations, the true shape of the arbitrary function or the classifier can be captured accurately, a posteriori \cite{Rasmussen:Williams}. Later we shall see how the proposed Gaussian random functional dynamic spatio-temporal model (we refer to our model as GRFDSTM) is connected to a Bayesian nonparametric function estimation problem. Besides, already there is a very rich mathematical theory for Gaussian process \cite{Adler07} which can be utilized to explore properties of GRFDSTM. Many elegant mathematical results, which hold true for Gaussian random functions, may be lost if we consider some other random functions, i.e., Levy random functions, Elliptic random functions etc. 

We assume that the observed spatio-temporal process $Y(\bold{s},t)$ (equivalently, $Y_{t}(\bold{s})$) 
is driven by an unobserved spatio-temporal state process $X(\bold{s},t)$ which itself is evolving in time.
Our model can be described in the following way:

\begin{align}
     Y(\bold{s},t)&=f(X(\bold{s},t))+\epsilon(\bold{s},t);
     \label{eqn:npr1}\\
     X(\bold{s},t)&=g(X(\bold{s},t-1))+\eta(\bold{s},t),
     \label{eqn:npr2}\\
     X(\cdot,0) & \sim \text{GP}(\mu_{0}(\cdot),c_{0}(\cdot,\cdot)); \ \ f(\cdot), g(\cdot) \sim \text{GRF}(\cdot,\cdot),
\end{align}
where $ \bold{s}\in \mathbb{R}^2 $ and $ t\in \{1,2,3,\ldots\}$. In the above, ``GP" stands for ``Gaussian process" and ``GRF" stands for ``Gaussian random function". Here $X(\cdot,0)$ is a spatial Gaussian process on $\mathbb{R}^2$;
$\epsilon(\cdot,t)$ and $\eta(\cdot,t)$ 
are temporally independent and identically distributed 
spatial Gaussian processes on $\mathbb{R}^2$, and $f(\cdot)$ and $g(\cdot)$ are Gaussian random functions on $\mathbb{R}$.
They are all independent of each other. Note that, in the LDSTMs and NLDSTMs discussed in the previous section, the state process is denoted by $\beta_{t}(\bold{s})$ or $\beta(\bold{s},t)$ and $\bold{X}(\bold{s},t)$ commonly refer to the covariate information, whereas for GRFDSTM we choose to denote the state process by $X(\bold{s},t)$, which resembles the notation used for state space models in classical time series literature.

The specification (\ref{eqn:npr2}) is very crucial since it determines the nonlinear dynamics of the model. In nature we encounter such nonlinear function $g(.)$ in population ecology where $Y(\bold{s},t)$ measures vegetation density at location $\bold{s}$ at time $t$. Then $Y(\bold{s},t)$ is linked to the population size $X(\bold{s},t)$ of a herbivorous species that feed on it, through some complex nonlinear function $f(.)$. The population size of the herbivorous species $X(\bold{s},t)$ on the other hand, dynamically changes according to a discretized spatial Lotka-Voltera type model. 

\begin{align*}
     X(\bold{s},t+\Delta t)&=X(\bold{s},t)+\Delta t(\alpha X(\bold{s},t)-\beta X(\bold{s},t)Z(\bold{s},t))\\
     Z(\bold{s},t+\Delta t)&=Z(\bold{s},t)+\Delta t(\delta X(\bold{s},t)Z(\bold{s},t)-\gamma Z(\bold{s},t))
\end{align*}
where $Z(\bold{s},t)$ denotes the population size of a carnivorous species that hunt the herbivorous species. So, in this case $X(\bold{s},t+\Delta t)=g(X(\bold{s},t))$ through a very complex nonlinear function $g(.)$ which is not even writable in closed form. Another instance of such nonlinear evolutionary transformation $g(.)$ can be found in oceanography where the sea surface temperature $Y(\bold{s},t)$ depends on the ocean stream function $X(\bold{s},t)$. The ocean stream function varies dynamically through some complex nonlinear equation referred to as quasi-geostrophic (QG) equation (see \cite{Berliner:Milliff:Wikle})
\begin{align*}
     (\nabla^{2}-\frac{1}{r^{2}})\frac{\partial X(\bold{s},t)}{\partial t}=-J(X(\bold{s},t),\nabla^{2}X(\bold{s},t))-\beta\frac{\partial X(\bold{s},t)}{\partial s_{1}}+\frac{1}{\rho H}curl_{s_{3}}\tau-\gamma\nabla^{2}X(\bold{s},t)+a_{h}\nabla^{4}X(\bold{s},t)
\end{align*}
Again, discretized version of this equation leads to a very complex nonlinear evolutionary transformation $g(.)$.

The GRFDSTM can be connected to a Bayesian nonparametric function estimation problem in the following way. Suppose, we want to study a spatio-temporal process $X(\bold{s},t)$. However, all that we have is data $Y(\bold{s},t)$, which is a masked version of $X(\bold{s},t)$. Such masked data are routinely encountered in the study of signals, images, videos, etc. Suppose that $X(\bold{s},t)$ is dynamically changing with time and the dynamics can be modeled by some complex, possibly nonlinear function $g(\cdot)$. Then continuity is the minimal requirement for such a function $g(\cdot)$. We further assume that the masking is also described by some continuous transformation $f(\cdot)$. Now, the problem facing the statistician is to estimate the two unknown functions (or alternatively infinite dimensional parameters) and recover the true signal $X(\bold{s},t)$. To solve the problem in the 
Bayesian way we must elicit priors on the infinite dimensional function space of all real-valued continuous functions on the real line, i.e., $C(\mathbb{R})$. Suppose, we specify a Gaussian process prior on the parameter space $C(\mathbb{R})$. 
Then, if we consider the conditional distribution of $[Y(\bold{s},t),X(\bold{s},t)|\boldsymbol{\theta}]$, 
where $\boldsymbol{\theta}$ is the hyper-parameter associated with the Gaussian process priors, it is the same as the GRFDSTM. 
Although the purpose of this hierarchical Bayesian model for function estimation is to identify $f(\cdot)$ and $g(\cdot)$ 
and recover the original signal $X(\bold{s},t)$, which is quite different than the purpose the GRFDSTM serves, i.e., 
prediction possibly at a new spatio-temporal location $(\bold{s}^*,t^*)$, the following duality between the two models
\begin{align*}
 \hat{f}(\cdot), \hat{g}(\cdot) \ \ \text{and}\ \  \hat{X}(\bold{s},t)   \ \ \ \ \ \ \ \ \ \ \ \ \ \ \ \ \ \ \ \ \ \ \ \ \ \  & \ \mathlarger{\Leftrightarrow}  \ \ \ \ \ \ \ \ \ \ \ \ \ \ \ \ \ \ \ \ \ \ \ \ \  \ \ \   \hat{Y}(\bold{s}^*,t^*)\\
 \text{[good estimators of $f(\cdot),g(\cdot)$ and $X(\bold{s},t)$} \ \ \ \ \ \ \ \ \ \ \ \ & \ \ \ \ \ \ \ \ \ \ \ \ \text{[good predictors of $Y(\bold{s}^*,t^*)$ at the spatio-}\\
 \text{for the hierarchical Bayesian model]} \ \ \ \ \ \ \ \ \ \ \ \ \ \ \ \ & \ \ \ \ \ \ \ \ \ \ \ \ \text{temporal location $(\bold{s}^*,t^*)$ for the GRFDSTM]}
\end{align*} 
gives us the intuition that a GRFDSTM might work well.

Finally, to completely specify the GRFDSTM, we need to describe the parameters associated with the Gaussian random functions and the spatial Gaussian processes. 
We assume that the Gaussian random function $f(x)$ has the mean function of the form $\beta_{0f}+\beta_{1f}x$ (where
$\beta_{0f}, \beta_{1f}$ are suitable parameters) and has isotropic covariance kernel of the form $c_{f}(x_{1},x_{2})=\gamma(\|x_1-x_2\|)$,
where $\gamma$ is a positive definite function. 
It again consists of parameters that determine the smoothness of the sample paths of $f(\cdot)$. Moreover, $\gamma$ is such that the centered Gaussian random function with covariance kernel $\gamma$ has continuous sample paths. This assumption of 
sample path continuity is required due to some technical reasons and is very minimal in the sense that all popular isotropic covariance kernels satisfy it. Typical examples of $\gamma$ are exponential, powered exponential, Gaussian, Mat\'{e}rn etc. (see, for example, Table 2.1 of \cite{Banerjee04} for other examples of such covariance kernels).
%
Similarly, parameters $\beta_{0g}, \beta_{1g}$ and $c_{g}(x_{1},x_{2})$ are associated with the Gaussian random function $g(x)$. 

The zero mean spatial Gaussian processes $\epsilon(\bold{s},t)$ and $\eta(\bold{s},t)$ have covariance kernels $c_{\epsilon}(\bold{s},\bold{s}^{\prime})$ 
and $c_{\eta}(\bold{s},\bold{s}^{\prime})$ respectively, which are also of similar form. 
 
Regarding the spatial Gaussian process associated with $X(\cdot,0)$, we assume a continuous mean process of the form $\mu_{0}(\cdot)$ and 
isotropic covariance kernel $c_{0}(\cdot,\cdot)$. For convenience, we introduce separate notations for the mean vector and the 
covariance matrix associated with $(X(\bold{s}_{1},0),X(\bold{s}_{2},0),\cdots X(\bold{s}_{n},0))$, where 
$\bold{s}_{1}, \bold{s}_{2}, \bold{s}_{3}\cdots, \bold{s}_{n}$ are the spatial locations where the data is observed. 
We denote them by ${\boldsymbol{\mu}}_{0}$ and ${\boldsymbol{\Sigma}}_{0}$ respectively.

From (\ref{eqn:npr1}) and (\ref{eqn:npr2}) it is not difficult to see that our model boils down to a simple 
LDSTM with no covariate if the process variance associated with the Gaussian random functions (denoted by $\sigma_{f}^{2}$ and $\sigma_{g}^{2}$) become 0. The model equations (\ref{eqn:npr1}) and (\ref{eqn:npr2}) then reduce to the following form
\begin{align}
     Y(\bold{s},t) &=\beta_{0f}+\beta_{1f}X(\bold{s},t)+\epsilon(\bold{s},t);
     \label{eqn:DLM1}\\
     X(\bold{s},t) &=\beta_{0g}+\beta_{1g}X(\bold{s},t-1)+\eta(\bold{s},t);
     \label{eqn:DLM2}\\
     X(\cdot,0) & \sim \text{GP}(\mu_{0}(\cdot),c_{0}(\cdot,\cdot)),
\end{align}
where $ \bold{s}\in \mathbb{R}^2 $ and $ t\in\{1,2,3,\ldots\}$. Although we develop the GRFDSTM for equispaced time points, 
simple modification of this model can handle non equispaced time points as well. 
However, for the sake of 
simplicity and brevity, in this article, we shall consider only equispaced time points. Lastly, it is also possible to consider nonlinear mean function for $f(x)$ and $g(x)$. This formulation would work well, provided one already has strong prior knowledge about the form of the nonlinear dynamics behind the process. However, under little or no knowledge about the process, we strongly suggest sticking to the linear mean functions. 

The GRFDSTM is in spirit very similar to spatio-temporal generalized additive models (STGAM) \cite{Giannitrapani:Bowman:Scott:Smith,Holland:Oliveira:Cox:Smith}. A typical STGAM looks like the following
\begin{align*}
 Y(\bold{s},t) =\mu+m_{\bold{s}}(s_{1},s_{2})+m_{t}(t)+\epsilon(\bold{s},t).
\end{align*}
Like the GRFDSTM, the STGAM also model the data without assuming linear or specific nonlinear form for $m_{\bold{s}}(.)$ and $m_{t}(.)$, thereby modeling the spatial and temporal trend nonparametrically. However, the main difference between the GRFDSTM and the STGAM is that the former is a conditional approach and assumes flexible nonparametric form for the dynamic (conditional) structure of the space-time model whereas the latter is a marginal approach and stresses flexible modeling of the trend in space and time and the interaction (marginal and joint) terms.

\subsection{Some Measurability and Existential Issues}

Before we proceed to explore the properties of GRFDSTM, we need to ensure that a family of valid (measurable) spatio-temporal 
stochastic processes is induced by the proposed model. Only then it can be used as a statistical model for real world physical processes. 
In general, such measurability issues are almost always trivially satisfied by statistical models and so, never discussed in detail. But in this case, we 
need to show that $f(X(\bold{s}_{1},t)),f(X(\bold{s}_{2},t)),\cdots,f(X(\bold{s}_{n},t))$ are jointly measurable for any $n$ 
and any set of spatial locations $\bold{s}_{1},\bold{s}_{2},\cdots,\bold{s}_{n}$, and this is not a trivial problem. 
The difficulty is that when $f$ and $X$ both are random, $f(X)$ need not be a measurable or valid random variable. 
It is the sample path continuity of $f(\cdot)$, which compels $f(X(\bold{s},t))$ to be measurable.

\begin{proposition}
\label{propn:measurable}
The \emph{GRFDSTM} defines a family of valid (measurable) spatio-temporal processes on $\mathbb R^{2}\times\mathbb Z^+$.
\end{proposition}

Once it is ensured that the proposed model induces a family of valid spatio-temporal processes, we look 
into some other important aspects like the joint distributions of state variables and observed variables, covariance structure 
of the observed process etc. Note that although we develop the GRFDSTM assuming that the spatial dimension is 
$2$, the construction and all the subsequent results go through for $\mathbb{R}^d$ $(d>2)$ as well.

\subsection{Joint Distribution of the Variables}
Due to the implicit hierarchical structure of the GRFDSTM, the joint distribution of the observed variables is non-Gaussian. But even before considering that, we want to find out the joint distribution of the state variables. A closed form joint pdf for the state variables is necessary for MCMC based posterior inference.

\begin{theorem}
\label{thm:state}Suppose that the spatio-temporal process is observed at locations 
$\bold{s}_{1}, \bold{s}_{2}, \bold{s}_{3}\cdots, \bold{s}_{n}$ for times $t=1, 2, 3,\cdots, T $.
Then the joint distribution of the state variables is non-Gaussian and has the pdf
\vspace{3mm}

$\mathlarger{\mathlarger{\frac{1}{(2\pi)^\frac{n}{2}}\frac{1}{|{\mathbf{\Sigma}}_{0}|^\frac{1}{2}}}}\exp\left[-\frac{1}{2}{\begin{pmatrix}x(\bold{s}_{1},0)-\mu_{01}\\x(\bold{s}_{2},0)-\mu_{02}\\ \vdots\\x(\bold{s}_{n},0)-\mu_{0n}\end{pmatrix}}^{\prime}{{\mathbf{\Sigma}}_{0}}^{-1}{\begin{pmatrix}x(\bold{s}_{1},0)-\mu_{01}\\x(\bold{s}_{2},0)-\mu_{02}\\ \vdots\\x(\bold{s}_{n},0)-\mu_{0n}\end{pmatrix}}\right] \mathlarger{\mathlarger{\frac{1}{(2\pi)^\frac{nT}{2}}\frac{1}{|\tilde{\mathbf{\Sigma}}|^\frac{1}{2}}}}\times$

\vspace{3mm}

$\exp\left[-\frac{1}{2}{\begin{pmatrix}x(\bold{s}_{1},1)-\beta_{0g}-\beta_{1g}
x(\bold{s}_{1},0)\\x(\bold{s}_{2},1)-\beta_{0g}-\beta_{1g}
x(\bold{s}_{2},0)\\ \vdots\\x(\bold{s}_{n},T)-\beta_{0g}-\beta_{1g}
x(\bold{s}_{n},T-1)\end{pmatrix}}^{\prime}{\tilde{\mathbf{\Sigma}}}^{-1}{\begin{pmatrix}x(\bold{s}_{1},1)-\beta_{0g}-\beta_{1g}
x(\bold{s}_{1},0)\\x(\bold{s}_{2},1)-\beta_{0g}-\beta_{1g}
x(\bold{s}_{2},0)\\ \vdots\\x(\bold{s}_{n},T)-\beta_{0g}-\beta_{1g}
x(\bold{s}_{n},T-1)\end{pmatrix}}\right], $

\vspace{3mm}

where ${\boldsymbol{\mu}}_{0}=(\mu_{01},\mu_{02},\cdots,\mu_{0n})^\prime$ and ${\mathbf{\Sigma}}_{0}$ 
are already defined to be the mean vector and the covariance matrix of
$(X(\bold{s}_1, 0), X(\bold{s}_2, 0),\ldots, X(\bold{s}_n , 0))$,
and

\vspace{3mm}
\begin{center}
$\tilde{\mathbf{\Sigma}}=\begin{pmatrix}1&0&\cdots &0\\0&1&\cdots &0\\ \vdots \\0&0&\cdots &1\\ \end{pmatrix}\bigotimes{\mathbf{\Sigma}}_{\eta}+\mathbf{\Sigma},$
\vspace{3mm}\\
\end{center}
where the elements of ${\mathbf{\Sigma}}_{\eta}$ are obtained from the purely spatial covariance function $c_{\eta}$ and the elements of $\mathbf{\Sigma}$ are obtained from the covariance function $c_{g}$ in the following way : \\
\\
the $(i,j)$ th entry of ${\mathbf{\Sigma}}_{\eta}$ is $c_{\eta}(\bold{s}_{i},\bold{s}_{j})$ and the $((t_{1}-1)n+i,(t_{2}-1)n+j)$ th entry of $\mathbf{\Sigma}$ is $c_{g}(x(\bold{s}_{i},t_{1}-1),x(\bold{s}_{j},t_{2}-1))$ where $1 \leq t_{1},t_{2} \leq T$ and $1\leq i,j \leq n$ .
\end{theorem}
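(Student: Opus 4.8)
The plan is to factor the joint density of the state variables, $p(\mathbf{x}_0,\mathbf{x})$, as $p(\mathbf{x}_0)\,p(\mathbf{x}\mid\mathbf{x}_0)$, where $\mathbf{x}_0=(x(\mathbf{s}_1,0),\ldots,x(\mathbf{s}_n,0))^\prime$ collects the time-zero states and $\mathbf{x}$ collects the states $x(\mathbf{s}_i,t)$, $1\le i\le n$, $1\le t\le T$, stacked so that entry $(t-1)n+i$ is site $i$ at time $t$, as in the statement. By assumption $(X(\mathbf{s}_1,0),\ldots,X(\mathbf{s}_n,0))$ is $N(\boldsymbol{\mu}_0,\mathbf{\Sigma}_0)$, which gives the first bracketed factor together with its constant $(2\pi)^{-n/2}|\mathbf{\Sigma}_0|^{-1/2}$. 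So the problem reduces to showing that $p(\mathbf{x}\mid\mathbf{x}_0)$ equals the $nT$-dimensional Gaussian density displayed as the second factor. It is worth noting that the state process is \emph{not} Markov in $t$, because the single random function $g$ is shared by all the transitions in (\ref{eqn:npr2}); the argument below nonetheless yields the stated closed form.

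First I would condition in addition on the entire sample path of $g$. Given $g$ and $\mathbf{x}_0$, equation (\ref{eqn:npr2}) together with the temporal independence of the processes $\eta(\cdot,t)$ turns the state sequence into a Gaussian Markov chain, with $X(\cdot,t)$ given $\{X(\cdot,t-1)=x_{t-1},\,g\}$ distributed as $N\!\big(g(x_{t-1}),\,\mathbf{\Sigma}_\eta\big)$, where $g(x_{t-1})=(g(x(\mathbf{s}_1,t-1)),\ldots,g(x(\mathbf{s}_n,t-1)))^\prime$ and $\mathbf{\Sigma}_\eta$ has $(i,j)$ entry $c_\eta(\mathbf{s}_i,\mathbf{s}_j)$. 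Multiplying these transition densities over $t=1,\ldots,T$ and writing everything in stacked form,
\begin{equation*}
p(\mathbf{x}\mid\mathbf{x}_0,g)=N\!\left(\mathbf{x};\ \mathbf{m}+\mathbf{G}(g),\ \mathbf{I}_T\otimes\mathbf{\Sigma}_\eta\right),
\end{equation*}
where $\mathbf{m}$ has $((t-1)n+i)$-th entry $\beta_{0g}+\beta_{1g}\,x(\mathbf{s}_i,t-1)$ and $\mathbf{G}(g)$ has $((t-1)n+i)$-th entry $\tilde g(x(\mathbf{s}_i,t-1))$, with $\tilde g=g-\beta_{0g}-\beta_{1g}(\cdot)$ the centered Gaussian process whose covariance kernel is $c_g$. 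The evaluation points $x(\mathbf{s}_i,t-1)$ here are coordinates of $\mathbf{x}_0$ when $t=1$ and coordinates of $\mathbf{x}$ when $t\ge 2$, hence constants once $(\mathbf{x}_0,\mathbf{x})$ is held fixed.

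Next I would integrate $g$ out. Because $g$ is independent of $X(\cdot,0)$, $p(\mathbf{x}\mid\mathbf{x}_0)=\mathbb{E}_g\big[p(\mathbf{x}\mid\mathbf{x}_0,g)\big]$. For the fixed argument $(\mathbf{x}_0,\mathbf{x})$ the vector $\mathbf{G}(g)$ is a mean-zero Gaussian vector whose covariance matrix is exactly $\mathbf{\Sigma}$, the matrix with $((t_1-1)n+i,(t_2-1)n+j)$ entry $c_g(x(\mathbf{s}_i,t_1-1),x(\mathbf{s}_j,t_2-1))$. Hence $p(\mathbf{x}\mid\mathbf{x}_0)$ is the value at $\mathbf{x}$ of the density of $\mathbf{m}+\mathbf{V}+\mathbf{E}$ with $\mathbf{V}\sim N(\mathbf{0},\mathbf{\Sigma})$ and $\mathbf{E}\sim N(\mathbf{0},\mathbf{I}_T\otimes\mathbf{\Sigma}_\eta)$ independent, i.e. $p(\mathbf{x}\mid\mathbf{x}_0)=N\!\big(\mathbf{x};\,\mathbf{m},\,\mathbf{I}_T\otimes\mathbf{\Sigma}_\eta+\mathbf{\Sigma}\big)=N(\mathbf{x};\mathbf{m},\tilde{\mathbf{\Sigma}})$. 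Multiplying by $p(\mathbf{x}_0)$ gives the asserted joint pdf, and the law is non-Gaussian because both $\mathbf{m}$ and $\tilde{\mathbf{\Sigma}}$ depend on $(\mathbf{x}_0,\mathbf{x})$.

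The step I expect to be the main obstacle is making ``condition on the whole function $g$'' rigorous: one has to exhibit a version of $p(\mathbf{x}\mid\mathbf{x}_0,g)$ that is jointly measurable in $(\mathbf{x},g)$ so that Fubini legitimizes the interchange of integration over the prior of $g$ with evaluation of the density, and one has to know that $g\mapsto\mathbf{G}(g)$ is genuinely Gaussian at the fixed evaluation points --- both of which rely on the sample-path continuity of $g(\cdot)$ and on the measurability secured in Theorem \ref{thm:measurable}. Everything else --- the Gaussian-mixture-equals-convolution computation, the Kronecker-product bookkeeping, and matching the normalizing constant $(2\pi)^{-nT/2}|\tilde{\mathbf{\Sigma}}|^{-1/2}$ --- is routine.
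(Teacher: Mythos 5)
Your proposal is correct and follows essentially the same route as the paper's proof: condition on the function $g$ to obtain a Gaussian Markov chain whose stacked conditional density has covariance $\mathbf{I}_T\otimes\mathbf{\Sigma}_\eta$, then marginalize $g$ by recognizing the values $g(x(\bold{s}_i,t-1))$ as a Gaussian vector with mean $\beta_{0g}+\beta_{1g}x(\bold{s}_i,t-1)$ and covariance $\mathbf{\Sigma}$, so that the integral is a convolution of two $nT$-dimensional Gaussians yielding $\tilde{\mathbf{\Sigma}}=\mathbf{I}_T\otimes\mathbf{\Sigma}_\eta+\mathbf{\Sigma}$. Your splitting of $g$ into its linear mean plus the centered process $\tilde g$ is only a cosmetic reparametrization of the paper's substitution $u(i,t)=g(x(\bold{s}_i,t-1))$, and your closing remark about measurability is exactly what the paper defers to Theorem \ref{thm:measurable}.
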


Although the appearance of the probability density function resembles that of a multivariate Gaussian density, the involvement of 
$x(\bold{s}_{i},t)$ in $\tilde{\mathbf{\Sigma}}$ renders it non-Gaussian. In the extreme case when the process 
variance $c_{g}(0,0)$ ($=\sigma_{g}^2$) of the Gaussian random function $g(\cdot)$ is 0, $\tilde{\mathbf{\Sigma}}$ 
becomes a block diagonal matrix with identical blocks and the joint density becomes Gaussian. In the formation of $\tilde{\mathbf{\Sigma}}$, the ${\mathbf{I}}_{T\times T}\bigotimes\mathbf{\Sigma_{\eta}}$ part corresponds to linear evolution and Gaussianity whereas the 
component $\mathbf{\Sigma}$ corresponds to departure from linearity. In fact, it is also responsible for making the pdf a non-Gaussian one.

Moreover, it is also clear from the form of the density function that the temporal aspect is imposed on the model through both the location function and the scale function associated with the latent process, making the GRFDSTM a very flexible spatio-temporal model. 

The interesting property that the observed spatio-temporal process is also non-Gaussian, is a consequence of both of the facts that the state variables are non-Gaussian and the GRFDSTM has an implicit hierarchical structure. We have the following
theorem in this regard:

\begin{theorem}
\label{thm:observe}
Suppose that the spatio-temporal process is observed at locations $\bold{s}_{1}, \bold{s}_{2}, \bold{s}_{3}\cdots, \bold{s}_{n}$ for times $t=1, 2, 3,\cdots, T $.
Then the following hold true:
\vspace{3mm}\\
(a) The joint distribution of the observed variables is a Gaussian mixture and has the following density

\vspace{3mm}

$\mathlarger{\mathlarger{\int}_{\mathbb R^{nT}}\frac{1}{(2\pi)^\frac{nT}{2}}\frac{1}{|{\mathbf\Sigma}_{f,\epsilon}|^\frac{1}{2}}}\times $

\vspace{3mm}

$\exp\left[-\frac{1}{2}{\begin{pmatrix}y(\bold{s}_{1},1)-\beta_{0f}-\beta_{1f}x(\bold{s}_{1},1)\\y(\bold{s}_{2},1)-\beta_{0f}-\beta_{1f}x(\bold{s}_{2},1)\\ \vdots\\y(\bold{s}_{n},T)-\beta_{0f}-\beta_{1f}x(\bold{s}_{n},T)\end{pmatrix}}^{\prime}{{\mathbf\Sigma}_{f,\epsilon}}^{-1}{\begin{pmatrix}y(\bold{s}_{1},1)-\beta_{0f}-\beta_{1f}x(\bold{s}_{1},1)\\y(\bold{s}_{2},1)-\beta_{0f}-\beta_{1f}x(\bold{s}_{2},1)\\ \vdots\\y(\bold{s}_{n},T)-\beta_{0f}-\beta_{1f}x(\bold{s}_{n},T)\end{pmatrix}}\right]\mathlarger{ h(\mathbf{x}) \,d\mathbf{x}} $
\vspace{3mm}\\
where the mixing density $h(\mathbf{x})$ is obtained by marginalizing the pdf derived at Theorem \ref{thm:state} with respect to 
the variables $x(\bold{s}_{1},0),x(\bold{s}_{2},0),\cdots,x(\bold{s}_{n},0)$ and the $((t_{1}-1)n+i,(t_{2}-1)n+j)$ 
th entry of $\mathbf{\Sigma}_{f,\epsilon}$ is given by $c_{f}(x(\bold{s}_i,t_1),x(\bold{s}_j,t_2))+c_{\epsilon}(\bold{s}_i,\bold{s}_j)\delta(t_1-t_2)$ where $1\leq t_{1},t_{2},\leq T$ and $1\leq i,j\leq n$.
\vspace{3mm}\\
(b) In the extreme case, when the process variance $c_{f}(0,0)$ ($=\sigma_{f}^2$) and $c_{g}(0,0)$ ($=\sigma_{g}^2$) 
of each of the Gaussian random function $f(\cdot)$ and $g(\cdot)$ are 0, 
the joint distribution turns into Gaussian.
\end{theorem}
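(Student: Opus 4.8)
The plan is to obtain the law of the observed vector $(Y(\mathbf{s}_1,1),\ldots,Y(\mathbf{s}_n,T))$ by conditioning on the whole latent trajectory and exploiting that, given the states, the observations are just a finite-dimensional projection of the Gaussian process $f$ plus an independent Gaussian noise. Write $\mathbf{X}$ for the $\mathbb{R}^{nT}$-valued vector of latent states $\big(X(\mathbf{s}_i,t)\big)_{1\le i\le n,\,1\le t\le T}$, and note that $Y(\mathbf{s}_i,t)$ involves $f$ only through its values at the (random) points $X(\mathbf{s}_1,1),\ldots,X(\mathbf{s}_n,T)$.

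By Theorem~\ref{thm:measurable} these are genuine random variables, so a regular conditional distribution of the observed vector given $\mathbf{X}$ exists. Since $f$ and $\epsilon$ are independent of $X$ and of each other, conditionally on $\{\mathbf{X}=\mathbf{x}\}$ the vector $\big(f(x(\mathbf{s}_1,1)),\ldots,f(x(\mathbf{s}_n,T))\big)$ is, by the defining property of a Gaussian process evaluated at the \emph{frozen} inputs $x(\mathbf{s}_i,t)$, multivariate normal with mean coordinates $\beta_{0f}+\beta_{1f}x(\mathbf{s}_i,t)$ and covariance coordinates $c_f\big(x(\mathbf{s}_i,t_1),x(\mathbf{s}_j,t_2)\big)$; adding the independent zero-mean Gaussian noise, whose covariance contributes $c_\epsilon(\mathbf{s}_i,\mathbf{s}_j)\delta(t_1-t_2)$, shows that conditionally on $\mathbf{x}$ the observed vector is multivariate normal with the stated affine mean and with covariance matrix $\boldsymbol{\Sigma}_{f,\epsilon}$ having the stated entries. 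This matrix is positive definite for such $\mathbf{x}$ (it is the sum of the covariance matrix built from $c_f$ and the positive definite matrix $\mathbf{I}_{T\times T}\bigotimes\boldsymbol{\Sigma}_\epsilon$, where $\boldsymbol{\Sigma}_\epsilon$ has entries $c_\epsilon(\mathbf{s}_i,\mathbf{s}_j)$), so the conditional density is well defined. I would then integrate this conditional density against the law of $\mathbf{X}$: that law has density $h(\mathbf{x})$, obtained from the joint state density of Theorem~\ref{thm:state} by marginalizing out the time-$0$ states $X(\mathbf{s}_1,0),\ldots,X(\mathbf{s}_n,0)$. An application of the tower property together with Fubini then yields exactly the displayed integral, i.e.\ a continuous mixture of $nT$-variate normals, which proves part~(a).

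For part~(b), first note that a valid covariance kernel with $c_g(0,0)=\sigma_g^2=0$ must vanish identically, since $|c_g(x_1,x_2)|\le\sqrt{c_g(x_1,x_1)\,c_g(x_2,x_2)}=0$ by Cauchy--Schwarz; hence the centered process driving $g$ is degenerate and $g(x)=\beta_{0g}+\beta_{1g}x$ almost surely, and likewise $\sigma_f^2=0$ forces $f(x)=\beta_{0f}+\beta_{1f}x$ a.s. Then $Y(\mathbf{s},t)=\beta_{0f}+\beta_{1f}X(\mathbf{s},t)+\epsilon(\mathbf{s},t)$, where the states obey the linear recursion $X(\mathbf{s},t)=\beta_{0g}+\beta_{1g}X(\mathbf{s},t-1)+\eta(\mathbf{s},t)$ with Gaussian initial field $X(\cdot,0)$ and Gaussian innovations $\eta$; consequently $\big(X(\mathbf{s}_i,t)\big)$ is jointly Gaussian, and the observed vector, being an affine image of the jointly Gaussian family $\big(X(\mathbf{s}_i,t),\epsilon(\mathbf{s}_i,t)\big)$, is jointly normal. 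Equivalently, in the representation of part~(a) the matrix $\boldsymbol{\Sigma}_{f,\epsilon}$ collapses to the constant $\mathbf{I}_{T\times T}\bigotimes\boldsymbol{\Sigma}_\epsilon$, the conditional mean is affine in $\mathbf{x}$, and $h$ becomes Gaussian (by the remark following Theorem~\ref{thm:state}, with $\boldsymbol{\Sigma}=\mathbf{0}$ and $\tilde{\boldsymbol{\Sigma}}=\mathbf{I}_{T\times T}\bigotimes\boldsymbol{\Sigma}_\eta$), so the mixture is a Gaussian convolved with a Gaussian, hence Gaussian.

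The routine parts are the bookkeeping of the covariance entries and the Gaussian-convolution step in~(b). The delicate point is the conditional-Gaussianity claim in~(a): one must invoke Theorem~\ref{thm:measurable} to guarantee that $\big(f(X(\mathbf{s}_i,t))\big)$ is a bona fide random vector so that the regular conditional distribution exists, then use the independence of $f$ from $X$ to identify the conditional law of $\big(f(x(\mathbf{s}_i,t))\big)$ with the finite-dimensional Gaussian law of $f$ at the frozen inputs $x(\mathbf{s}_i,t)$, and finally verify that $\boldsymbol{\Sigma}_{f,\epsilon}$ remains positive definite even on the ($h$-null) set where two of those inputs coincide. Once these measure-theoretic points are in place, the mixture formula follows by Fubini and part~(b) is immediate.
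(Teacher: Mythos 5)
Your proposal is correct and follows essentially the same route as the paper: condition on the latent states to get a multivariate normal with mean $\beta_{0f}+\beta_{1f}x(\mathbf{s}_i,t)$ and covariance $\boldsymbol{\Sigma}_{f,\epsilon}$, integrate against the marginal state density $h(\mathbf{x})$ for part (a), and for part (b) observe that vanishing process variances degenerate $f$ and $g$ to affine maps so that the law becomes a convolution of Gaussians. Your added remarks on measurability, positive definiteness of $\boldsymbol{\Sigma}_{f,\epsilon}$, and the Cauchy--Schwarz justification that $\sigma_g^2=0$ forces $c_g\equiv 0$ are refinements the paper leaves implicit, but they do not change the argument.
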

The implicit hierarchical structure implies that the observed spatio-temporal process is a mixture of Gaussian processes which subsequently implies that GRFDSTM can flexibly accommodate both Gaussian and non-Gaussian distributions for spatio-temporal data. In fact several authors have already used the mixture distribution approach to produce non-Gaussian models for spatial and spatio-temporal data \cite{Fonseca:Steel,Palacios:Steel}.  

\subsection{Dependence and Covariance Structure} As is evident from Theorem \ref{thm:observe}, the observed process $Y(\bold{s},t)$ is non-Gaussian for all but a few special cases. Hence, the covariance function no longer characterizes the dependence structure completely. However, the covariance function, which characterizes the linear dependence structure, still might give us valuable insight about the process, particularly in the case of small departure from linearity ($\sigma_{f}^2$ and $\sigma_{g}^2$ are small) and so it is worthwhile to take a deeper look into that.
However, even before delving into a deeper study of the covariance function and some related issues like nonstationarity 
and nonseparability, a more basic question is whether the process is light-tailed, that is, whether 
or not all the coordinate variables have finite variance. 
From Theorem \ref{thm:observe} we see that the observed variables are distributed as 
a Gaussian mixture and Gaussian mixtures sometimes can give rise to heavy-tailed distributions. 
So, an answer to the above question is not immediately available. In what follows, we show that the process $Y(\bold{s},t)$ is light-tailed and the covariance function is nonstationary and nonseparable.

\begin{theorem}
\label{thm:covariance}
(a) The observed spatio-temporal process $Y(\bold{s},t)$ is light-tailed in the sense that all the coordinate variables have finite variance.\\
(b) The covariance function $c_{y}((\bold{s},t),(\bold{s^*},t^*))$ of the observed spatio-temporal process is
nonstationary and nonseparable for any $s,s^*,t,t^*$.
\end{theorem}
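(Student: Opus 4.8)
The plan is to work from the Gaussian-mixture representation in Theorem \ref{thm:observe}(a). For part (a), I would compute $\mathrm{Var}(Y(\bold{s},t))$ by conditioning on the latent field $\mathbf{x}$ and on the function $f$. Conditionally on $X(\bold{s},t)=x$ and on $f$, $Y(\bold{s},t)$ is Gaussian with mean $f(x)$ and variance $c_\epsilon(\bold{s},\bold{s})$, so by the law of total variance it suffices to bound $\mathrm{Var}(f(X(\bold{s},t)))$. Conditioning further on $X(\bold{s},t)=x$, the quantity $f(x)$ is Gaussian with mean $\beta_{0f}+\beta_{1f}x$ and variance $\sigma_f^2=c_f(0,0)$; hence $\mathrm{Var}(f(X(\bold{s},t)))=\mathbb{E}[\sigma_f^2]+\mathrm{Var}(\beta_{0f}+\beta_{1f}X(\bold{s},t))=\sigma_f^2+\beta_{1f}^2\,\mathrm{Var}(X(\bold{s},t))$ (treating $\beta_{1f}$ as fixed; if it is assigned a prior with finite second moment the same argument goes through with $\mathbb{E}[\beta_{1f}^2]$ and $\mathbb{E}[\beta_{1f}]$). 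Thus everything reduces to showing $\mathrm{Var}(X(\bold{s},t))<\infty$ for every $t$, which I would prove by induction on $t$: $X(\bold{s},0)$ is Gaussian hence has finite variance, and $X(\bold{s},t)=g(X(\bold{s},t-1))+\eta(\bold{s},t)$, so by the identical conditioning argument $\mathrm{Var}(X(\bold{s},t))=\sigma_g^2+\beta_{1g}^2\,\mathrm{Var}(X(\bold{s},t-1))+c_\eta(\bold{s},\bold{s})<\infty$. This closes the induction and gives a finite, explicit bound on $\mathrm{Var}(Y(\bold{s},t))$, proving (a).

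For part (b), I would compute the covariance $c_y((\bold{s},t),(\bold{s}^*,t^*))=\mathrm{Cov}(Y(\bold{s},t),Y(\bold{s}^*,t^*))$ again by iterated conditioning. Conditioning on $(X(\bold{s},t),X(\bold{s}^*,t^*))=(x,x^*)$ and on $f$, the two observed values are independent Gaussians (for $(\bold{s},t)\neq(\bold{s}^*,t^*)$) centered at $f(x)$ and $f(x^*)$; taking expectation over $f$ first gives conditional covariance $c_f(x,x^*)$ plus the mean-cross term $\beta_{1f}^2\,x x^*$ (after centering), and then averaging over the latent field yields
\begin{equation}
c_y((\bold{s},t),(\bold{s}^*,t^*)) = \mathbb{E}\big[c_f(X(\bold{s},t),X(\bold{s}^*,t^*))\big] + \beta_{1f}^2\,\mathrm{Cov}\big(X(\bold{s},t),X(\bold{s}^*,t^*)\big) + c_\epsilon(\bold{s},\bold{s}^*)\,\delta(t-t^*).
\label{eqn:cyexpr}
\end{equation}
The expectation in the first term is over the joint law of $(X(\bold{s},t),X(\bold{s}^*,t^*))$, whose distribution genuinely depends on the actual locations and times (not merely on lags) because the latent recursion composes $g$ repeatedly and the number of compositions is $\min(t,t^*)$; this is the source of both nonstationarity and nonseparability. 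To make the claim rigorous I would show that $\mathrm{Cov}(X(\bold{s},t),X(\bold{s}^*,t^*))$ is itself nonstationary and nonseparable — e.g. derive a recursion for it in $t$ analogous to the variance recursion and exhibit explicit $(\bold{s},t,\bold{s}^*,t^*)$ for which translation-invariance fails and for which it does not factor as (spatial)$\times$(temporal) — and check that the $\mathbb{E}[c_f(\cdot,\cdot)]$ term does not cancel this dependence (generically it reinforces it, since $c_f$ is a nonconstant function of its arguments whenever $\sigma_f^2>0$).

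The main obstacle I anticipate is part (b): unlike ordinary DSTMs, here there is no closed form for the law of $(X(\bold{s},t),X(\bold{s}^*,t^*))$ because of the random nonlinear compositions of $g$, so I cannot simply read off the covariance. The cleanest route is probably not to compute $c_y$ in closed form at all, but to argue by contradiction — assume stationarity (resp. separability), specialize equation \eqref{eqn:cyexpr} at two well-chosen configurations of points with equal lags (resp. at a $2\times 2$ grid in space-time), and derive an identity that forces $\sigma_g^2=0$ or $\sigma_f^2=0$, contradicting the standing assumption that the process variances are positive. A subsidiary technical point is justifying the interchange of expectation and the conditioning steps, but the finite-variance bound from part (a) supplies the integrability needed for Fubini throughout.
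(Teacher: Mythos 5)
Your reduction of part (a) to the finiteness of $\mathrm{Var}(X(\bold{s},t))$, via $\mathrm{Var}(Y(\bold{s},t))=\beta_{1f}^2\mathrm{Var}(X(\bold{s},t))+\sigma_f^2+\sigma_\epsilon^2$, is exactly the paper's first step and is valid, because $f$ is independent of the entire latent process. But your inductive step contains a genuine gap: the identity $\mathrm{Var}(X(\bold{s},t))=\sigma_g^2+\beta_{1g}^2\,\mathrm{Var}(X(\bold{s},t-1))+c_\eta(\bold{s},\bold{s})$ is obtained by "the identical conditioning argument," i.e.\ by asserting that, given $X(\bold{s},t-1)=x$, the value $g(x)$ is $N(\beta_{0g}+\beta_{1g}x,\sigma_g^2)$. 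That requires $g$ to be independent of $X(\bold{s},t-1)$, which is true only for $t=1$. For $t\geq 2$ the conditioning variable $X(\bold{s},t-1)=g(X(\bold{s},t-2))+\eta(\bold{s},t-1)$ is itself a functional of $g$, so conditioning on its value (in fact on the whole trajectory) updates the law of $g$: the relevant conditional law is the Gaussian-process posterior given the noisy evaluations $g(x_{j-1})+\eta=x_j$. Its variance is $\sigma_g^2-\bSigma_{g12}'(\bSigma_{g22}+\sigma_\eta^2\bI)^{-1}\bSigma_{g12}+\sigma_\eta^2$ (not $\sigma_g^2+\sigma_\eta^2$), and, more importantly, its mean is $\beta_{0g}+\beta_{1g}x_{t-1}+\bSigma_{g12}'(\bSigma_{g22}+\sigma_\eta^2\bI)^{-1}\bZ$, which carries an extra random term beyond $\beta_{1g}x_{t-1}$. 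The paper's proof exists precisely to control these two objects: the posterior variance is bounded above by $\sigma_g^2+\sigma_\eta^2$ via nonnegative-definiteness, while the variance of the posterior mean is handled by a spectral-decomposition argument showing the random regression weights are bounded, together with a lemma on linear combinations with bounded random coefficients. Your recursion skips all of this, so the induction does not close as written (the conclusion is still true, but not by your argument).

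For part (b), your covariance decomposition
$c_y((\bold{s},t),(\bold{s}^*,t^*))=E[c_f(X(\bold{s},t),X(\bold{s}^*,t^*))]+c_\epsilon(\bold{s},\bold{s}^*)\delta(t-t^*)+\beta_{1f}^2\mathrm{Cov}(X(\bold{s},t),X(\bold{s}^*,t^*))$
coincides with the paper's, and is legitimate since $f$ is independent of the state process. The paper then argues nonstationarity and nonseparability only heuristically (noting that $|X(\bold{s}+\bold{h},t+k)-X(\bold{s}^*+\bold{h},t^*+k)|\neq|X(\bold{s},t)-X(\bold{s}^*,t^*)|$ almost surely, and that space and time enter $c_f$ jointly through $X$); your proposed contradiction argument forcing $\sigma_g^2=0$ or $\sigma_f^2=0$ is more ambitious than what the paper actually does, but you leave it as a sketch, so it neither fills nor creates a gap relative to the published argument.
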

Although the GRFDSTM yields a nonstationary and nonseparable covariance function, an obvious limitation is that no closed form expression for the covariance function is available. This problem, 
however, exists in many other spatio-temporal models including the convolution models, deformation models, complex DSTM models, etc., where a closed form covariance function is available only in very few special cases. In fact, the direct construction of spatio-temporal covariance function is the only approach that is always guaranteed to yield a closed form covariance function.

Besides, from the prediction point of view, this is not a serious problem since all that we need are the posterior 
predictive distributions at unmonitored locations at arbitrary points of time, which do not require 
closed form expression of the covariance function.

However, we still strive to find some closed form expression and are partially successful in the sense 
that when our model is approximately linear (in some suitable sense to be described later), the covariance function is approximately a geometric function of time lag.

\subsubsection{Approximate Form of the Covariance Function} 
Here, in Theorem \ref{thm:closed_covariance}, we show that if the process variances ($\sigma_{g}^2$) and ($\sigma_{f}^2$) are small, then under a minor assumption, the covariance function $c_{y}((\bold{s},t),(\bold{s^*},t^*))$ is approximately a geometric function of time lag. 
This result is mainly of theoretical interest.

\begin{theorem}
\label{thm:closed_covariance} 
Assume that $|\beta_{1g}|<1$. Then for given $\epsilon^{\prime\prime}>0$ arbitrarily small, 
$\exists~\delta>0$ such that for $0<\sigma_{g}^{2},\sigma_{f}^{2}<\delta$ the covariance between 
$Y(\bold{s},t)$ and $Y(\bold{s^*},t^*)$, denoted by 
$c_{y}((\bold{s},t),(\bold{s^*},t^*))$, is of following form:

\begin{align*}
\beta_{1g}^{|t-t^*|}\left[c_{0}(\bold{s},\bold{s}^*)+\left[\frac{1-\beta_{1g}^{2(t^*+1)}}{1-\beta_{1g}^2}\right]
c_{\eta}(\bold{s},\bold{s}^*)\right]-\epsilon^{\prime\prime}
&\leq c_{y}((\bold{s},t),(\bold{s^*},t^*))\\
&\leq \beta_{1g}^{|t-t^*|}\left[c_{0}(\bold{s},\bold{s}^*)+\left[\frac{1-\beta_{1g}^{2(t^*+1)}}{1-\beta_{1g}^2}\right]
c_{\eta}(\bold{s},\bold{s}^*)\right]+\epsilon^{\prime\prime}.
\end{align*}
\end{theorem}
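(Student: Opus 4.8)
The plan is to expand the covariance $c_y((\bold{s},t),(\bold{s^*},t^*))$ exactly using the hierarchical structure, and then show that as $\sigma_g^2, \sigma_f^2 \to 0$ the expansion converges to the stated geometric expression, uniformly enough that the $\epsilon''$-band follows. First I would condition on the Gaussian processes $f$ and $g$. Writing $Y(\bold{s},t) = f(X(\bold{s},t)) + \epsilon(\bold{s},t)$, and using independence of $\epsilon$ across time and from everything else, the covariance reduces (for $t\neq t^*$, with the $\epsilon$-contribution handled separately when $\bold{s}=\bold{s^*}, t=t^*$) to $\mathrm{Cov}(f(X(\bold{s},t)), f(X(\bold{s^*},t^*)))$. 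Conditioning on $f$ and using its mean function $\beta_{0f}+\beta_{1f}x$, the leading term in small $\sigma_f^2$ is $\beta_{1f}^2\,\mathrm{Cov}(X(\bold{s},t), X(\bold{s^*},t^*))$; the remainder involves the centered part of $f$ and is $O(\sigma_f^2)$ by the covariance kernel $c_f$, provided the relevant moments of $X$ are finite — which is exactly what Theorem~\ref{thm:covariance}(a) guarantees. (Here I note a harmless normalization: for the final clean statement one takes $\beta_{1f}^2$ absorbed or set to $1$, consistent with the linear/DLM reduction discussed around (\ref{eqn:DLM1})--(\ref{eqn:DLM2}); otherwise a factor $\beta_{1f}^2$ rides along and can be swept into $\epsilon''$ or the constants.)

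The core of the argument is then to compute $\mathrm{Cov}(X(\bold{s},t), X(\bold{s^*},t^*))$ in the small-$\sigma_g^2$ regime. From the evolution equation $X(\bold{s},t) = g(X(\bold{s},t-1)) + \eta(\bold{s},t)$, condition on $g$ and write $g(x) = \beta_{0g} + \beta_{1g}x + \tilde g(x)$ where $\tilde g$ is the centered Gaussian process with kernel $c_g$, $\tilde g(0)$ having variance $\sigma_g^2$. When $\sigma_g^2 = 0$ the recursion is exactly linear: $X(\bold{s},t) = \beta_{0g} + \beta_{1g}X(\bold{s},t-1) + \eta(\bold{s},t)$, a standard AR(1)-type recursion in $t$ for each site, driven by spatially correlated innovations $\eta$ with kernel $c_\eta$ and initialized by $X(\cdot,0)$ with kernel $c_0$. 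Unrolling this recursion and using $|\beta_{1g}|<1$, a direct computation gives, for $t^* \le t$,
\begin{align*}
\mathrm{Cov}(X(\bold{s},t), X(\bold{s^*},t^*)) = \beta_{1g}^{|t-t^*|}\left[c_0(\bold{s},\bold{s}^*) + \frac{1-\beta_{1g}^{2(t^*+1)}}{1-\beta_{1g}^2}\,c_\eta(\bold{s},\bold{s}^*)\right],
\end{align*}
which is precisely the bracketed main term. The geometric factor $\beta_{1g}^{|t-t^*|}$ comes from propagating the earlier state forward, and the finite geometric sum $\sum_{j=0}^{t^*}\beta_{1g}^{2j}$ accumulates the innovation variances up to the earlier time.

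The remaining work is the perturbation estimate: one must show that turning on small $\sigma_g^2$ (and $\sigma_f^2$) perturbs this covariance by at most $\epsilon''$. I would do this by induction on $t$: write $X(\bold{s},t)$ with the $\tilde g$ term present, and bound $\mathbb{E}|\tilde g(X(\bold{s},t-1))|^2$ and the cross terms in terms of $\sigma_g^2$ and the (finite, by Theorem~\ref{thm:covariance}(a)) second moments of $X$, using that $\tilde g$ has continuous sample paths and kernel values controlled by $c_g(0,0)=\sigma_g^2$. Propagating these bounds through the recursion with ratio $|\beta_{1g}|<1$ keeps the accumulated error geometrically summable, hence bounded by a constant times $(\sigma_g^2 + \sigma_f^2)$ uniformly over $t,t^*$; choosing $\delta$ small makes this $<\epsilon''$. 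The main obstacle is precisely this uniform-in-time control: the second moments of $X(\bold{s},t)$ themselves depend on $\sigma_g^2$ and one must verify they stay bounded as $\sigma_g^2\to 0$ and grow no worse than geometrically in $t$, so that the error does not blow up with the time horizon before $\delta$ can be chosen. I expect this to require a careful Grönwall-type / geometric-series bookkeeping argument rather than any deep new idea, leaning on the finiteness result of Theorem~\ref{thm:covariance}(a) and the sample-path continuity of $f$ and $g$ already assumed in the model specification.
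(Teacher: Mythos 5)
Your overall strategy coincides with the paper's: decompose $g$ as $\beta_{0g}+\beta_{1g}x+g^{*}(x)$ with $g^{*}$ the centered part, observe that with $g^{*}$ removed the recursion is an exact AR(1)-type linear recursion whose unrolling yields the main term $\beta_{1g}^{|t-t^*|}\bigl[c_{0}(\bold{s},\bold{s}^*)+\frac{1-\beta_{1g}^{2(t^*+1)}}{1-\beta_{1g}^2}c_{\eta}(\bold{s},\bold{s}^*)\bigr]$, and treat every term involving $g^{*}$ as a perturbation whose accumulated contribution is controlled by the geometric series in $|\beta_{1g}|<1$. (Your remark about the stray $\beta_{1f}^2$ factor is fair; the paper's own final step inherits the same issue from the expression for $c_y$ derived in the proof of Theorem \ref{thm:covariance}.) However, there is a genuine gap at the one step you treat as routine: bounding $\mathbb{E}\,|g^{*}(X(\bold{s},t-1))|^2$ and the associated cross-covariances. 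You propose to control these ``in terms of $\sigma_g^2$\,\ldots\,using that $\tilde g$ has\ldots kernel values controlled by $c_g(0,0)=\sigma_g^2$,'' which amounts to plugging the random argument into the pointwise variance of the centered process. That computation is only valid if $X(\bold{s},t-1)$ were independent of $g^{*}$, and for $t-1\geq 1$ it is not: $X(\bold{s},t-1)$ is itself built from iterates of $g$, so one cannot condition on the argument and read off $\sigma_g^2$. This is exactly the difficulty that Lemma \ref{lemma:small_cov} in the paper is designed to overcome: it bounds $\mathrm{Var}(g^{*}(X))\leq \mathbb{E}\sup_{x}|g^{*}(x)|^2$ and then shows that this supremum moment tends to $0$ as $\sigma_g^2\to 0^+$ by combining the Borell--TIS inequality with Dudley's metric entropy bound (both the entropy and the canonical-metric diameter shrink with $\sigma_g^2$, so monotone convergence applies). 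Some such uniform-in-$x$ concentration argument for the centered Gaussian process is indispensable; without it the perturbation estimate does not close.

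Conversely, what you identify as the main obstacle --- uniform-in-time control via a Gr\"onwall-type bookkeeping of second moments --- is the easy part once the single-step lemma is available: the bound $\mathrm{Var}(g^{*}(X(\bold{s},t)))<\epsilon$ is uniform in $\bold{s}$ and $t$, and the accumulated error in the unrolled recursion is simply dominated by $\epsilon\sum_{k\geq 0}|\beta_{1g}|^{k}=\epsilon/(1-|\beta_{1g}|)$ (times a small constant), after which one chooses $\delta$ to make this, together with the $\sigma_f^2$ contribution from $\mathbb{E}[c_f(X(\bold{s},t),X(\bold{s}^*,t^*))]\leq\sigma_f^2$, smaller than $\epsilon''$. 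So your proposal has the correct skeleton but omits the one genuinely nontrivial ingredient of the proof.
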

The assumption $|\beta_{1g}|<1$ restricts the latent process $X(\bold{s},t)$ within the class of nonexplosive spatial AR models. Although this class is fairly large, it misses some interesting spatio-temporal processes like the spatial random walk. Spatial random walk, which is temporally nonstationary, is used extensively in econometric applications.

Note that no such closed form expression is available if we consider the behaviour of 
$c_{y}((\bold{s},t),(\bold{s^*},t^*))$ with respect to increasing spatial lag, i.e., 
$\|\bold{s}-\bold{s}^*\| \rightarrow \infty$. Empirical simulations, however, suggest that  
$c_{y}((\bold{s},t),(\bold{s^*},t^*))$ decays to $0$ as $\|\bold{s}-\bold{s}^*\| \rightarrow \infty$ and the rate of decay depends on the specific form of the covariance kernels used in the GRFDSTM. Specifically, if we use the squared exponential covariance kernels for all the associated Gaussian random functions and processes, then $c_{y}((\bold{s},t),(\bold{s^*},t^*))$ exhibits short range dependence and decays very fast to $0$ as $\|\bold{s}-\bold{s}^*\| \rightarrow \infty$.

\subsection{Sample Path Properties} 
So far, we have discussed the finite dimensional properties of the 
spatio-temporal process $Y(\bold{s},t)$. But finite dimensional properties alone are not 
sufficient to characterize any arbitrary general stochastic process. Two stochastic processes 
with completely different sample path behavior may have identical finite dimensional 
distributions and properties. We demonstrate this through the following simple example (see also \cite{Adler07}):

Let us consider two spatio-temporal processes $Y(\bold{s},t)(\omega)$ and $Y^{*}(\bold{s},t)(\omega)$ 
defined on the same probability space $\Omega=[0,1]^2$ in the following way:
$Y(\bold{s},t)(\omega)=0$ for all $\bold{s},t,\omega$, and
$Y^{*}(\bold{s},t)(\omega)=1$ for all $\bold{s}=\omega$ and $=0$ otherwise. 
Then one can show that for any fixed $t$, $Y(\bold{s},t)(\omega)$ has continuous sample path 
with probability $1$ whereas $Y^{*}(\bold{s},t)(\omega)$ has discontinuous sample path with probability $1$. However, both 
$Y(\bold{s},t)(\omega)$ and $Y^{*}(\bold{s},t)(\omega)$ have exactly the same finite-dimensional distributions.

In the light of the above, we decide to explore the path properties of $Y(\bold{s},t)$. The first part of 
the following theorem states that $Y(\bold{s},t)$ has continuous 
sample paths and the second part says that moreover, under additional 
smoothness assumptions regarding the covariance functions, it will have smooth sample paths.

\begin{theorem}
\label{thm:sample_path}
(a) The spatio-temporal process $Y(\bold{s},t)$ has continuous sample paths.
\vspace{2mm}\\
(b) Assume that the covariance functions $c_{f}(\cdot,\cdot),c_{g}(\cdot,\cdot),
c_{\epsilon}(\cdot,\cdot),c_{\eta}(\cdot,\cdot),c_{0}(\cdot,\cdot)$ satisfy the 
additional smoothness assumption that the centered Gaussian processes with these covariance 
functions have $k$ times differentiable sample paths. Then the non-Gaussian
spatio-temporal process $Y(\bold{s},t)$ also have $k$ times differentiable sample paths. 
\end{theorem}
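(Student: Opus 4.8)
The plan is to reduce the sample-path regularity of $Y(\bold{s},t)$ to that of the constituent Gaussian processes, exploiting the compositional structure of the model together with classical sample-path continuity/differentiability criteria for Gaussian fields. First I would fix $t$ and argue by induction on $t$ that $X(\cdot,t)$ has continuous sample paths almost surely. The base case $t=0$ is immediate since $X(\cdot,0)$ is a spatial Gaussian process whose isotropic covariance $c_0$ is assumed (in the model specification) to produce continuous sample paths. For the inductive step, write $X(\bold{s},t)=g(X(\bold{s},t-1))+\eta(\bold{s},t)$: on the event (of probability one) that $\bold{s}\mapsto X(\bold{s},t-1)$ is continuous and that $g$ is continuous on $\mathbb{R}$ — the latter holding almost surely because $g$ is a Gaussian process on $\mathbb{R}$ with a covariance kernel chosen to give continuous paths — the composition $\bold{s}\mapsto g(X(\bold{s},t-1))$ is continuous, and adding the continuous field $\eta(\cdot,t)$ preserves this. (Here one also uses, as in Theorem \ref{thm:measurable}, that these compositions are genuinely measurable, so the ``event'' statements make sense.) Then $Y(\bold{s},t)=f(X(\bold{s},t))+\epsilon(\bold{s},t)$, and the same composition argument — continuity of $f$ almost surely, continuity of $X(\cdot,t)$ from the induction, continuity of $\epsilon(\cdot,t)$ — gives continuity of $\bold{s}\mapsto Y(\bold{s},t)$. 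Joint continuity in $(\bold{s},t)$ over $\mathbb{R}^2\times\mathbb{Z}^+$ then follows since $\mathbb{Z}^+$ is discrete: continuity in $\bold{s}$ for each of the countably many values of $t$, intersected over $t$, is a probability-one event.

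For part (b) I would run the identical induction, but now tracking $k$-fold differentiability instead of mere continuity. The key extra ingredient is the chain rule: if $u\mapsto g(u)$ is $C^k$ on $\mathbb{R}$ and $\bold{s}\mapsto X(\bold{s},t-1)$ is $C^k$ on $\mathbb{R}^2$, then $\bold{s}\mapsto g(X(\bold{s},t-1))$ is $C^k$, with the derivatives up to order $k$ expressible through the (multivariate) Fa\`a di Bruno formula in terms of derivatives of $g$ evaluated at $X(\bold{s},t-1)$ and derivatives of $X(\cdot,t-1)$. The hypothesis is precisely that each of $c_f,c_g,c_\epsilon,c_\eta,c_0$ is smooth enough that a centered Gaussian process with that covariance has almost surely $C^k$ paths, so $X(\cdot,0)$, $g$, $f$, $\eta(\cdot,t)$, $\epsilon(\cdot,t)$ are all almost surely $C^k$ on their respective domains; intersecting these countably many probability-one events and propagating through the recursion yields $Y(\cdot,t)\in C^k$ almost surely, and again the discreteness of the time index upgrades this to a statement about the whole process.

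The step I expect to be the main obstacle is the rigorous handling of the composition $g(X(\cdot,t-1))$: one must be careful that the almost-sure smoothness of $g$ as a function on $\mathbb{R}$ is being used at the (random) set of points actually visited by $X(\cdot,t-1)$. On a compact spatial window $X(\cdot,t-1)$ is almost surely bounded (being continuous), so it suffices that $g$ is $C^k$ on each compact subinterval of $\mathbb{R}$ — which is what ``$C^k$ sample paths'' delivers — but making this uniform-over-the-range argument clean, and ensuring the exceptional null sets (over the choice of compact window, which can be exhausted by a countable family) are handled correctly, is the delicate bookkeeping. A secondary point worth stating carefully is that $f$ and the noise processes are independent of $X$, so conditioning on a realization of $X(\cdot,t-1)$ leaves $f,g,\eta,\epsilon$ with their unconditional (continuous / $C^k$) law; this independence, already assumed in the model, is what lets the composition argument go through without circularity.
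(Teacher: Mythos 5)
Your proposal is correct and follows essentially the same route as the paper: restrict to the probability-one event on which $X(\cdot,0)$, $g$, $f$, $\eta(\cdot,t)$, $\epsilon(\cdot,t)$ all have continuous (resp. $C^k$) sample paths, then propagate regularity through the recursion $X(\bold{s},t)=g(X(\bold{s},t-1))+\eta(\bold{s},t)$ and $Y(\bold{s},t)=f(X(\bold{s},t))+\epsilon(\bold{s},t)$ by closure of continuity (resp.\ $k$-fold differentiability, via the chain-rule/Fa\`a di Bruno composition lemma the paper proves by induction on $k$) under composition and addition. Your extra bookkeeping about countable intersections of null sets and the range of $X(\cdot,t-1)$ only makes explicit what the paper leaves implicit.
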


So, the spatial surface generated by the GRFDSTM at any time point $t$ is continuous and unless the spatial surface interpolating the data points is extremely jagged indicating multiple points of discontinuity, any spatio-temporal data can be modeled reasonably adequately by the GRFDSTM.
A stronger statement, however, is made in the second part of the theorem. It says that if 
the covariance functions $c_{f}(\cdot,\cdot),c_{g}(\cdot,\cdot),c_{\epsilon}(\cdot,\cdot),
c_{\eta}(\cdot,\cdot),c_{0}(\cdot,\cdot)$ are sufficiently smooth then the sample paths 
of the process $Y(\bold{s},t)$ are also smooth and their degrees of smoothness depend on the degree of smoothness of the covariance functions. Immediate to the above theorem we have the following corollary associated with two very popular classes of covariance functions.

\begin{corollary}
 \label{corr:sample_path}
 (a) If all the covariance kernels associated with GRFDSTM, belong to the Mat\'{e}rn family whose smoothness parameter 
 is $\nu$, then $Y(\bold{s},t)$ will have  $\lceil \nu-1 \rceil$ many time differentiable sample paths.
Here, for any $x$, $\lceil x\rceil$ denotes the smallest integer greater than or equal to $x$. 
 \vspace{2mm}\\
 (b) If all the covariance kernels associated with GRFDSTM, are chosen to be squared exponential, 
 then $Y(\bold{s},t)$ will have infinitely many times differentiable sample paths. 
\end{corollary}

It is a well known fact that the sample paths of a centered Gaussian process, whose covariance function is Mat\'{e}rn with smoothness parameter $\nu$, are $\lceil \nu-1 \rceil$ times differentiable (see page 23 of \cite{Gelfand:Diggle:Fuentes:Guttorp}). 

Hence, part (a) of Corollary \ref{corr:sample_path} follows. Part (b) follows from the facts that squared exponential covariance function is essentially Mat\'{e}rn with smoothness parameter $\nu \rightarrow \infty$ and sample paths of a centered Gaussian process with squared exponential covariance function are infinitely many times differentiable. Hence, when we have strong evidence from the data or prior knowledge that the data generating spatial surface is neither too rough nor too smooth, and also have some idea regarding the degree of smoothness of the spatio-temporal process, then we may choose all the covariance functions $c_{f}(\cdot,\cdot),c_{g}(\cdot,\cdot),
c_{\epsilon}(\cdot,\cdot),c_{\eta}(\cdot,\cdot),
c_{0}(\cdot,\cdot)$ from the Mat\'{e}rn family with some appropriate specific value of $\nu$.

\section{Identifiability Issues and Sharpening the Model Description}
\label{section:Identifiability}
The GRFDSTM is more like an umbrella term used for a general modeling strategy, rather than a single model. Hence, a deeper investigation of the identifiability issue requires more specific model description. First recall that, associated with the mean function of the GRFs $f(.)$ and $g(.)$, we have four parameters $\beta_{0g},\beta_{1g},\beta_{0f},\beta_{1f}$. The vector parameter ${\boldsymbol{\mu}}_{0}$ is associated with the initial state process. The dependence structure is specified through some isotropic covariance kernels. Note that, till now we have not specified any particular form for the isotropic covariance kernel. However, to address identifiability we need to fix the covariance functions. Although any reasonable isotropic covariance kernel that satisfies the mild regularity conditions mentioned in Section \ref{sec:our_proposal} can be used in the GRFDSTM, for the sake of simplicity 
we consider the squared exponential covariance kernel with the representation 
$c(\bold{u},\bold{v})=\sigma^2 e^{-\lambda||\bold{u}-\bold{v}||^2}$. Associated with five covariance 
kernels  $c_{f}(\cdot,\cdot),c_{g}(\cdot,\cdot),c_{\epsilon}(\cdot,\cdot),c_{\eta}(\cdot,\cdot),
c_{0}(\cdot,\cdot)$ we have five scale parameters $\sigma_{f}^2,\sigma_{g}^2,
\sigma_{\epsilon}^2,\sigma_{\eta}^2,\sigma_{0}^2$ and five smoothness parameters $\lambda_{f},\lambda_{g},
\lambda_{\epsilon},\lambda_{\eta},\lambda_{0}$. Among them we fix the values of $\lambda_{f},\lambda_{g}$. Why would we choose to fix the value of $\lambda_{f},\lambda_{g}$ keeping $\lambda_{\epsilon},\lambda_{\eta}$ free to vary? The reason behind that is, although all four of them are smoothness parameters associated with squared exponential covariance kernels, they play entirely different roles in the model. Note that $\lambda_{\epsilon},\lambda_{\eta}$ determine the spatial variation of $Y(\bold{s},t)$ and its effective range. 
On the other hand, $\lambda_{f},\lambda_{g}$ are needed essentially for specifying Gaussian processes that are supported on 
substantially large classes of continuous functions, and fixing the values of $\lambda_{f},\lambda_{g}$ doesn't 
restrict the scope of that. More importantly, fixing their values help us get rid of 
certain identifiability problems. To illustrate, let us consider the following proposition:

\begin{proposition}
\label{propn:identifiability}
 Consider the following two sets of parameter values for GRFDSTM : 
 \begin{center}
 $\boldsymbol{\theta}_{1}=[\beta_{0f},\beta_{1f},\beta_{0g},\beta_{1g},{\boldsymbol{\mu}}_{0},\sigma_{f}^2,\sigma_{g}^2,\sigma_{\epsilon}^2,\sigma_{\eta}^2,\sigma_{0}^2,\lambda_{f},\lambda_{g},
\lambda_{\epsilon},\lambda_{\eta},\lambda_{0}]$\hspace{2cm}\mbox{and}
\end{center}
\begin{center}
$\boldsymbol{\theta}_{2}=[\beta_{0f},\frac{\beta_{1f}}{c},c\beta_{0g},\beta_{1g},{c\boldsymbol{\mu}}_{0},\sigma_{f}^2,c^2\sigma_{g}^2,\sigma_{\epsilon}^2,c^2\sigma_{\eta}^2,c^2\sigma_{0}^2,\frac{\lambda_{f}}{c^2},\frac{\lambda_{g}}{c^2},\lambda_{\epsilon},\lambda_{\eta},\lambda_{0}]$ \ \ for any $c\neq 0$.
\end{center}
\begin{center}
Then $[Y(\bold{s}_{1},1),\cdots, Y(\bold{s}_{n},T)|\boldsymbol{\theta}_{1}] \stackrel{d}{=} [Y(\bold{s}_{1},1),\cdots, Y(\bold{s}_{n},T)|\boldsymbol{\theta}_{2}]$.
\end{center}
\end{proposition}

An easy way to break off the above identifiability problem is to fix the values of $\lambda_{f},\lambda_{g}$. We also fix the values of $\sigma_{0}^2$ and $\lambda_{0}$. The reason is that as $t$ gets larger, 
the effects of $\sigma_{0}^2$ and $\lambda_{0}$ fade away, making the data $Y(\bold{s},t)$ 
much less informative about $\sigma_{0}^2$ and $\lambda_{0}$ compared to the parameters like 
$\sigma_{\epsilon}^2,\sigma_{\eta}^2$, etc. Unfortunately, identifiability can not be mathematically established under these mild restrictions on the parameter space. Now we state some stronger restrictions on the parameter space that will be sufficient for ensuring identifiability.
\begin{theorem}
\label{thm:mathematical_identifiability}
 Consider the following restrictions on the parameter space : \\ 
 (A) Suppose $\beta_{1f},\beta_{1g},\sigma_{f}^2,\sigma_{g}^2,\sigma_{\epsilon}^2,\sigma_{\eta}^2,\sigma_{0}^2 \neq 0 $ and $\lambda_{0} \neq \lambda_{\epsilon}$.\\
 (B) Moreover, we need to fix values of some parameters: assume $\lambda_{f},\lambda_{g},\sigma_{f}^2,\sigma_{g}^2,\lambda_{0},\sigma_{0}^2,\lambda_{\eta},\sigma_{\eta}^2,\beta_{0f},\beta_{1f},{\boldsymbol{\mu}}_{0}$ are fixed.\\
 (C) Assumption on the spatio-temporal sampling design : \\
 Assume $n \geq 3$ and $T\geq 1$ and existence of at least three distinct values $d_{1},d_{2},d_{3}$ of the sampling interpoint distances; i.e. $\exists \ \bold{s}_{i_{1}},\bold{s}_{j_{1}},\bold{s}_{i_{2}},\bold{s}_{j_{2}},\bold{s}_{i_{3}},\bold{s}_{j_{3}}$ such that $d_{1}=||\bold{s}_{i_{1}}-\bold{s}_{j_{1}}||$, $d_{2}=||\bold{s}_{i_{2}}-\bold{s}_{j_{2}}||$ and $d_{3}=||\bold{s}_{i_{3}}-\bold{s}_{j_{3}}||$. \\
 Also, assume $\exists \ \bold{s}_{i}$ and $\bold{s}_{j}$ such that $\mu_{0}(\bold{s}_{i}) \neq \mu_{0}(\bold{s}_{j})$.

\begin{center}
Then the remaining parameters i.e. $\beta_{0g},\beta_{1g},\lambda_{\epsilon},\sigma_{\epsilon}^2$ are jointly identifiable.
\end{center}
\end{theorem}
Although, the restrictions stated above would considerably narrow down the flexibility of the GRFDSTM, not everything is lost. The parameters $\lambda_{\epsilon},\sigma_{\epsilon}^2$ still induce a flexible spatial structure in the model and the parameters $\beta_{0g},\beta_{1g}$ flexibly control the mean behaviour of the temporal dynamics. However, going beyond theory, we would stick to the GRFDSTM with milder restrictions (i.e. $\sigma_{0}^2,\lambda_{0},\lambda_{f},\lambda_{g}$ are fixed) for the modeling purpose. In fact, Theorem \ref{thm:mathematical_identifiability} is just a sufficient condition for identifiability. We believe that to ensure identifiability, we don't need so many restrictions on the parameter space and assuming $\sigma_{0}^2,\lambda_{0},\lambda_{f},\lambda_{g}$ to be fixed, is enough. This is also evident from the simulation studies and real data analysis where under the assumption that $\sigma_{0}^2,\lambda_{0},\lambda_{f},\lambda_{g}$ are fixed, the posteriors of all the remaining 
parameters are unimodal, essentially indicating identifiability. Moreover,
 the priors used are also beneficial in this regard as they ensure that potentially undesirable region of the parameter space like $\sigma_{\epsilon}^2,\sigma_{\eta}^2 = 0$, etc. get very less probability so that the posterior probability of those regions of parameter space (i.e. regions like $\sigma_{\epsilon}^2=0,\sigma_{\eta}^2 = 0$, etc.) $\approx 0$. Unfortunately, mathematically showing it requires dealing with highly complex system of nonlinear equations and we fall short of a rigorous proof.

\section{Prior Specification, Model Fitting and Prediction}
\label{section:Prior_specification_fitting}
In this section, we describe how the GRFDSTM can be fitted using the Bayesian approach and can be used to make predictions at new spatio-temporal locations. We shall be using the GRFDSTM with mild restrictions as described in Section \ref{section:Identifiability}, i.e. all the covariance kernels are squared exponential type and the parameters $\sigma_{0}^2,\lambda_{0},\lambda_{f},\lambda_{g}$ are fixed. Firstly, let us specify the prior structure. Unlike LDSTMs, where Gaussian-inverse gamma (or multivariate Gaussian-inverse Wishart in the case of multivariate LDSTMs) is used as the conjugate prior, a more complex structure of GRFDSTM leaves us with no hope of conjugacy. Moreover, during posterior inference, updating a high-dimensional state vector is required, and so sufficiently informative priors are needed to ensure the convergence of MCMC within feasible time.

We consider bivariate vague Gaussian priors for each of 
$(\beta_{0g},\beta_{1g})$ and $(\beta_{0f},\beta_{1f})$. The variance covariance matrix associated with the vague Gaussian priors is diagonal, with very large values (of the order $1000$) of the marginal variances which make the priors virtually non-informative. For the rest of the scale and smoothness parameters, we consider the 
lognormal prior. Also, we have taken the $N(\boldsymbol{0},\boldsymbol{\Sigma_{\mu}})$ prior for the vector parameter 
${\boldsymbol{\mu}}_{0}$, where $\boldsymbol{\Sigma_{\mu}}$ is specified by an isotropic covariance function with 
fixed parameter values. All the priors considered above are mutually independent.

Note that, although we consider lognormal priors for the scale and smoothness parameters, inverse gamma priors would also 
serve the purpose. However, lognormal distributions, which have much lighter tails compared to inverse gamma distributions, 
whose tails exhibit power law decay, provide the additional safeguard in posterior computation, in the sense that the 
corresponding MCMC algorithm doesn't travel too widely through the parameter space making the convergence time too large.

With the prior specification as above and the conditional densities 
$[\mathbf{y}|\mathbf{x},\boldsymbol{\theta}]$ and $[\mathbf{x}|\boldsymbol{\theta}]$ being explicitly available 
($\boldsymbol{\theta}$ denotes the vector consisting of all the parameters) we design a Gibbs sampler with 
Gaussian full conditionals for ${\boldsymbol{\mu}}_{0}$, $(\beta_{0g},\beta_{1g})$ and $(\beta_{0f},\beta_{1f})$ and 
update the scale parameters $\sigma_{f}^2,\sigma_{g}^2,\sigma_{\epsilon}^2,\sigma_{\eta}^2$, the smoothness parameters $\lambda_{\epsilon},\lambda_{\eta}$ and the state vector $\mathbf{x}$ together, using Transformation based Markov Chain Monte
Carlo (TMCMC) introduced by \cite{Dutta:Bhattacharya}. 
In particular, we use the additive transformation which has been shown by \cite{Dutta:Bhattacharya} to require less number of ``moves types" compared to 
other valid transformations.

The idea of TMCMC is very simple, yet a very powerful one. 
Here we briefly illustrate the idea of additive TMCMC by contrasting it with the traditional Random Walk Metropolis (RWM)
approach, assuming that we wish to update all the variables simultaneously.
Suppose that we want to simulate from the $k$-variate distribution $f_{\mathbf{U}}(\mathbf{u})$ using the RWM approach where $\mathbf{U}$ is a high dimensional vector and $f_{\mathbf{U}}(\mathbf{u})$ is the corresponding probability density or the mass function. 
Then we have to simulate $k$ independent Gaussian random variables 
$\boldsymbol{\epsilon}_{k}=(\epsilon_{1},\epsilon_{2},\cdots,\epsilon_{k})'$; assuming that the current state of the 
Markov chain is $\mathbf{u}^{(i)}$, we accept the new state $\mathbf{u}^{(i)}+\boldsymbol{\epsilon}_{k}$ 
with probability $\min\left\{1,\frac{f_{\mathbf{U}}(\mathbf{u}^{(i)}+
\boldsymbol{\epsilon}_{k})}{f_{\mathbf{U}}(\mathbf{u}^{(i)})}\right\}$. 
However, if $k$ is large then this acceptance probability will tend to be extremely small.
Hence the RWM chain sticks to a particular state for very long time, and therefore the convergence to the distribution (posterior in our case) $f_{\mathbf{U}}(\mathbf{u})$ is very slow. What additive TMCMC does is simulate only one $\epsilon>0$ 
from some arbitrary distribution left-truncated at zero, and then form the 
$k$ dimensional vector $\boldsymbol{\epsilon^*}_{k}$ setting the $l$-th element independently to $-\epsilon$ with probability 
$p_l$ and $+\epsilon$ with probability $1-p_l$. For our applications we set $p_l=1/2$ $\forall$ $l=1,2,\ldots,k$. 
Then we accept the new state 
$\mathbf{u}^{(i)}+\boldsymbol{\epsilon^*}_{k}$ 
with acceptance probability $\min\left\{1,\frac{f_{\mathbf{U}}(\mathbf{u}^{(i)}+
\boldsymbol{\epsilon^*}_{k})}{f_{\mathbf{U}}(\mathbf{u}^{(i)})}\right\}$. \cite{Dutta:Bhattacharya,Dey:Bhattacharya2016a,Dey:Bhattacharya2016b} provide details of many advantages of TMCMC (in particular, additive TMCMC)
as compared to traditional MCMC (in particular, RWM). 
However, in our setup, we have used a block TMCMC approach where separate independent $\epsilon$'s are used for each of 
the parameters $\sigma_{f}^2,\sigma_{g}^2,\sigma_{\epsilon}^2,\sigma_{\eta}^2$, $\lambda_{\epsilon},\lambda_{\eta}$, 
and each block of state vector corresponding to a particular time point $t$. This improves mixing significantly over an 
ordinary TMCMC. In our model, the dimension of the state vector is large and updating it using usual RWM would have 
been very inefficient. Block TMCMC saved us from that pitfall.

Using the above sampling-based approach it is straightforward to study the posterior distribution of the unknown quantities and 
make inferences regarding the parameters. But our main goal is to predict $y(\bold{s}^*,t^*)$ at some 
new spatio-temporal coordinate $(\bold{s}^*,t^*)$, and all summaries regarding the prediction is given by 
\begin{center}
 $[y(\bold{s}^*,t^*)|\bold{y}]=\mathlarger{\int}[y(\bold{s}^*,t^*)|\boldsymbol{\theta},\bold{y}][\boldsymbol{\theta}|\bold{y}]d\boldsymbol{\theta}$.
\end{center}
Now, to simulate from the posterior predictive distribution $[y(\bold{s}^*,t^*)|\bold{y}]$, it is enough to first simulate from $[\mathbf{x},x(\bold{s}^*,t^*),\boldsymbol{\theta}|\bold{y}]$ and then simulate from $[y(\bold{s}^*,t^*)|\mathbf{x},x(\bold{s}^*,t^*),\boldsymbol{\theta},\bold{y}]$. Now see that simulation from $[\mathbf{x},x(\bold{s}^*,t^*),\boldsymbol{\theta}|\bold{y}]$ is exactly similar to posterior simulation from $[\mathbf{x},\boldsymbol{\theta}|\bold{y}]$. The reason is that if we augment $x(\bold{s}^*,t^*)$ with the $1\times nT$ state vector
$\mathbf{x}=\left(x(\bold{s}_{1},1),\cdots,x(\bold{s}_{n},T)\right)$ and consider the conditional distribution of 
$[\mathbf{y}|\mathbf{x},x(\bold{s}^*,t^*),\boldsymbol{\theta}]$ then it is same as 
$[\mathbf{y}|\mathbf{x},\boldsymbol{\theta}]$. Once the post burn-in posterior samples\\ 
$\left\{(\mathbf{x}^{(B)},x^{(B)}(\bold{s}^*,t^*),\boldsymbol{\theta}^{(B)}),(\mathbf{x}^{(B+1)}x^{(B+1)}(\bold{s}^*,t^*),\boldsymbol{\theta}^{(B+1)}),
\cdots\right\}$ from $[\mathbf{x},x(\bold{s}^*,t^*),\boldsymbol{\theta}|\bold{y}]$ are available, it is then enough to simulate from $[y(\bold{s}^*,t^*)|\mathbf{x},x(\bold{s}^*,t^*),\boldsymbol{\theta},\bold{y}]$ plugging in them.\\
So, let us look into the form of the conditional distribution $[y(\bold{s}^*,t^*)|\mathbf{x},x(\bold{s}^*,t^*),\boldsymbol{\theta},\bold{y}]$. It is easy to see that $[y(\bold{s}^*,t^*)|\mathbf{x},x(\bold{s}^*,t^*),\boldsymbol{\theta},\bold{y}]$ is $N\left(\beta_{0f}+\beta_{1f}x(\bold{s}^*,t^*)+{\boldsymbol{\Sigma}}_{12}\left({\boldsymbol{\Sigma}}_{22}\right)^{-1}\boldsymbol{V},(\sigma_{f})^2+(\sigma_{\epsilon})^{2}-{\boldsymbol{\Sigma}}_{12}\left({\boldsymbol{\Sigma}}_{22}\right)^{-1}{\boldsymbol{\Sigma}}_{21}\right)$ where ${\boldsymbol{\Sigma}}_{12}$ is a vector of covariance values $c_{f}(x(\bold{s}^*,t^*),\mathbf{x})$, ${\boldsymbol{\Sigma}}_{22}$ is the variance covariance matrix $c_{f}(\mathbf{x},\mathbf{x})$ and $\boldsymbol{V}$ is a vector of values $ y(\bold{s}_{i},t)-\beta_{0f}-\beta_{1f}x(\bold{s}_{i},t)$, where $ i=1,\cdots,n $ and $ t=1,\cdots,T $. Hence, if we simulate $\left\{y^{(B)}(\bold{s}^*,t^*),\\y^{(B+1)}(\bold{s}^*,t^*),\cdots\right\}$ from \\
$N\left(\beta_{0f}^{(j)}+\beta_{1f}^{(j)}x^{(j)}(\bold{s}^*,t^*)+{\boldsymbol{\Sigma}}^{(j)}_{12}\left({\boldsymbol{\Sigma}}^{(j)}_{22}\right)^{-1}\boldsymbol{V}^{(j)},(\sigma_{f}^{(j)})^2+(\sigma_{\epsilon}^{(j)})^{2}-{\boldsymbol{\Sigma}}^{(j)}_{12}\left({\boldsymbol{\Sigma}}^{(j)}_{22}\right)^{-1}{\boldsymbol{\Sigma}}^{(j)}_{21}\right)$ 
where $j=B,B+1,\cdots$ and $\{\beta_{0f}^{(j)},\beta_{1f}^{(j)},\sigma_{f}^{(j)},\sigma_{\epsilon}^{(j)}\}$ 
are post burn-in posterior samples for the respective parameters, then that would give samples from the conditional distribution $[y(\bold{s}^*,t^*)|\mathbf{x},x(\bold{s}^*,t^*),\boldsymbol{\theta},\bold{y}]$. So, ultimately these \\ 
$y^{(B)}(\bold{s}^*,t^*),y^{(B+1)}(\bold{s}^*,t^*),\cdots$ are samples from the posterior predictive distribution $[y(\bold{s}^*,t^*)|\mathbf{y}]$ which are then used to calculate various summaries related to the prediction at the new spatio-temporal coordinate $(\bold{s}^*,t^*)$.

\section{Simulation Study and Real Data Analysis}

\subsection{Simulation from LDSTMs} With the model fitting and prediction method sketched in the previous section now we apply the GRFDSTM to simulated and real datasets. We consider simulated datasets generated by LDSTMs and investigate the predictive performance of the GRFDSTM on them. In fact, we consider four different datasets simulated from four different LDSTMs, ranging from a very simple model to spatio-temporally a more structured one.

I. \emph{Spatio-temporal white noise:} We consider a unit square on $\mathbb{R}^2$ and randomly generate $50$ spatial locations, where we simulate the data $Y(\bold{s},t)$ for $t=1,2,\cdots,20$, using the following spatio-temporal white noise model:
\begin{align*}
 Y(\bold{s},t)\stackrel{i.i.d}{\sim} N(0,1)\ \ \  \text{for all}\  \bold{s} \ \text{and all}\  t.
\end{align*}
This is the simplest spatio-temporal model and independent with respct to both space and time. Its analysis doesn't 
require a spatio-temporal model, but it is of interest to see how the GRFDSTM performs in this case. The jagged 
spatial surfaces obtained by spatial interpolation of the dataset, however, suggest that the GRFDSTM may not be an 
appropriate model. Independent of the data generation, we randomly generate $10$ more spatial locations, 
where we simulate $Y(\bold{s},t)$ for $t=1,2,\cdots,20$, using the same model and set aside the sample as test data. 
We compare the performance of the GRFDSTM with two different LDSTMs, the univariate LDSTM without covariate, 
proposed by \cite{Banerjee:Gamerman:Gelfand} (let us give it a name, say BGG model following the authors' surnames) 
and a variation of that (modified-BGG model). In fact, there are a number of LDSTMs in the literature, 
which can be considered for the comparative study, however, given the limitation of space and time, we consider 
only one among them. 
The LDSTM proposed by \cite{Banerjee:Gamerman:Gelfand} is very general in the sense that it accommodates space 
varying state vector; if needed can incorporate covariate information, and also admits a straightforward extension to 
multivariate spatio-temporal data. However, \cite{Banerjee:Gamerman:Gelfand} considered inverse gamma priors for the 
scale parameters and gamma prior for the smoothness parameter in the BGG model, which is entirely different from the 
lognormal prior structure used for scale and smoothness parameters of the GRFDSTM. A more comparable model is the 
modified-BGG model, for which the model specification is exactly same as the BGG model, but the prior structure is 
composed of Gaussian priors for the location parameters and lognormal priors for the scale and smoothness parameters. 
Hence, we also included it in the simulation study.

II. \emph{Temporally iid spatial process:} In this case, the basic design of the simulation remains same as the earlier one, but now we simulate from a model that has some dependence structure: 
\begin{align*}
 (Y(\bold{s}_{1},t),Y(\bold{s}_{2},t),\cdots,Y(\bold{s}_{n},t))^{\prime} \stackrel{i.i.d}{\sim} N(\bold{0},\boldsymbol{\Sigma})\ \ \  \text{for all}\ \ \  t.
\end{align*} Here, $\boldsymbol{\Sigma}$ is a spatial variance-covariance matrix induced by an exponential covariance function. 
However, the story in this case is purely spatial and no temporal dependence structure is assumed. In this case, we simulate the data $Y(\bold{s},t)$ at $60$ spatial locations for $t=1,2,\cdots,20$ and reserve the data associated with $10$ locations as the test dataset, which is used later to judge the goodness of fit.

III. \emph{Spatial random walk:} Spatial random walk models are used extensively in econometric applications (see discussions on page 260 of \cite{Banerjee04}). Here $Y(\bold{s}_{i},t)$ is distributed according to a Gaussian random walk and for two distinct spatial locations $\bold{s}_{i}$ and $\bold{s}_{j}$, $Y(\bold{s}_{i},t)$ and $Y(\bold{s}_{j},t)$ are mutually independent. 
Specifially,
\begin{align*}
  Y(\bold{s}_{i},t)&= Y(\bold{s}_{i},t-1)+\epsilon_{it} \ \ \  \text{for all}\ \ \  \bold{s}_{i};\\
  Y(\bold{s}_{i},0)& \sim N(0,1)  \ \ \ ; \  \epsilon_{it} \stackrel{iid}{\sim} N(0,1). 
\end{align*}

IV. \emph{Linear dynamic spatio-temporal model:} Finally, we simulate from a more structured linear dynamic spatio-temporal model. Unlike any of the simulation schemes I - III, here $Y(\bold{s},t)$ exhibits both spatial and temporal dependence. The spatial surfaces are smoother than those obtained in scheme I and III, owing to the spatial dependence structure. The model 
has the following form: 
\begin{align*}
     Y(\bold{s}_{i},t)&=X(\bold{s}_{i},t)+\epsilon(\bold{s}_{i},t);\\
     X(\bold{s}_{i},t)&=\rho X(\bold{s}_{i},t-1)+\eta(\bold{s}_{i},t);\\
     \{X(\bold{s}_{i},0)\}_{i=1}^n & \sim N(\bold{0},\boldsymbol{\Sigma_{0}}),
\end{align*} 
where $\{\epsilon(\bold{s}_{i},t)\}_{i=1}^n$ and $\{\eta(\bold{s}_{i},t)\}_{i=1}^n$ are temporally independent and identically distributed as $N(\bold{0},\boldsymbol{\Sigma_{\epsilon}})$ and $N(\bold{0},\boldsymbol{\Sigma_{\eta}})$, respectively. The associated variance-covariance matrices $\boldsymbol{\Sigma_{0}},\boldsymbol{\Sigma_{\epsilon}}$ and $\boldsymbol{\Sigma_{\eta}}$ are generated by exponential covariance functions of the form $c(\bold{u},\bold{v})=\sigma^2 e^{-\lambda||\bold{u}-\bold{v}||}$. We chose $\sigma_{0}^2,\sigma_{\epsilon}^2,\sigma_{\eta}^2=1$ and $\lambda_{0}=1,\lambda_{\eta}=1$ and $\lambda_{\epsilon}=0.25$. The values of the smoothness parameters are chosen carefully so that there is enough spatial dependence between the points, 
which lie within the unit square. Larger values of the smoothness parameters imply small effective range, 
rendering $Y(\bold{s},t)$ to show little spatial dependence; very small values of the smoothness parameters would imply 
almost deterministic behaviour. 
Judiciously chosen values of the 
smoothness 
parameters ensure an interesting spatial story. Finally, the value of the auto-regression coefficient $\rho$ was set to $0.8$ so that besides spatial dependence, the data exhibits strong temporal dependence as well. Note that in each of the four simulations the same spatio-temporal grid size is used to generate the training dataset ($50 \times 20$) and as well as the test dataset ($10 \times 20$). Also, the data $Y(\bold{s},t)$ consists of a single replication over space-time.

Before delving into a deeper discussion regarding the fitting of the simulated datasets, let us briefly consider the pictures presented in Figure \ref{fig:subfigures1}. The panels display spatial surfaces and time plots comprising datasets generated by simulation schemes I-IV. The plots in the 1st (leftmost) column show the spatial surfaces obtained by interpolation of the data generated by simulation schemes I to IV for a particular time slice $t$. The plots in the 2nd column show similar spatial surface plots but for a different time slice $t^{\prime}$. The plots in the 3rd (rightmost) column show the time plots obtained by interpolation of the data generated by simulation schemes I to IV for a particular spatial location $\bold{s}_{i}$. For example, panels (a), (b), (c) show the respective spatial surfaces and time plot from simulation scheme I (spatio-temporal white noise). Panels (d), (e), (f), panels (g), (h), (i) and panels (j), (k), (l) show similar spatial surfaces and time plots for simulation 
schemes  II, III and IV, 
respectively. Note that, the spatial surfaces generated by scheme I (panels (a), (b)) and scheme III (panels (g), (h)) are very jagged and the ones generated by scheme IV (panels (j), (k)) are the smoothest. This is natural as only scheme IV exhibits dependence with respect to 
both space and time, causing smoothest sample realizations. Regarding the time plots, schemes I (panel (c)) and II (panel (f)) generate very 
wiggly curves that are strongly indicative of temporal independence, whereas the curve generated by scheme IV (panel (l)) shows 
smoother behaviour. Scheme III (panel (i)) generates almost a monotonic curve, which hints towards a random walk component.

\begin{figure}[H]
     \begin{center}
        \subfigure[]{%
            \label{fig:sim1t5}
            \includegraphics[width=0.31\textwidth,height=0.16\textheight]{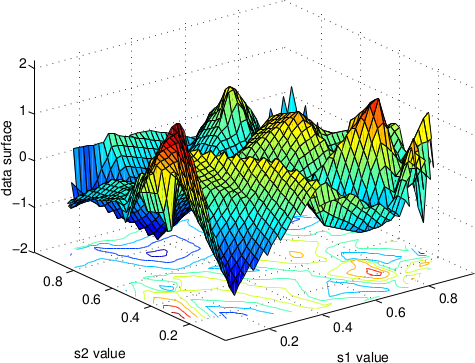}
        }%
        \subfigure[]{%
           \label{fig:sim1t9}
           \includegraphics[width=0.31\textwidth,height=0.16\textheight]{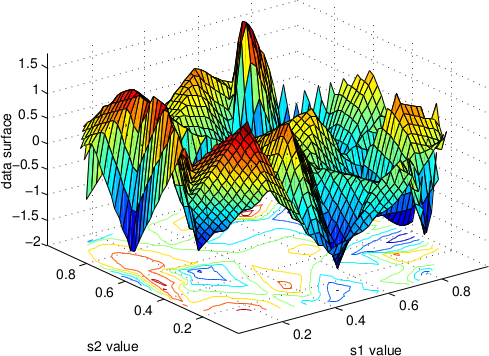}
        }%
        \subfigure[]{%
           \label{fig:sim1s5}
           \includegraphics[width=0.29\textwidth,height=0.16\textheight]{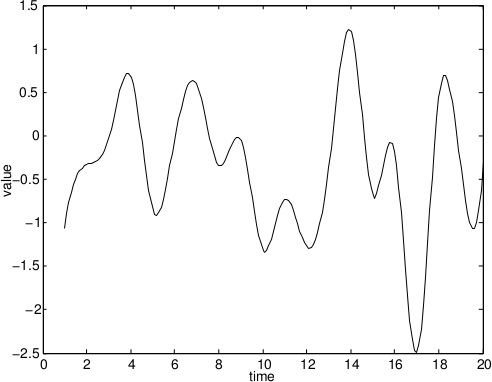}
        }\\ 

        \subfigure[]{%
            \label{fig:sim2t5}
            \includegraphics[width=0.31\textwidth,height=0.16\textheight]{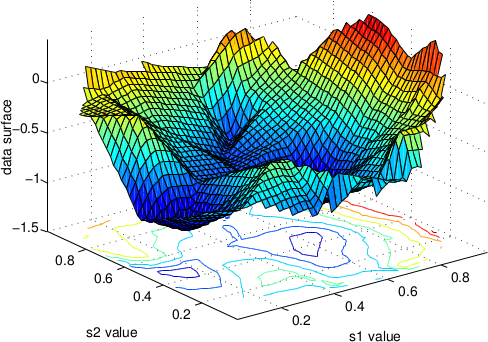}
        }%
        \subfigure[]{%
           \label{fig:sim2t9}
           \includegraphics[width=0.31\textwidth,height=0.16\textheight]{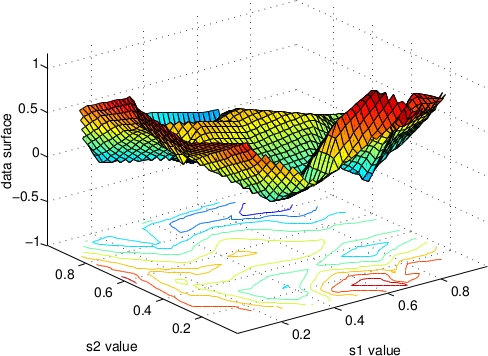}
        }%
        \subfigure[]{%
           \label{fig:sim2s3}
           \includegraphics[width=0.29\textwidth,height=0.16\textheight]{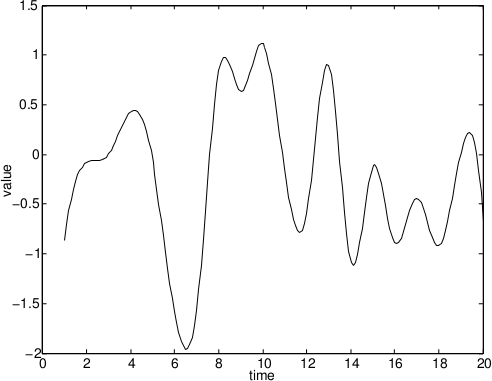}
        }\\ 

         \subfigure[]{%
            \label{fig:sim3t5}
            \includegraphics[width=0.31\textwidth,height=0.16\textheight]{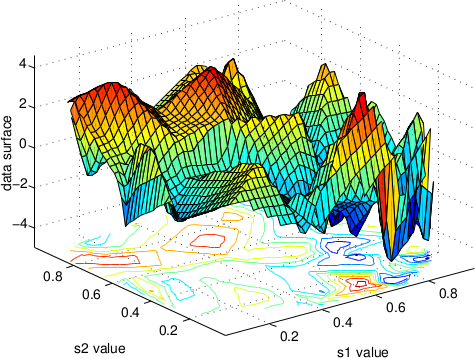}
        }%
        \subfigure[]{%
           \label{fig:sim3t9}
           \includegraphics[width=0.31\textwidth,height=0.16\textheight]{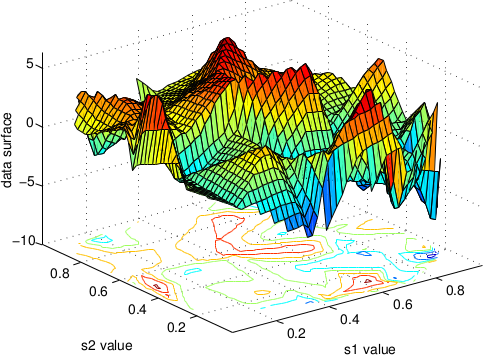}
        }%
        \subfigure[]{%
           \label{fig:sim3s35}
           \includegraphics[width=0.29\textwidth,height=0.16\textheight]{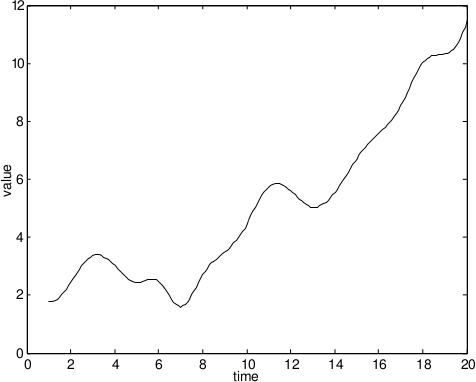}
        }\\ 

        \subfigure[]{%
            \label{fig:sim4t5}
            \includegraphics[width=0.31\textwidth,height=0.16\textheight]{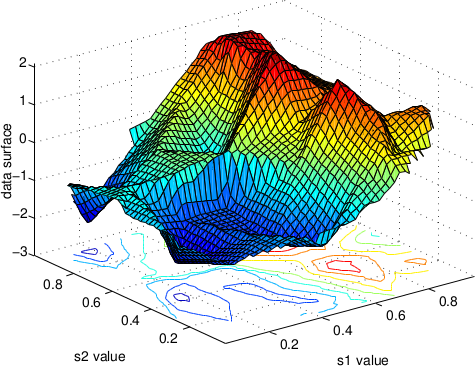}
        }%
        \subfigure[]{%
           \label{fig:sim4t9}
           \includegraphics[width=0.31\textwidth,height=0.16\textheight]{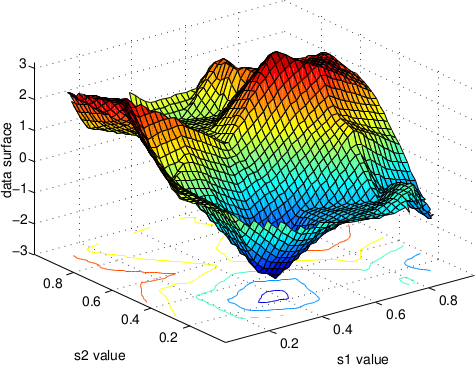}
        }%
        \subfigure[]{%
           \label{fig:sim4s19}
           \includegraphics[width=0.29\textwidth,height=0.16\textheight]{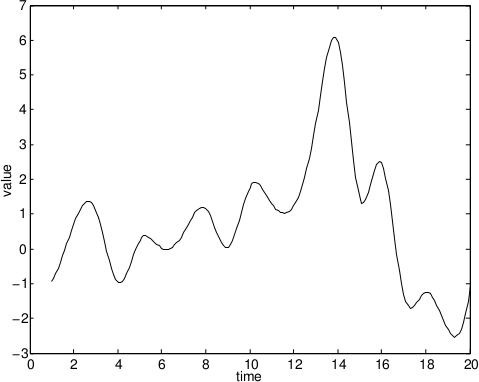}
        }\\ 
       
    \end{center}
\caption{Spatial surfaces and time plots comprising datasets generated by the simulation schemes I-IV.}	
    \label{fig:subfigures1}
\end{figure}

With datasets generated by the prescribed simulation schemes, now we fit GRFDSTM to them. We also fit the two LDSTMs, discussed above (BGG model and modified-BGG model) and compare their performances with that of the GRFDSTM. Regarding the prior elicitation and fitting of GRFDSTM, we follow the prescription provided in Section \ref{section:Prior_specification_fitting}. Hence, $\sigma_{0}^2,\lambda_{0},\lambda_{f},\lambda_{g}$ are no longer unknown parameters but some fixed values. Then the unknown parameters associated to the GRFDSTM, that we want to estimate based on the data, comprises $\beta_{0g},\beta_{1g},\beta_{0f},\beta_{1f}$ the vector parameter ${\boldsymbol{\mu}}_{0}$ associated to the initial process and the parameters associated to the squared exponential covariance kernels as specified in Section \ref{section:Identifiability}.
We specify independent vague normal priors $N(0,1000)$ for each of $\beta_{0g},\beta_{1g},\beta_{0f} \ \text{and}  \ \beta_{1f}$ and specify $N(\boldsymbol{0},\boldsymbol{\Sigma_{\mu}})$ prior for the parameter vector ${\boldsymbol{\mu}}_{0}$, where $\boldsymbol{\Sigma_{\mu}}$ is specified by an isotropic covariance function with scale parameter value $1$ and smoothness parameter value $1$. We consider the squared exponential covariance kernels with the following representation $c(\bold{u},\bold{v})=\sigma^2 e^{-\lambda||\bold{u}-\bold{v}||^2}$. Associated with them we have five covariance 
kernels  $c_{f}(\cdot,\cdot),c_{g}(\cdot,\cdot),c_{\epsilon}(\cdot,\cdot),c_{\eta}(\cdot,\cdot),
c_{0}(\cdot,\cdot)$, with five scale parameters $\sigma_{f}^2,\sigma_{g}^2,\sigma_{\epsilon}^2,\sigma_{\eta}^2,\sigma_{0}^2$ and five smoothness parameters $\lambda_{f},\lambda_{g},\lambda_{\epsilon},\lambda_{\eta},\lambda_{0}$. Recall that among them $\sigma_{0}^2,\lambda_{0},\lambda_{f},\lambda_{g}$ are just some fixed quantities. We set $\sigma_{0}^2=1,\lambda_{0}=100000,\lambda_{f}=1$ and $\lambda_{g}=1$. Fixing the value of $\lambda_{0}$ to $100000$ renders the covariance matrix $\boldsymbol{\Sigma}_{0}$ associated with $(X(\bold{s}_{1},0),X(\bold{s}_{2},0),\cdots X(\bold{s}_{n},0))$, an identity matrix. This ensures that $\boldsymbol{\Sigma}_{0}$ doesn't interfere with the spatial dependence structure of $Y(\boldsymbol{s},t)$. For the rest of the scale and smoothness parameters, we consider independent lognormal priors. For example, we consider lognormal$(0.4,1.4)$ prior for each of the scale parameters $\sigma_{f}^2,\sigma_{g}^2,\sigma_{\eta}^2$ and $\sigma_{\epsilon}^2$ for the data generated by 
simulation scheme IV. For that same dataset we consider lognormal$(0,0.2)$ prior for $\lambda_{\eta}$ and lognormal$(0,0.05)$ prior for $\lambda_{\epsilon}$. Note that we use the following representation for the 
lognormal$(
\mu,\sigma)$ pdf :
\begin{align*}
\mathlarger{f_{\mu,\sigma}(x)=\frac{1}{\sqrt{2\pi}\sigma}\frac{1}{x}e^{-\frac{1}{2}(\frac{\ln(x)-\mu}{\sigma})^2}} \ \ \  \text{for all}\ \ \  x>0.
\end{align*}
The reason behind choosing very concentrated, almost degenerate priors for $\lambda_{\eta}$ and $\lambda_{\epsilon}$ is that, smoothness parameters are notoriously difficult to handle during MCMC computation. Their erratic movement across the Markov chain state space would lead to severely ill-conditioned variance-covariance matrix, making the MCMC algorithm 
difficult to converge in feasible time. However, with that the question comes how we get these specific priors? In this case lognormal$(0,0.05)$ and lognormal$(0,0.2)$ respectively. We execute MCMC pilot runs based on a smaller subset of the entire dataset and run the MCMC for different combinations of prior hyper parameters, i.e. we try out lognormal$(a,b)$ and lognormal$(c,d)$ and choose the combination of values $a,b,c,d$ that gives the best fit in terms of a goodness of fit measure $D_{0.5}$ to be introduced later.

So far, we have considered the prior specification regarding the GRFDSTM. Now, let us consider the parameters and prior structures for the BGG model and modified-BGG model. For the BGG model, we follow the same prior structure as prescribed in \cite{Banerjee:Gamerman:Gelfand}; vague normal prior for the location parameter $\beta_{0}$, inverse gamma priors for the scale parameters $\sigma_{\eta}^2,\sigma_{\epsilon}^2,\sigma_{w}^2$ and gamma prior for the smoothness parameter $\phi$. For the modified-BGG  model, the basic structure is exactly the same; however, instead of gamma and inverse gamma, lognormal priors are taken for the scale and smoothness parameters. 

To each of the $4$ simulated datasets, we fit all the $3$ models. We consider $100,000$ iterations of Markov chains for all of them, taking the first $80,000$ iterations as burn-in and the post burn-in $20,000$ iterations have been used for posterior inference. Convergence of the MCMC chains are confirmed by using the CODA package of R. Besides, visual checking of the individual trace-plots also suggests that the convergence is satisfactory.

To assess the model fit we propose a goodness of fit type of statistic. Note that, since the main goal behind developing the GRFDSTM is to predict efficiently, the proposed statistic is so developed to measure how efficiently a model is able to predict at a new spatio-temporal location. Essentially, it is a convex combination of two statistics :
\begin{align*}
D_{\alpha}=\frac{\alpha}{m}\sum_{(\boldsymbol{s}^*,t^*)}|\hat{y}^{pred}(\boldsymbol{s}^*,t^*)-y(\boldsymbol{s}^*,t^*)|+\frac{(1-\alpha)}{m}\sum_{(\boldsymbol{s}^*,t^*)}|Q_{0.975}^{pred}(\boldsymbol{s}^*,t^*)-Q_{0.025}^{pred}(\boldsymbol{s}^*,t^*)| ; \  0<\alpha<1
\end{align*}
where $\hat{y}^{pred}(\boldsymbol{s}^*,t^*)$ is any suitable representative central value for the posterior predictive distribution $[Y(\boldsymbol{s}^*,t^*)|\bold{y}]$ at the spatio-temporal location $(\boldsymbol{s}^*,t^*)$ and $m$ is the number of spatio-temporal coordinates $(\boldsymbol{s}^*,t^*)$, where we are predicting. We recommend using the posterior median as $\hat{y}^{pred}(\boldsymbol{s}^*,t^*)$. The statistic $\sum_{(\boldsymbol{s}^*,t^*)}|\hat{y}^{pred}(\boldsymbol{s}^*,t^*)-y(\boldsymbol{s}^*,t^*)|$ measures the deviation of the point prediction value from the observed test data $y(\boldsymbol{s}^*,t^*)$, and $\sum_{(\boldsymbol{s}^*,t^*)}|Q_{0.975}^{pred}(\boldsymbol{s}^*,t^*)-Q_{0.025}^{pred}(\boldsymbol{s}^*,t^*)|$ measures the width of the $95\%$ Bayesian symmetric prediction interval w.r.t. the posterior predictive distribution $[Y(\boldsymbol{s}^*,t^*)|\bold{y}]$. The later part acts like a penalty term, so that any model for which the point prediction value agrees with the observed 
test 
data, but the reliability of the point predictor is low, the value of $D_{\alpha}$ goes up. Hence, even if the test data 
fall well within the $95\%$ Bayesian symmetric prediction interval; a wider prediction interval means $D_{\alpha}$ is large and hence the model should be discarded. The amount of penalization is specified by $\alpha$. A natural choice may be $\alpha=0.5$. The proposed statistic $D_{\alpha}$ is essentially a robust version of the predictive criteria for model selection considered in \cite{Gelfand:Ghosh}.

The findings of the simulation study are summarized in the Table \ref{tab:table1}.
\begin{center}
\begin{table}[h]
\smaller
  \begin{tabular}{      l      c      c      c      c      c      r      }
    \hline
     Simulation & \  & \ & I. Spatio-tempoal & II. Temporally iid & III. Spatial & IV. Linear\\ 
    \ schemes & \  & \  & \ white noise & \ \ \ \ \  spatial process & \ \ \ \ random walk & \ \ \ DSTM\\
    \hline
    \hline
    \\
    Percentage & GRFDSTM & \ & 85.20\% & 97.40\% & 85.90\% & 95.40\% \\
     of data & BGG model & \ & 96.40\% & 100.00\% & 100.00\% & 99.80\%\\
    falling inside & modified-BGG & \ & 95.70\% & 99.10\% & 94.50\% & 99.60\%\\
    $95\%$ Bayesian  & model\\
    symmetric\\
    prediction interval\\
    \\
     $95\%$ Bayesian symmetric &  GRFDSTM & \ & 3.65 & 0.44 & 14.63 & 1.29\\
     prediction interval & BGG model & \ & 4.22 & 1.55 & 4.60 & 3.37\\
     width & modified-BGG & \ & 3.99 & 1.30 & 12.44 & 3.07\\
     $\mathsmaller{\frac{1}{m}\sum_{(\boldsymbol{s}^*,t^*)}|Q_{0.975}^{pred}(\boldsymbol{s}^*,t^*)-Q_{0.025}^{pred}(\boldsymbol{s}^*,t^*)|}$ & model\\
     \\
     Error in & GRFDSTM & \ & 0.99 & 0.08 & 3.79 & 0.25\\
     Point prediction & BGG model & \ & 0.78 & 0.16 & 0.41 & 0.32\\
     $\mathsmaller{\frac{1}{m}\sum_{(\boldsymbol{s}^*,t^*)}|\hat{y}^{pred}(\boldsymbol{s}^*,t^*)-y(\boldsymbol{s}^*,t^*)|}$    & modified-BGG & \ & 0.78 & 0.19 & 2.35 & 0.40\\
     \ & model\\
     \\
     Value of & GRFDSTM & \ & 2.32 & 0.26 & 9.21 & 0.77\\
     $D_{0.5}$ statistic & BGG model & \ & 2.50 & 0.85 & 2.50 & 1.85\\
     \ & modified-BGG & \ & 2.39 & 0.75 & 7.39 & 1.73\\
     \ & model\\
     \hline
   \end{tabular}
   \caption{Fitting and prediction associated to simulated datasets generated by LDSTM scheme I-IV; summary of results.}
    \label{tab:table1} 
 \end{table}
\end{center}

Following Table \ref{tab:table1}, we see that for each simulation scheme I-IV, the highest percentage of test data points fall inside the $95\%$ Bayesian symmetric prediction interval given by the BGG model. This is quite likely, given the fact that among the three competing models, the BGG model also produces the widest prediction intervals. The lowest percentage of test data points fall inside the prediction intervals associated with the GRFDSTM, that also produces the narrowest prediction intervals. It is tempting to think that approximately $0.95$ proportion of the test dataset should fall inside the $95\%$ Bayesian symmetric prediction interval, in case the model assumption is adequate. But this consideration is incorrect since firstly, the prediction intervals are calculated based on the univariate posterior predictive distributions for each of the test data points separately and they are highly dependent; more importantly, this is a Bayesian prediction interval and unlike the classical prediction 
interval 
a direct relative frequency based interpretation is not available. However, this doesn't ward off its utility as a model comparison criteria and in general, a model that produces prediction intervals that has moderate widths with many of the test data points lying inside the respective prediction intervals, indicate good performance. Note that, except the spatial random walk scheme, the GRFDSTM performs uniformly better than the two other competing models in terms of the $D_{0.5}$ goodness of fit statistic. In particular, for schemes II and IV, where the spatial story is nontrivial, the GRFDSTM performs much better than the BGG model and modified-BGG model. In the case of scheme III, i.e., the spatial random walk case, however, the BGG model performs strikingly better than the GRFDSTM and modified-BGG model. An apparent explanation for that is the inherent random walk structure of the state variables in the BGG model make it the most suitable model for the spatial random walk dataset. The performances of all 
the three models are very similar in the case of scheme I, that offers no spatial and temporal story. However, as 
Figure \ref{fig:subfigures2} shows, for the simulation schemes I and III, GRFDSTM follows the data pattern 
much better than the competing two models.

\begin{figure}[H]
     \begin{center}
        \subfigure[]{%
            \label{fig:sim1s30fitted}
            \includegraphics[width=0.31\textwidth,height=0.16\textheight]{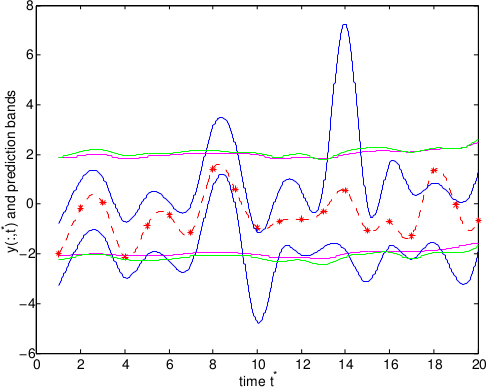}
        }%
        \subfigure[]{%
           \label{fig:sim3s35fitted}
           \includegraphics[width=0.31\textwidth,height=0.16\textheight]{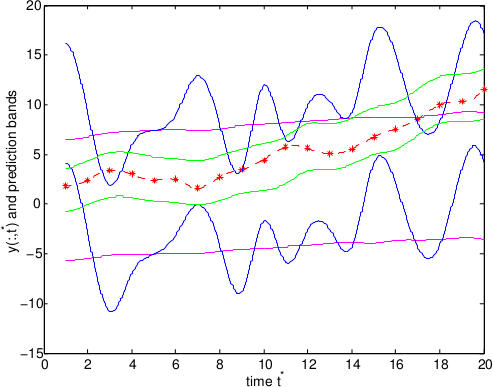}
        }%
        \subfigure[]{%
           \label{fig:sim4s42fitted}
           \includegraphics[width=0.31\textwidth,height=0.16\textheight]{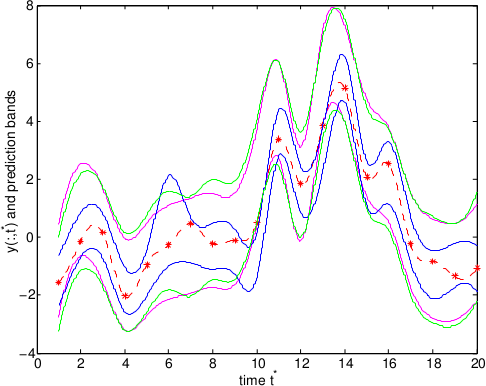}
        }\\ 

    \end{center}
\caption{Specimen pictures showing test data and prediction bands for simulation schemes I, III and IV. Panel (a) displays the test data generated by simulation scheme I (i.e. spatio-temporal white noise) and the 95\% Bayesian symmetric prediction intervals given by GRFDSTM, BGG model and modified-BGG model at a particular representative spatial test data location for $t=1,2,\cdots,20$; the red starred line represents the time series of test data at test data location $\bold{s}^{*}_{i}$ for  $t=1,2,\cdots,20$ and a smooth red curve is interpolated through that test data; the blue band, the green band, and the magenta band represent the 95\% Bayesian symmetric prediction interval associated with the GRFDSTM, BGG model and modified-BGG model, respectively. Panel (b) displays a similar plot for the spatial random walk data, i.e., the data generated by simulation scheme III and panel (c) displays a similar plot for the data generated by the structured LDSTM, i.e., simulation scheme IV.}	
    \label{fig:subfigures2}
\end{figure}

\subsection{Simulation from Nonlinear Non-Gaussian Dynamic Spatio-Temporal Models}

We have simulated spatio-temporal data using four different linear dynamic models and carried out a comparative study of GRFDSTM and two other LDSTM models on them. The LDSTM models, themselves being linear, show excellent predictive performance. What is more interesting is that the GRFDSTM's performance is equally competitive. In fact, in schemes II and IV, where there is structured spatial dependence, its performance is better than the BGG model and the modified-BGG model. However, the focal issue behind developing the GRFDSTM is to model nonlinear dynamics efficiently, bypassing the choices of specific nonlinear forms and hence checking its performance when the data is simulated by a nonlinear model, is more crucial. We consider the following two different nonlinear dynamic models and simulate spatio-temporal data using them.

V. \emph{Power transform based NLDSTM:} This model is motivated by the work of \cite{Sanso:Guenni} in the context of rainfall modeling. Rainfall is influenced by complex interactions among atmospheric processes and therefore is best modeled by nonlinear models. Similar to the previous simulation schemes, we consider a unit square on $\mathbb{R}^2$ and randomly generate $60$ spatial locations, where we simulate a single replicate of $Y(\bold{s},t)$ for $t=1,2,\cdots,20$ using the following spatio-temporal model:
\begin{align*}
      &Y(\bold{s}_{i},t)=\begin{cases} 
                    {X(\bold{s}_{i},t)}^{b},\ \ \  &\text{if}\  X(\bold{s}_{i},t)>0;\\
                    0\ \ \  &\text{if}\  X(\bold{s}_{i},t)\leq 0; 
                    \end{cases}\\
     &X(\bold{s}_{i},t)=\alpha + \beta X(\bold{s}_{i},t-1)+\eta(\bold{s}_{i},t);\\
     &\{X(\bold{s}_{i},0)\}_{i=1}^n \sim N(\bold{0},\boldsymbol{\Sigma_{0}}).
\end{align*}
We set aside the data associated with $10$ spatial location as test dataset and the rest of the data is fitted using the GRFDSTM and the two other competing LDSTMs (BGG model and modified-BGG model). Here $\{\eta(\bold{s}_{i},t)\}_{i=1}^n$ are temporally independent and identically distributed as $N(\bold{0},\boldsymbol{\Sigma_{\eta}})$. The associated variance-covariance matrices $\boldsymbol{\Sigma_{0}}$ and $\boldsymbol{\Sigma_{\eta}}$ are generated by exponential covariance functions of the form $c(\bold{u},\bold{v})=\sigma^2 e^{-\lambda||\bold{u}-\bold{v}||}$. We choose $\sigma_{0}^2=1,\sigma_{\eta}^2=1$ and $\lambda_{0}=1,\lambda_{\eta}=1$. The values of the smoothness parameters are chosen carefully so that there is an interesting spatial story. We choose $\alpha=1$ and $\beta=-0.8$, which imply nontrivial temporal evolution. Finally, the value of $b$ was set to $3$ to make sure that the process exhibits enough nonlinearity within the time span $t=1,2,\cdots,20$. The observed data $y(\bold{s}_{i},t)$ 
is non-
Gaussian. However, the nonlinearity in this model is introduced at the observational equation and the state process evolves linearly. The next model, instead, describes a situation where the nonlinearity is expressed through the evolution of the state process.

VI. \emph{Threshold NLDSTM:} Threshold models are very useful in the context of economic time series data modeling. Spatio-temporal adaptation of such model is described by \cite{Berliner:Wikle:Cressie} in the context of forecasting of sea-surface temperature. The grid size and basic design of the simulation remains same as the earlier one, but now we simulate from the following model: 
\begin{align*}
      &Y(\bold{s}_{i},t)=X(\bold{s}_{i},t)+\epsilon(\bold{s}_{i},t);\\
      &X(\bold{s}_{i},t)=\begin{cases} 
                    1-0.6X(\bold{s}_{i},t-1)+\eta(\bold{s}_{i},t),\ \ \  &\text{if}\  X(\bold{s}_{i},t)<-4;\\
                    X(\bold{s}_{i},t-1)+\eta(\bold{s}_{i},t),\ \ \  &\text{if}\ -4<X(\bold{s}_{i},t)<4;\\
                    -1-0.6X(\bold{s}_{i},t-1)+\eta(\bold{s}_{i},t),\ \ \  &\text{otherwise};\  \\
                    \end{cases}\\
     &\{X(\bold{s}_{i},0)\}_{i=1}^n \sim N(\bold{0},\boldsymbol{\Sigma_{0}}).
\end{align*}
The spatial processes $\eta(\cdot,t)$ and $\epsilon(\cdot,t)$ are temporally independent and identically distributed as centered Gaussian processes with covariance functions $c_{\eta}(\cdot,\cdot)$ and $c_{\epsilon}(\cdot,\cdot)$. We choose $\sigma_{0}^2=1,\sigma_{\eta}^2=1,\sigma_{\epsilon}^2=1$ and $\lambda_{0}=1,\lambda_{\eta}=1,\lambda_{\epsilon}=0.25$. 

\begin{figure}[H]
     \begin{center}
        \subfigure[]{%
            \label{fig:sim5t7}
            \includegraphics[width=0.31\textwidth,height=0.16\textheight]{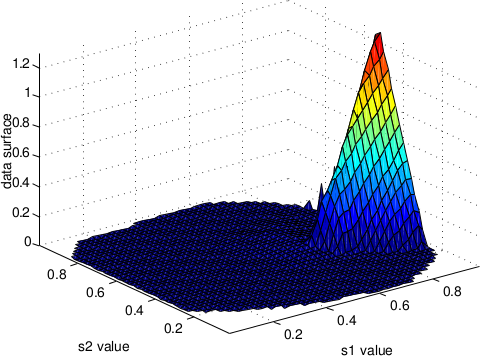}
        }%
        \subfigure[]{%
           \label{fig:sim5t16}
           \includegraphics[width=0.31\textwidth,height=0.16\textheight]{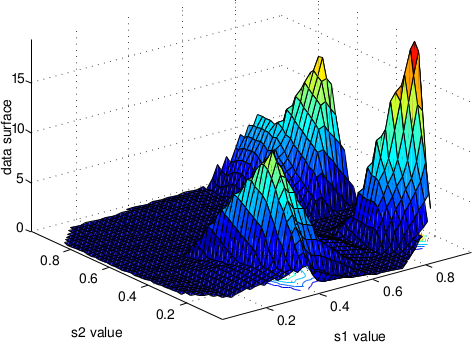}
        }%
        \subfigure[]{%
           \label{fig:sim5s21}
           \includegraphics[width=0.29\textwidth,height=0.16\textheight]{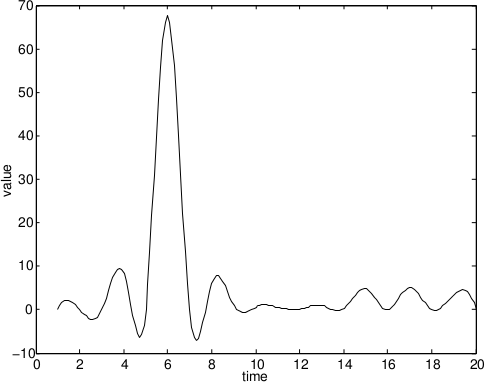}
        }\\ 

        \subfigure[]{%
            \label{fig:sim6t8}
            \includegraphics[width=0.31\textwidth,height=0.16\textheight]{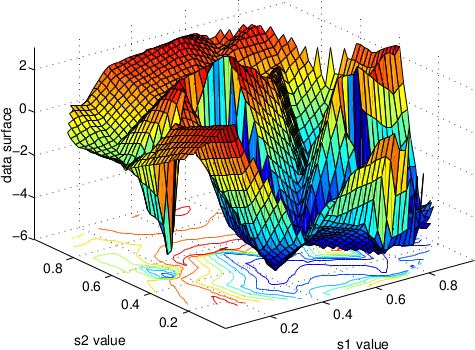}
        }%
        \subfigure[]{%
           \label{fig:sim6t9}
           \includegraphics[width=0.31\textwidth,height=0.16\textheight]{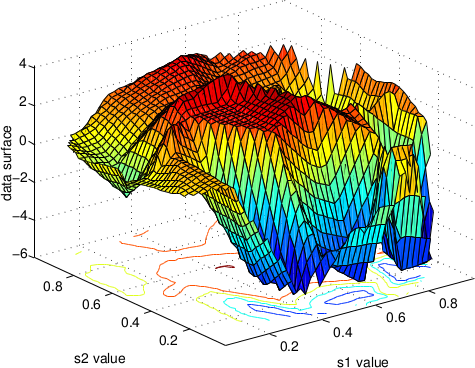}
        }%
        \subfigure[]{%
           \label{fig:sim6s13}
           \includegraphics[width=0.29\textwidth,height=0.16\textheight]{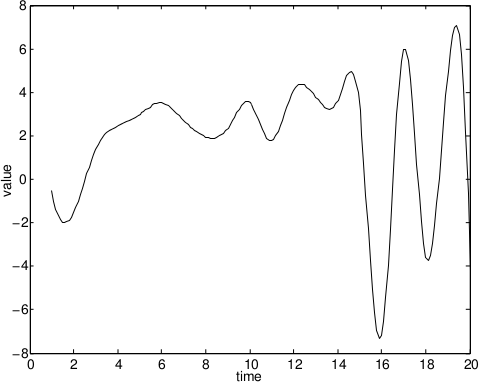}
        }\\ 
                 
    \end{center}
\caption{Spatial surfaces and time plots generated from the nonlinear non-Gaussian models. The plots are analogous to the plots in Figure \ref{fig:subfigures1}. Panels (a), (b) and (c) correspond to the data simulated from the power transform based NLDSTM. The spike and slab structures of the spatial surface plots of panels (a) and (b) and the time plot of panel (c) are indicative of nonlinear data generating process. Panels (d), (e) and (f) correspond to the data simulated from the threshold NLDSTM.}	
    \label{fig:subfigures3}
\end{figure}

Like the previous schemes, the data $y(\bold{s}_{i},t)$ generated by this model is non-Gaussian; indeed a three component Gaussian mixture.

\begin{table}[h]
\smaller
  \begin{center}
  \begin{tabular}{      l      c      c      c      c      c      r      }
    \hline
     Simulation & \  & \ & \ & V. Power transform & \ & VI. Threshold\\ 
    \ schemes & \  & \  & \ \ & \ \ \ \ \  based NLDSTM  & \ \ \ \ \ & \ \ \  NLDSTM\\
    \hline
    \hline
    \\
    Percentage & GRFDSTM & \ & \ \ & 90.80\% & \ \ & 94.30\% \\
     of data & BGG model & \ & \ \ & 97.50\% & \ \ & 98.70\%\\
    falling inside & modified-BGG & \ & \ \ & 97.50\% & \ \ & 98.20\%\\
    $95\%$ Bayesian  & model\\
    symmetric\\
    prediction interval\\
    \\
     $95\%$ Bayesian symmetric &  GRFDSTM & \ & \ & 17.18 & \ & 5.57\\
     prediction interval & BGG model & \ & \ & 31.59 & \ & 9.10\\
     width & modified-BGG & \ & \ & 26.05 & \ & 9.69\\
     $\mathsmaller{\frac{1}{m}\sum_{(\boldsymbol{s}^*,t^*)}|Q_{0.975}^{pred}(\boldsymbol{s}^*,t^*)-Q_{0.025}^{pred}(\boldsymbol{s}^*,t^*)|}$ & model\\
     \\
     Error in & GRFDSTM & \ & \ & 4.21 & \ & 1.16\\
     Point prediction & BGG model & \ & \ & 3.27 & \ & 1.25\\
     $\mathsmaller{\frac{1}{m}\sum_{(\boldsymbol{s}^*,t^*)}|\hat{y}^{pred}(\boldsymbol{s}^*,t^*)-y(\boldsymbol{s}^*,t^*)|}$    & modified-BGG & \ & \ & 2.54 & \ & 1.66\\
     \ & model\\
     \\
     Value of & GRFDSTM & \ & \ & 10.69 & \ & 3.37\\
     $D_{0.5}$ statistic & BGG model & \ & \ & 17.43 & \ & 5.18\\
     \ & modified-BGG & \ & \ & 14.29 & \ & 5.67\\
     \ & model\\
     \hline
   \end{tabular}
   \end{center}
   \caption{Summary of results of fitting and prediction associated with datasets simulated by 
   scheme V (power transform based NLDSTM) and scheme VI (threshold NLDSTM).}
    \label{tab:table3} 
 \end{table}

We fit the GRFDSTM and the BGG model and modified-BGG model to the data simulated from the nonlinear non-Gaussian models and the findings are summarized in Table \ref{tab:table3}. Following Table \ref{tab:table3}, we see that for both the simulation schemes V and VI, almost the same percentage of test data points fall inside the $95\%$ Bayesian symmetric prediction intervals given by the BGG model and the modified-BGG model. However, the width of the $95\%$ Bayesian symmetric prediction intervals given by the GRFDSTM is much narrower than the intervals given by the two other competing models. In fact, in terms of the $D_{0.5}$ goodness of fit statistic, the GRFDSTM demonstrates significantly better performance than the two other competing models.
The observations are in keeping with the detailed diagrams depicted in Figure \ref{fig:subfigures4}, and, as before, it is
seen that GRFDSTM follows the data more closely compared to the other competing models.

\begin{figure}[H]
     \begin{center}
        \subfigure[]{%
            \label{fig:sim5s30fitted}
            \includegraphics[width=0.31\textwidth,height=0.16\textheight]{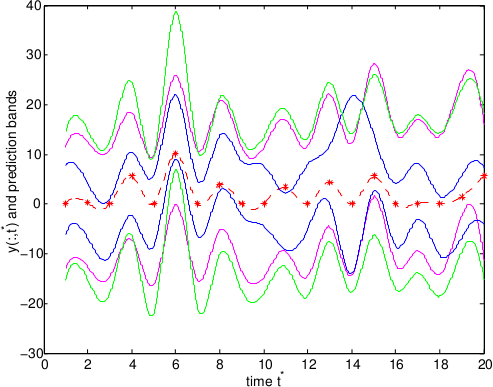}
        }
        \subfigure[]{%
           \label{fig:sim6s39fitted}
           \includegraphics[width=0.31\textwidth,height=0.16\textheight]{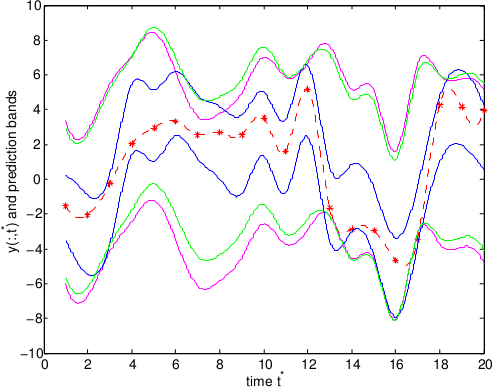}
        } 

    \end{center}
\caption{Specimen pictures showing test data and prediction bands for simulation schemes V and VI. Panel (a) displays the test data generated by the power transform based NLDSTM and the 95\% Bayesian symmetric prediction intervals given by GRFDSTM, BGG model and modified-BGG model at a particular representative spatial test data location for $t=1,2,\cdots,20$; the red starred line represents the time series of test data at test data location $\bold{s}^{∗}_{i}$ for $t=1,2,\cdots,20$ and a smooth red curve is interpolated through the test data; the blue band, the green band, and the magenta band represent the prediction intervals associated with the GRFDSTM, BGG model and the modified-BGG model, respectively. Panel (b) displays a similar plot for the threshold NLDSTM data, i.e., the data generated by simulation scheme VI.}	
    \label{fig:subfigures4}
\end{figure}

Comparative study of spatio-temporal models, however, is ambiguous and the result may strongly depend on the details of how the approaches are implemented. Moreover, the choices of the values of the hyper parameters may entirely change the result. \cite{Dou:Le:Zidek} also mentioned this issue while comparing two spatio-temporal methods. We do our best to make the comparative study worthwhile in the following way; while the properties associated with the model give us hint about what could be the feasible range of values for different hyper parameters, proposing some fixed values based on them is almost impossible. We use MCMC pilot runs with smaller datasets to find out the values of different hyper parameters. Several different sets of values of hyper parameters were chosen and for each of them, we have MCMC pilot runs based on a smaller subset of the entire dataset. Since, due to smaller sizes of the data sets the pilot runs were much faster, we were able to experiment with many such pilot MCMC runs with 
different 
combinations of the hyper parameters. We chose that combination with the smallest value of $D_{0.5}$ goodness of fit statistic. Then we performed our final MCMC computations based on the entire dataset using those selected values of the hyper parameters. This strategy is used for selecting the values of hyper parameters associated with each of the three competing models. The models selected by this strategy may not be the best ones, but the strategy provides us with candidate models, based on which a comparative study may be meaningful.

While fitting the GRFDSTM to the simulated datasets we did not have to handle missing data. In reality, however, missing data is quite common in spatio-temporal datasets. Mechanical disturbances and electronic malfunctions in measuring devices may even lead to situations where more than $50\%$ data may be missing at some specific site. Missing data problem, however, can be handled straight forwardly in the GRFDSTM, by the data augmentation technique. The key idea is to treat them as prediction problem.

\subsection{Computational Issues}

Although dynamic models are preferred over marginal models because of their superior ability to represent the temporal evolution of complex physical processes, they are associated with much higher computational complexities. Modeling of even moderate 
size spatio-temporal datasets using dynamic spatio-temporal evolution may take significant amounts of time, thus 
limiting their scope. Even the simplest of LDSTMs may be computationally very demanding owing to handling and 
inversion of large variance-covariance matrices. It is no wonder, that the problem becomes even more critical 
in the case of the GRFDSTM. Let us consider the computational cost associated with the GRFDSTM in more detail. 
Recall that we have $n$ monitoring sites where we have collected the data for $T$ consecutive time points. 
Then, in each iteration of the MCMC algorithm for the posterior computation, we need to update the $n(T+1)$ 
dimensional state vector and the smoothness and scale parameters together using TMCMC, which consist of construction of large 
covariance matrix, computing their Cholesky decomposition, calculation of the quadratic form and determinants, etc. 
The total computational cost associated with this step can be shown to be $\sim \frac{2}{3}n^3T^3$. 
This total cost is estimated as follows:
\begin{align*}
 &\texttt{Constructing the movetype:} \sim  2nT+T&\\
 &\texttt{Forming two covariance matrices from} \\
 &\texttt{squared exponential covariance kernel:}  \sim 2\times[4n^2T^2+n^2T-nT+T+4n^2-4n]&\\
 &\texttt{Forming two associated vectors:} \sim 2\times3nT &\\
 &\texttt{Cholesky decomposition of the two} \\
 &\texttt{formed covariance matrices:} \sim 2\times\frac{1}{3}n^3T^3 &\\
 &\texttt{Finding the determinant and the quadratic form:} \sim 2\times[2n^2T^2+3nT]&\\
 &\texttt{Other calculations:} \sim 54 &\\
\end{align*}
Combining them and ignoring all the terms whose total power (adding powers of $n,T$) is $\leq 4$, we get $\frac{2}{3}n^3T^3$. Apart from that, we need to update $\beta_{0g},\beta_{1g},\beta_{0f},\beta_{1f}$ and $\boldsymbol{\mu}_{0}$ using Gibbs steps and the total computational complexity associated to that is $12n^2T^2+\frac{1}{3}n^3$. Hence, ignoring lower order terms, we obtain the total computational cost for fitting the GRFDSTM to be $\sim \frac{2}{3}n^3T^3$. However, this is the cost if we use the GRFDSTM just for fitting the data and it increases substantially if we also consider prediction at new locations. Let us assume that we want to predict the whole time series of observations at each of the $n^*$ unmonitored sites. Considering $n^{\prime}=n+n^*$ one can show this amounts to the extra cost of the order $(\frac{4}{3}{n^{\prime}}^3+\frac{1}{3}n^3+2n^2n^*+2n{n^*}^2+\frac{1}{3}{n^*}^3)T^3$ (ignoring lower order terms). Combining them we get that the total cost of fitting and prediction associated 
to the 
GRFDSTM is $\sim (\frac{4}{3}{n^{\prime}}^3+n^3+2n^2n^*+2n{n^*}^2+\frac{1}{3}{n^*}^3)T^3$. Similar calculation for the BGG model (and modified-BGG model) shows that the cost associated with fitting is $n^3T^3+10n^2T^2+T^3+\frac{2}{3}n^3$ which is more than the cost associated with fitting GRFDSTM. However, if prediction is considered, the cost associated with BGG model is $(n^3+\frac{1}{3}{n^*}^3+2n{n^*}^2+2n^2n^*)T^3$, which is less compared to the cost of prediction associated with GRFDSTM. Note that, there are one time computations associated with each of these models, which can be safely ignored because of their negligible contributions to the overall computational costs. Besides computing the theoretical complexity, we also investigate the empirical computational time associated with each of the three competing models. Such comparative study of theoretical and empirical computational efficiency is meaningful since the computations associated with all of the three models are implemented through C 
programs in 
the same computer. The findings are summarized in Table \ref{tab:table2}.

\begin{table}[h]
\smaller
  \begin{center}
  \begin{tabular}{      l      c      c      r      }
    \hline
     Computational & Computational complexity & Computational cost  &  Empirical computation\\ 
     procedure & per iteration & per iteration & time (in minutes) taken\\ 
     \ & of MCMC & of MCMC & for $1000$ MCMC iterations \\
     \ & \ & ($n=50,n^*=10, T=20$) & ($n=50,n^*=10, T=20$)\\
    \hline
    \hline
    \\
    Fitting of GRFDSTM & $\frac{2}{3}n^3T^3$ & $66.7\times 10^7$ \ & 9 \\
    \\
    Fitting of BGG model & $n^3T^3+10n^2T^2+$ & $101.0\times 10^7$ & 18 \\
    and modified-BGG model & $T^3+\frac{2}{3}n^3$ & \ &  \\
    \\
    \hline
    \\
    Prediction & $(\frac{4}{3}{n^{\prime}}^3+n^3+$ & $378.7\times 10^7$ \ & 36 \\
    by GRFDSTM & \ $2n^2n^*+2n{n^*}^2+\frac{1}{3}{n^*}^3)T^3$ \ & \ \\
    \\
    Prediction by BGG model & $(n^3+\frac{1}{3}{n^*}^3+$ & $148.3\times 10^7$ & 19 \\
    and modified-BGG model & $2n{n^*}^2+2n^2n^*)T^3$ & \ &  \\
    \hline
    \hline
   \end{tabular}
   \end{center}
   \caption{Theoretical computational complexity and empirical computation time associated with fitting and prediction of the GRFDSTM and BGG model (and modified-BGG model).}
    \label{tab:table2} 
 \end{table}

The empirical computation time is derived based on the spatio-temporal data generated by simulation scheme I, i.e., the spatio-temporal white noise model. All the computations are performed on a standard Dell Inspiron laptop with Intel® Core™ i5-4200U CPU @ 1.60GHz $\times$ 4 processor. Note that there are differences between the results obtained by theoretical derivation and empirical findings. Such discrepancies are not unexpected given that in reality even inverting two matrices of same order may take different times. 
Roughly, for the simulation study we have considered, fitting of the GRFDSTM model is twice as fast as that of the BGG model. In the case of prediction, however, the result is reversed and prediction by the BGG model is twice as fast as that of GRFDSTM.

So far, the discussion regarding the computation clearly indicates that the fitting of DSTM models, as well as the GRFDSTM, may be costly. The GRFDSTM works well for small low resolution spatio-temporal dataset. In the following subsection, however we present a simulation study based on a moderate sized spatio-temporal dataset that has been generated by a highly nonlinear sptio-temporal process. Surprisingly, even in this case the GRFDSTM performs very well. That said, however, still the GRFDSTM as it currently specified, is not scalable to large spatio-temporal data i.e. when $nT$ becomes of the order $20,000$ the computation becomes intractable. For that, one need additional strategies like dimension reduction of the state vector, sparsity assumption regarding the covariance matrices, parallel processing, etc. Indeed, the analysis of massive spatial and spatio-temporal data itself is a rapidly growing area with methods combining ideas from statistics and computer science \cite{Banerjee,Banerjee:Gelfand:Finley:Sang,Cressie:Johannesson,Datta:Banerjee:Finley:Gelfand,Datta et al.,Furrer:Sain,Kaufman:Schervish:Nychka,Paciorek et al.,Wikle:Cressie}. In another working paper we are currently trying to develop such a scalable version of the GRFDSTM.

\subsection{Moderate Size Spatio-Temporal Data and the GRFDSTM} 
In this subsection we implement the GRFDSTM to a moderate size dataset, simulated by a highly nonlinear spatio-temporal dynamic model. We generate $120$ random locations in $[0,1]\times[0,1]$ and at those spatial locations we generate spatial time series data for $t=1,2,\cdots,50$ according to the following nonlinear spatio-temporal model

\begin{align*}
Y_{t}(\bold{s}_{i})=&c+d\tan(\beta_{t}(\bold{s}_{i}))+\epsilon_{t}(\bold{s}_{i})\ \ \  \text{for}\  i=1,2,\cdots,n \ \text{and}\ t=1,2,\cdots,T \\
\beta_{t}(\bold{s}_{i})=&\sum_{j=1}^{n}a_{j}\beta_{t-1}(\bold{s}_{j})+\sum_{k=1}^{n}\sum_{l=1}^{n}b_{kl}\beta_{t-1}(\bold{s}_{k})\beta^{2}_{t-1}(\bold{s}_{l})+\eta_{t}(\bold{s}_{i});\ \ \  \text{for}\  i=1,2,\cdots,n\\
\boldsymbol{\beta}_{0}(\cdot) & \sim N(\bold{0},\boldsymbol{\Sigma}) ;\ \boldsymbol{\eta}_{t}(\cdot) \stackrel{iid}{\sim}  N(\bold{0},\boldsymbol{\Sigma});\ \text{and} \ \boldsymbol{\epsilon}_{t}(\cdot) \stackrel{iid}{\sim} N(\bold{0},\boldsymbol{\Sigma});
\end{align*}
where $\boldsymbol{\Sigma}$ is a spatial variance covariance matrix generated by a squared exponential covariance kernel with smoothness parameter $\lambda=1$ and variance $\sigma^{2}=1$. Once the data is completely generated, then we randomly select $20$ spatial locations among the $120$ locations and assign them as the test data locations. Hence, the spatial time series data generated at those $20$ spatial locations for time $t=1,2,\cdots,50$ is reserved as the test dataset and the rest of the dataset is kept as the training dataset. So, now we have a training dataset of size $5000$ on which we fit the GRFDSTM and predict the values of the test data. In fact, we gave the whole posterior predictive distribution for each of those $1000$ test data points. The fit was good and more than $95\%$ of test data  are correctly captured in the respective $95\%$ Bayesian symmetric prediction intervals. Below we present some graphs showing the overall performance of the GRFDSTM. 

\begin{figure}[H]
     \begin{center}
        \includegraphics[width=0.45\textwidth,height=0.30\textheight]{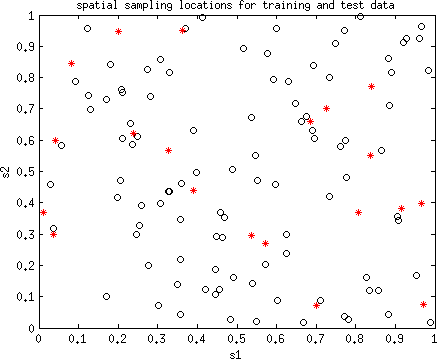}
        
    \end{center}
\caption{Spatial sampling locations in $[0,1]\times[0,1]$ for the moderate size spatio-temporal data. The black circles represent $100$ sptial training data locations and the red stars represent $20$ spatial test data locations.}	
    \label{fig:locations}
\end{figure}
As the above figure suggests, the test data locations are well representative of the training data locations. Regarding the model fitting and prediction we implement the GRFDSTM as specified in Section \ref{section:Identifiability}, i.e. we assume that all the covariance kernels are squared exponential type and $\sigma_{0}^2,\lambda_{0},\lambda_{f},\lambda_{g}$ are fixed. We estimate the rest of the parameters  ${\boldsymbol{\mu}}_{0}$, $\beta_{0g},\beta_{1g},\beta_{0f},\beta_{1f},\sigma_{f}^2,\sigma_{g}^2,\sigma_{\epsilon}^2,\sigma_{\eta}^2,\lambda_{\epsilon},\lambda_{\eta}$ using an MCMC algorithm. The prior structure is same as Section \ref{section:Prior_specification_fitting}. We take $N(\bold{0},\bold{I}_{n})$ prior for  ${\boldsymbol{\mu}}_{0}$, $N(0,1000)$ prior for each of $\beta_{0g},\beta_{0f}$ and $N(0.1,1000)$ prior for each of $\beta_{1g},\beta_{1f}$. We consider lognormal$(0.4,1.4)$ prior for each of $\sigma_{f}^2,\sigma_{g}^2,\sigma_{\epsilon}^2,\sigma_{\eta}^2$, lognormal$(0,0.08)$ prior for 
$\lambda_{\epsilon}$ and lognormal$(0,0.4)$ prior for $\lambda_{\eta}$. All the priors are independent. The prior hyper parameters are selected based on multiple MCMC pilot runs using a much smaller subset of the whole dataset and then taking the combination of prior hyper parameter values, which gives the best fit. The MCMC runs upto $30,000$ iterations and the first $14,000$ samples were discarded as burn-in samples. Below are some specimen pictures of the MCMC trace plots for some of the parameters. We also present the pictures of posterior distributions (histograms drawn based on the post burn-in samples) of the respective parameters.

\begin{figure}[H]
     \begin{center}
        \subfigure[]{%
            \label{fig:sigmae}
            \includegraphics[width=0.31\textwidth,height=0.16\textheight]{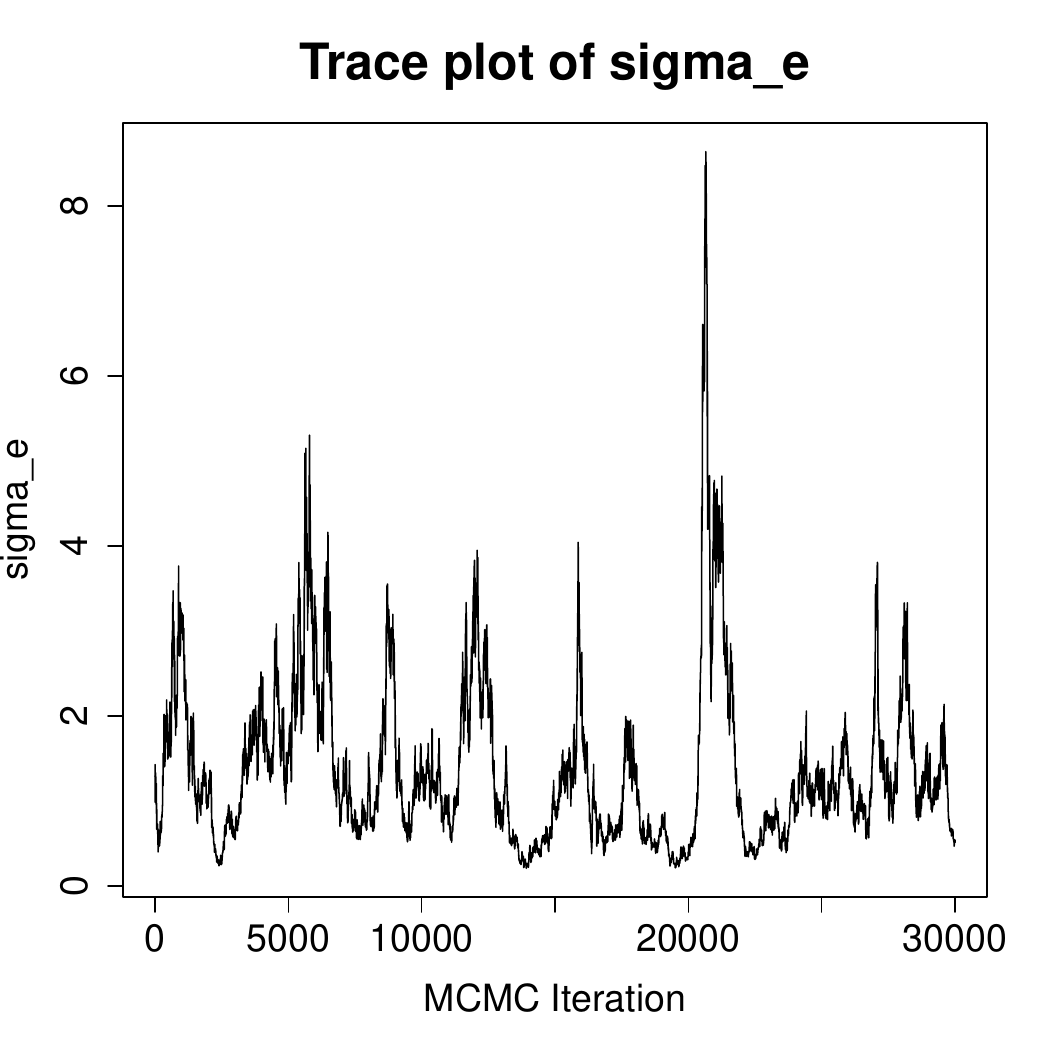}
        }%
        \subfigure[]{%
           \label{fig:sigmaf}
           \includegraphics[width=0.31\textwidth,height=0.16\textheight]{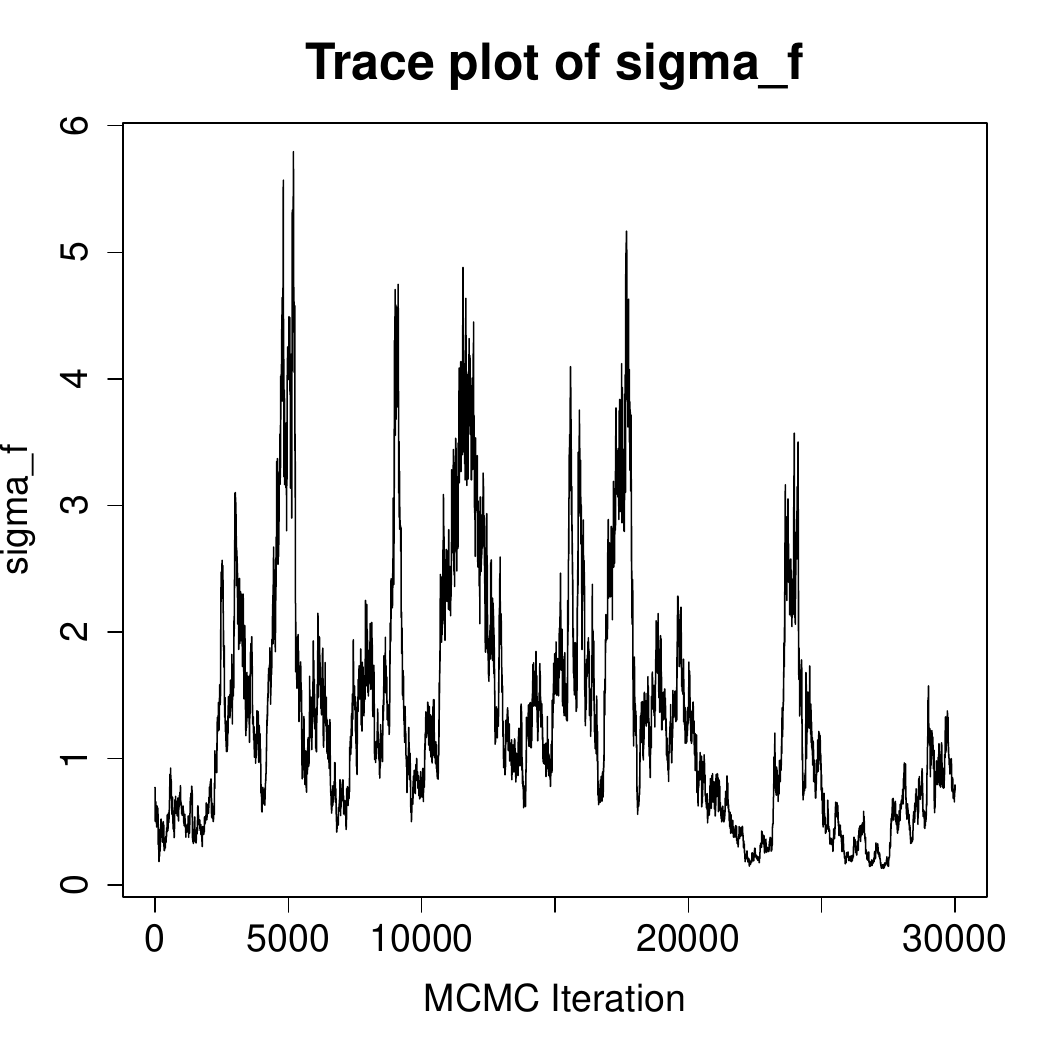}
        }%
        \subfigure[]{%
           \label{fig:smoothnesseta}
           \includegraphics[width=0.29\textwidth,height=0.16\textheight]{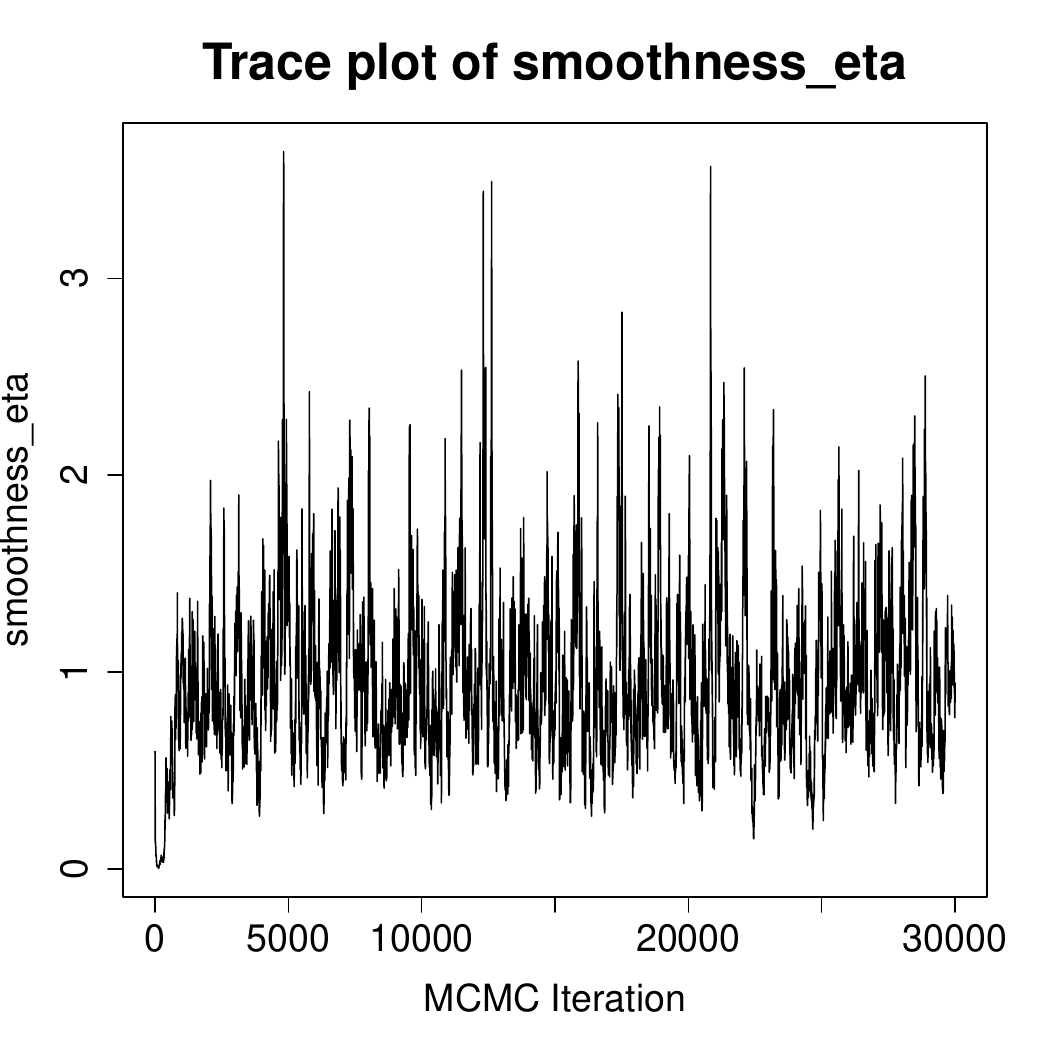}
        }\\ 

        \subfigure[]{%
            \label{fig:sigmaehist}
            \includegraphics[width=0.31\textwidth,height=0.16\textheight]{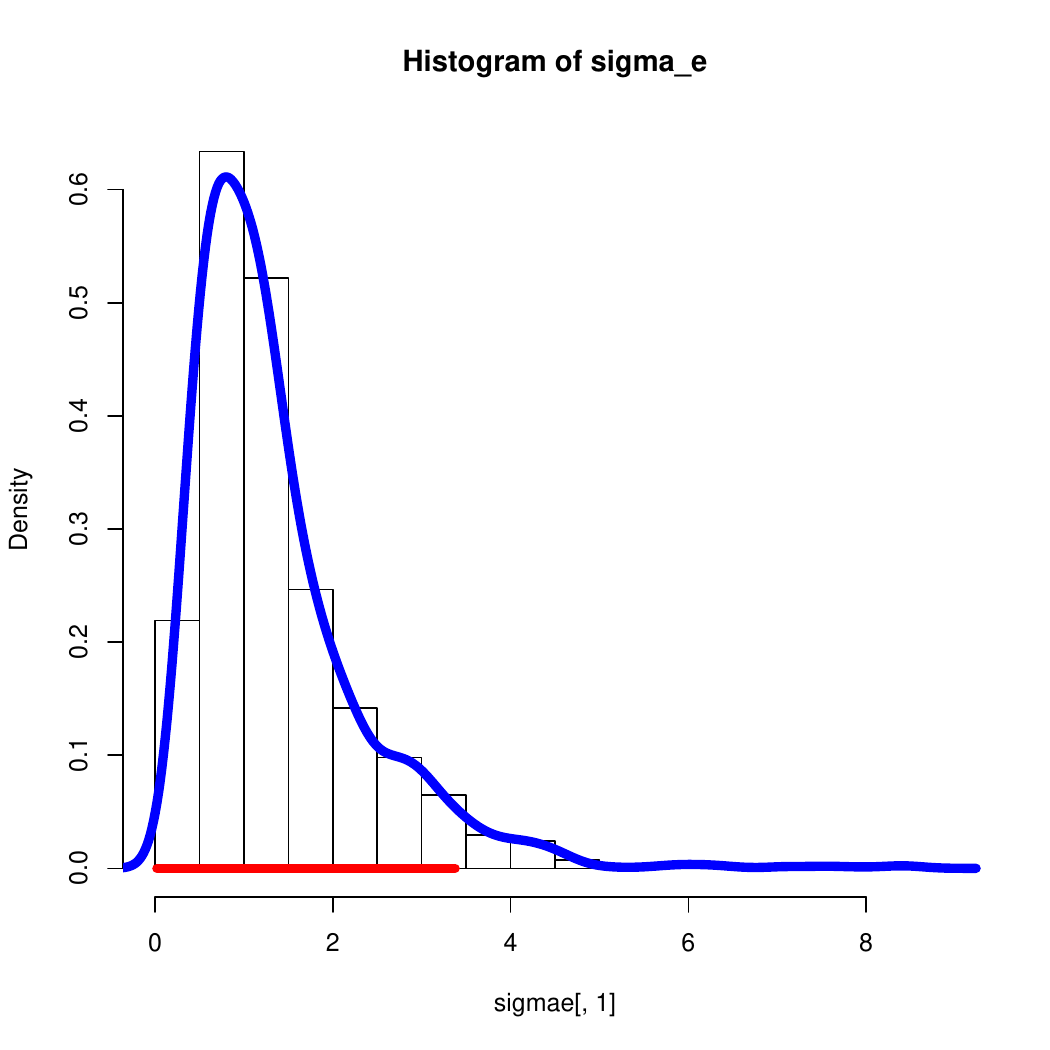}
        }%
        \subfigure[]{%
           \label{fig:sigmafhist}
           \includegraphics[width=0.31\textwidth,height=0.16\textheight]{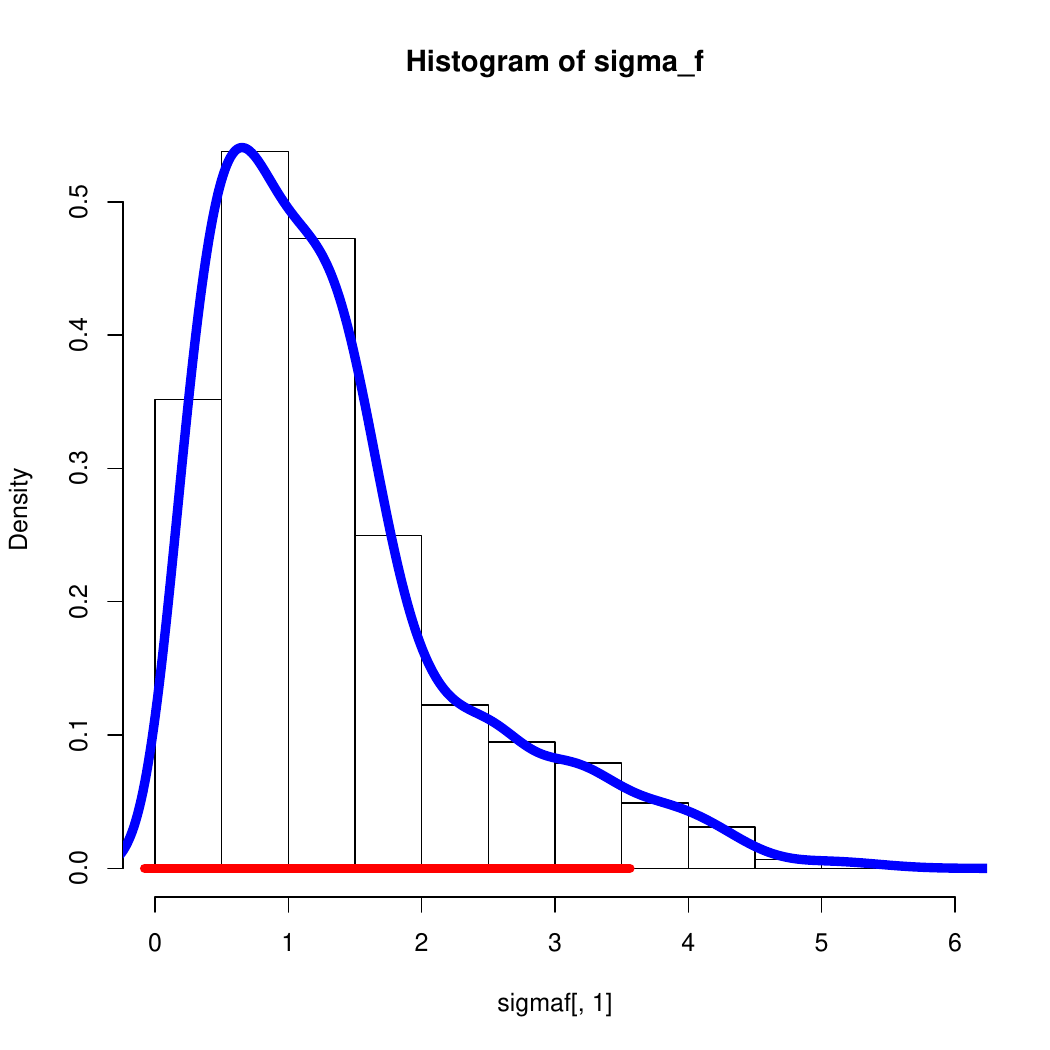}
        }%
        \subfigure[]{%
           \label{fig:smoothnessetahist}
           \includegraphics[width=0.29\textwidth,height=0.16\textheight]{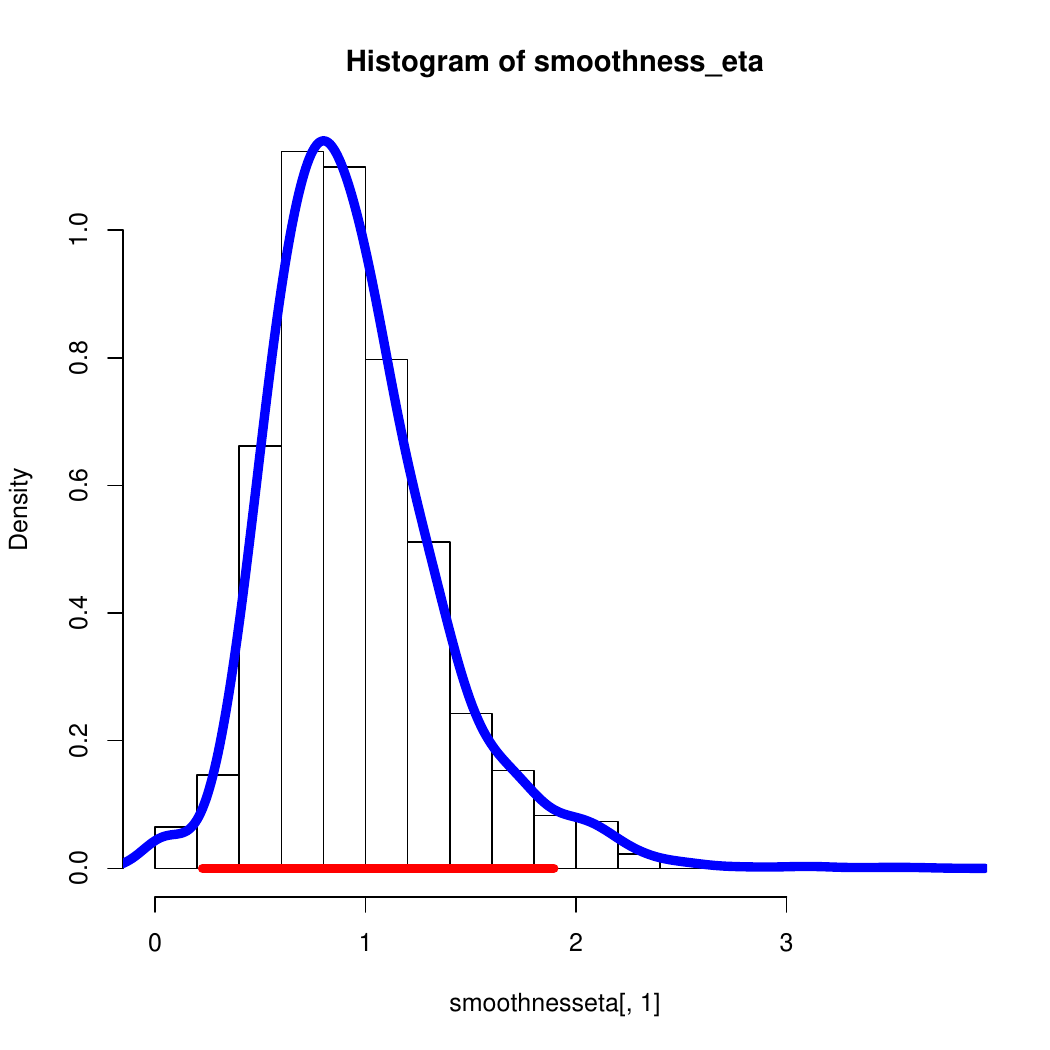}
        }\\ 
                 
    \end{center}
\caption{Panels (a), (b) and (c) correspond to the MCMC trace plots of parameter $\sigma^{2}_{\epsilon}$, $\sigma^{2}_{f}$ and $\lambda_{\eta}$ respectively. Panels (d), (e) and (f) correspond to the posterior distributions (histograms based on post burn in samples) of parameter $\sigma^{2}_{\epsilon}$, $\sigma^{2}_{f}$ and $\lambda_{\eta}$ respectively. The blue curve, superimposed on the respective histogram denotes the kernel density estimate based on Gaussian kernel and the red line denotes the HPD interval based on the kernel density estimator.}	
    \label{fig:subfigures5}
\end{figure}

Like the model parameters we also obtain the posterior distributions of the latent variables $X(\bold{s}_{i},t)$s associated with the training dataset, the latent variables $X(\bold{s}^{*}_{j},t)$s associated with the test dataset and the posterior predictive distributions of the observed test data $Y(\bold{s}^{*}_{j},t)$s. Below we present them (in forms of histograms). We also present the trace plots associated with them. Note, that there are a whole bunch of histograms and trace plots each associated with the latent and observed variables at each spatio-temporal coordinate. Here we present the histograms and trace plots associated with the latent and observed variable at particular spatio-temporal coordinate.
\begin{figure}[H]
     \begin{center}
        \subfigure[]{%
            \label{fig:x}
            \includegraphics[width=0.31\textwidth,height=0.16\textheight]{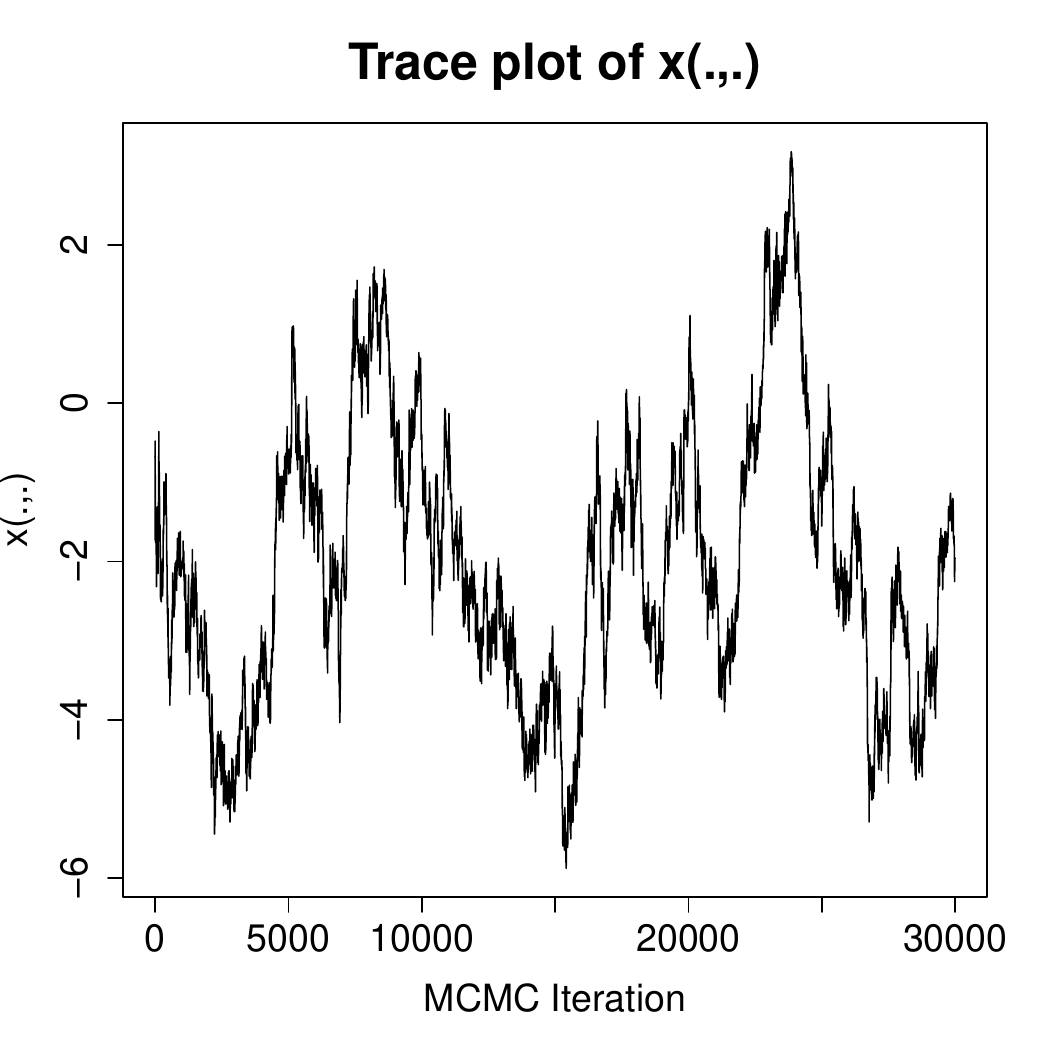}
        }%
        \subfigure[]{%
           \label{fig:xstar}
           \includegraphics[width=0.31\textwidth,height=0.16\textheight]{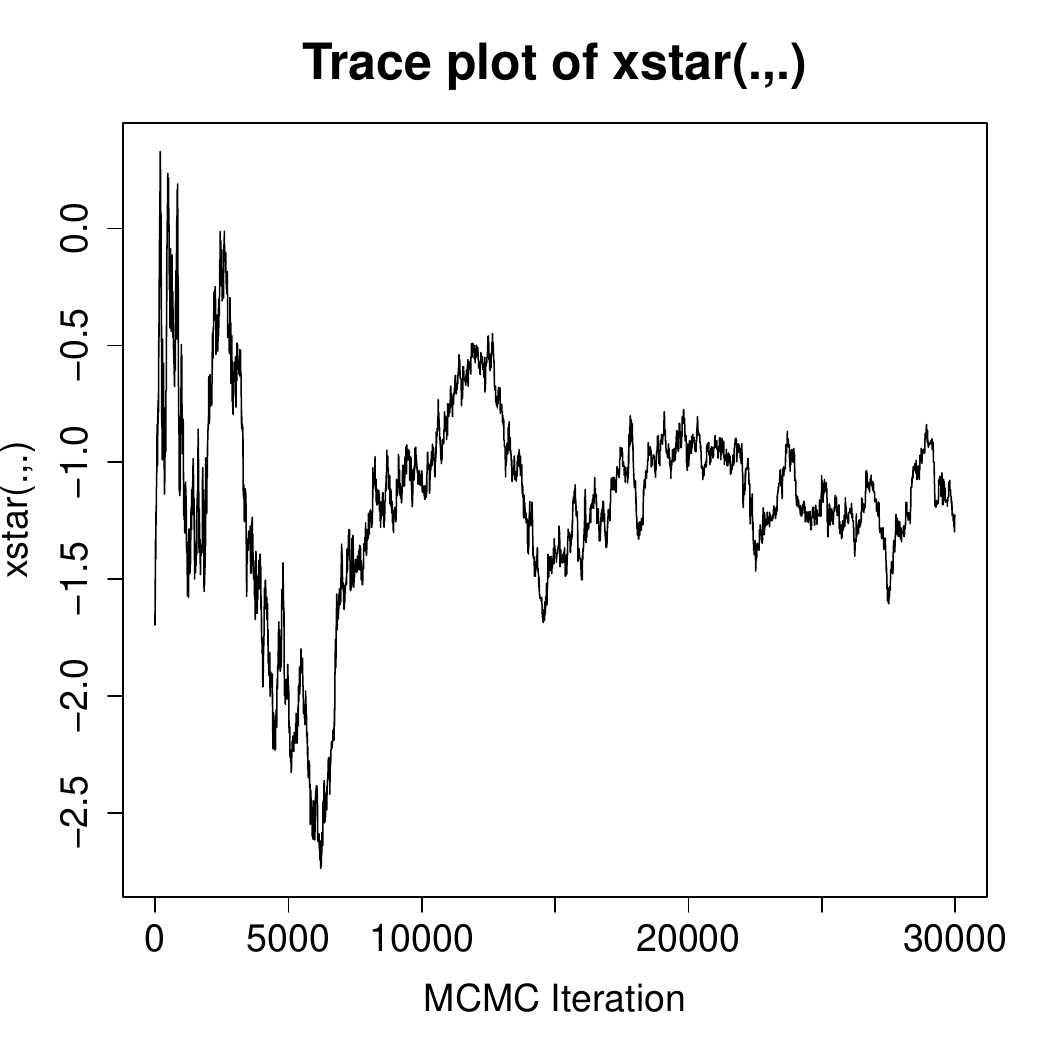}
        }%
        \subfigure[]{%
           \label{fig:ystar}
           \includegraphics[width=0.29\textwidth,height=0.16\textheight]{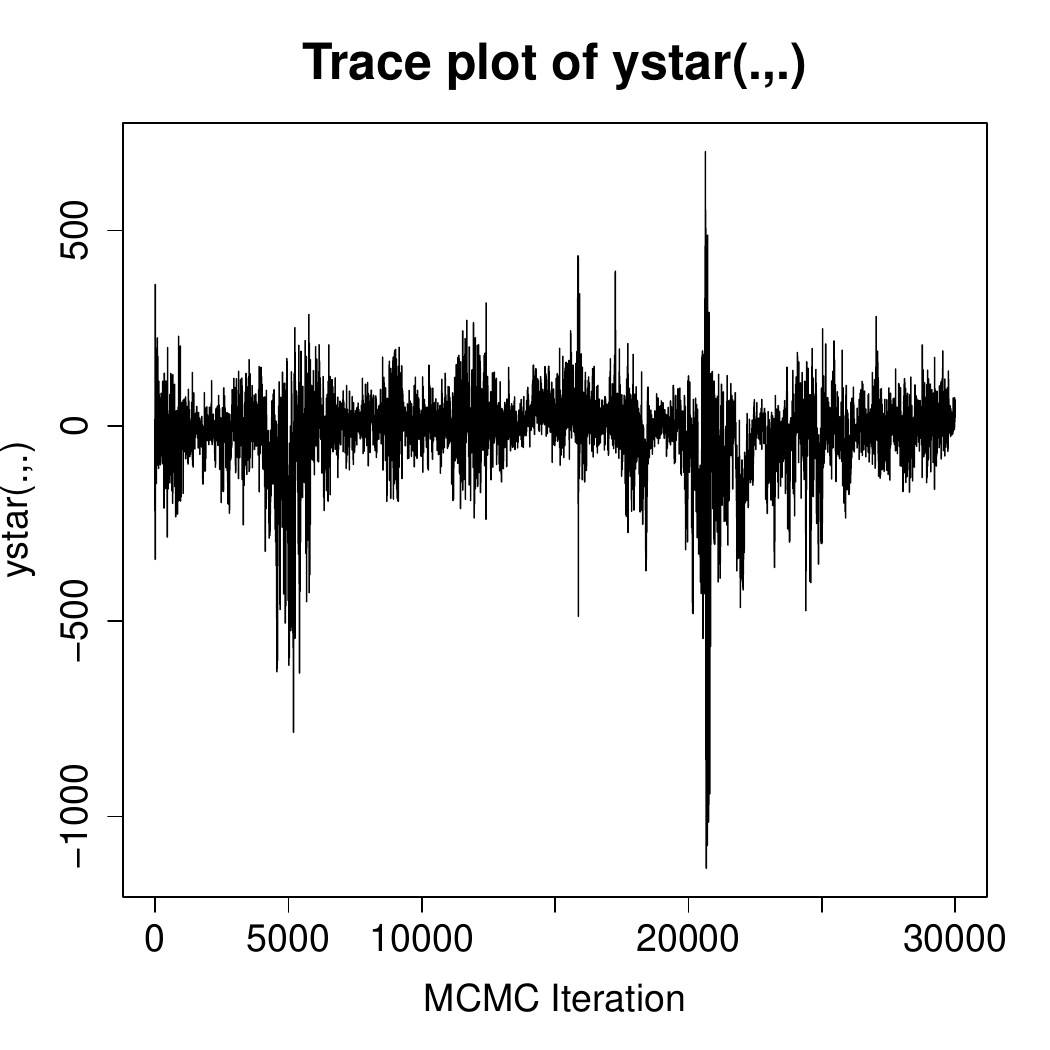}
        }\\ 

        \subfigure[]{%
            \label{fig:xhist}
            \includegraphics[width=0.31\textwidth,height=0.16\textheight]{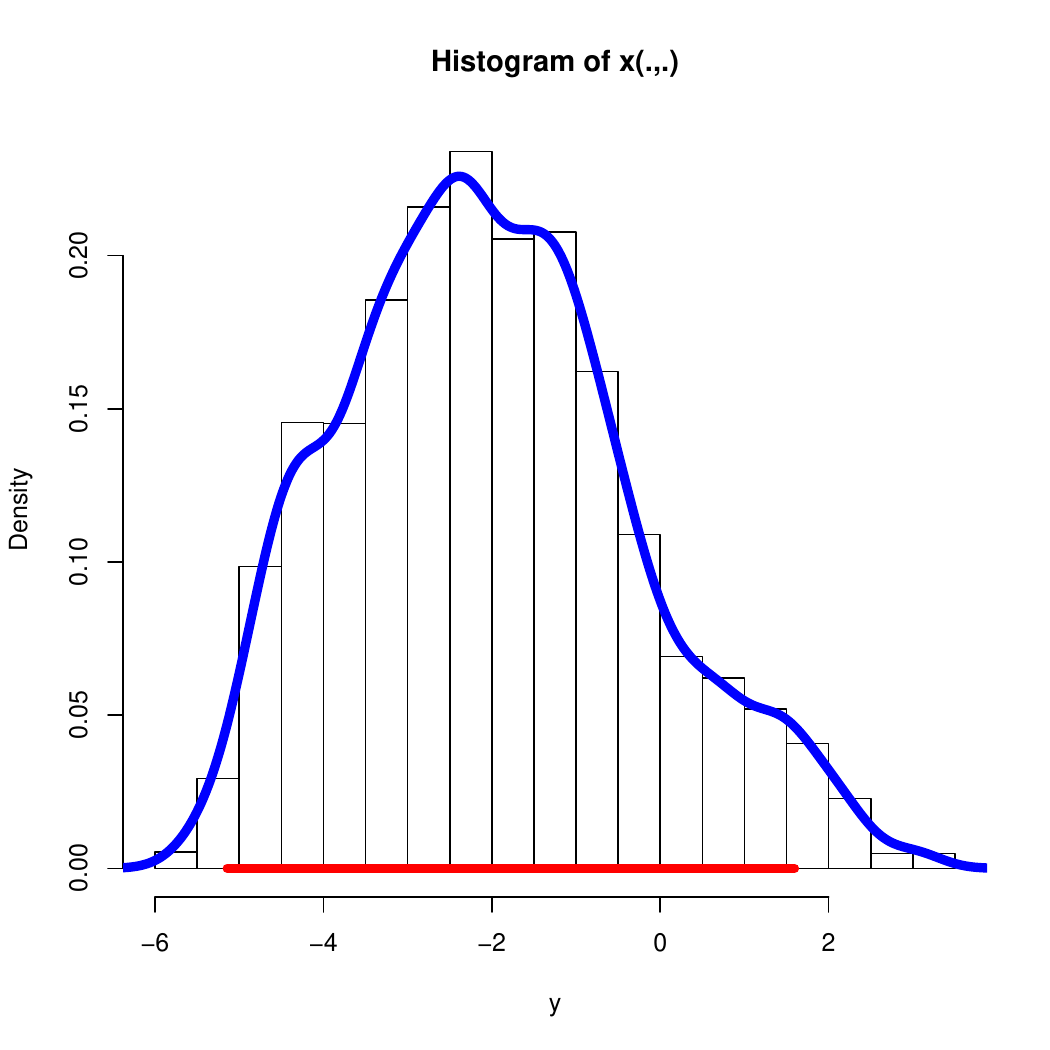}
        }%
        \subfigure[]{%
           \label{fig:xstarhist}
           \includegraphics[width=0.31\textwidth,height=0.16\textheight]{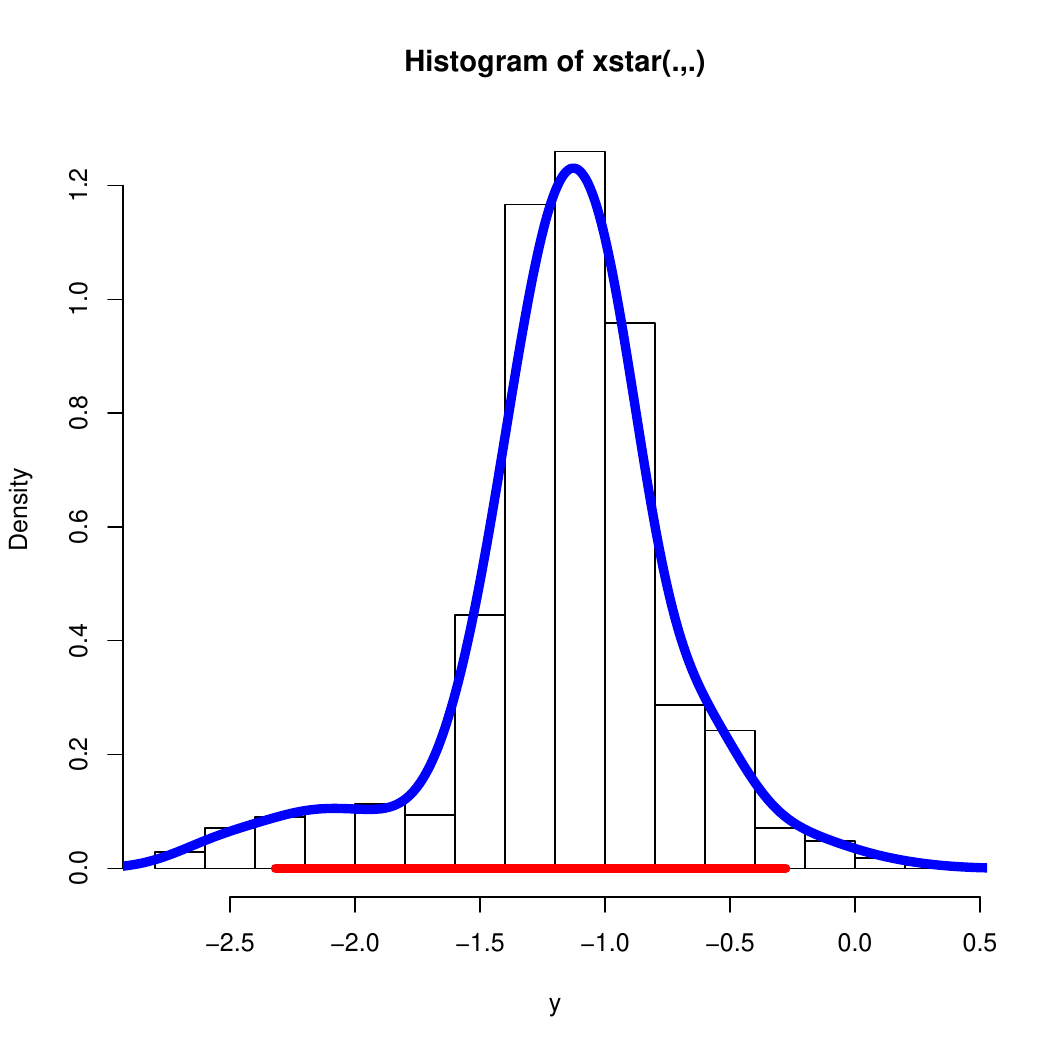}
        }%
        \subfigure[]{%
           \label{fig:ystarhist}
           \includegraphics[width=0.29\textwidth,height=0.16\textheight]{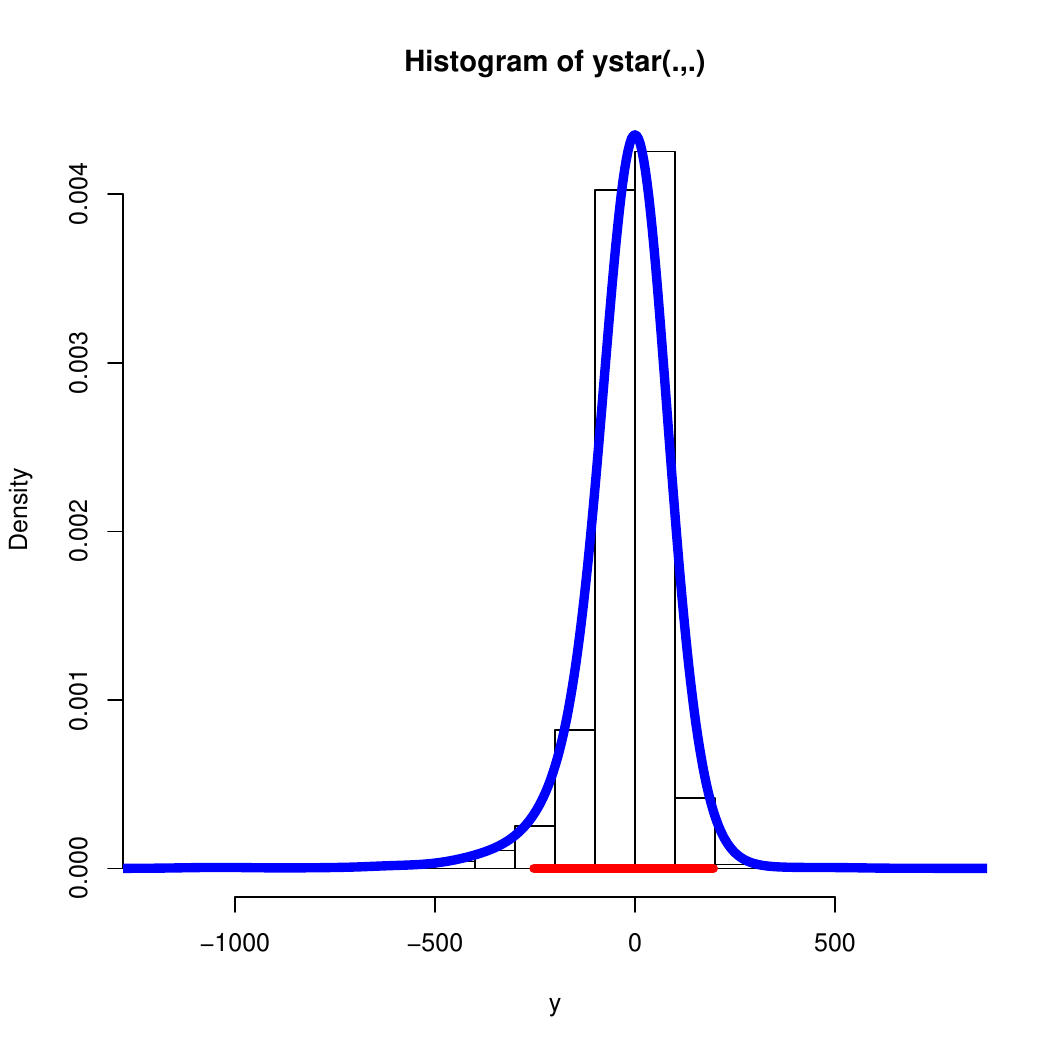}
        }\\ 
                 
    \end{center}
\caption{Panels (a), (b) and (c) correspond to the MCMC trace plots of $X(\bold{s}_{i},t)$, $X(\bold{s}^{*}_{j},t)$ and $Y(\bold{s}^{*}_{j},t)$ respectively. $X(\bold{s}_{i},t)$ is the latent variable at spatio-temporal training data location $(\bold{s}_{i},t)$ and $X(\bold{s}^{*}_{j},t)$ is the latent variable at spatio-temporal test data location $(\bold{s}^{*}_{j},t)$. $Y(\bold{s}^{*}_{j},t)$ denotes the observed variable at spatio-temporal test data location $(\bold{s}^{*}_{j},t)$. Panels (d), (e) and (f) correspond to the posterior distributions (histograms based on post burn in samples) of $X(\bold{s}_{i},t)$, $X(\bold{s}^{*}_{j},t)$ and $Y(\bold{s}^{*}_{j},t)$ respectively. The blue curve, superimposed on the respective histogram denotes the kernel density estimate based on Gaussian kernel and the red line denotes the HPD interval based on the kernel density estimator.}	
    \label{fig:subfigures6}
\end{figure}

Finally, we present the time series of test data and corresponding $95\%$ Bayesian symmetric prediction band at spatial locations $\bold{s}_{19}$, $\bold{s}_{41}$ and $\bold{s}_{49}$. The time series of test data is well within the $95\%$ Bayesian symmetric prediction band at all of the $3$ locations. Similar result is observed for other spatial test data locations as well.
\begin{figure}[H]
     \begin{center}
        \subfigure[]{%
            \label{fig:s19}
            \includegraphics[width=0.31\textwidth,height=0.16\textheight]{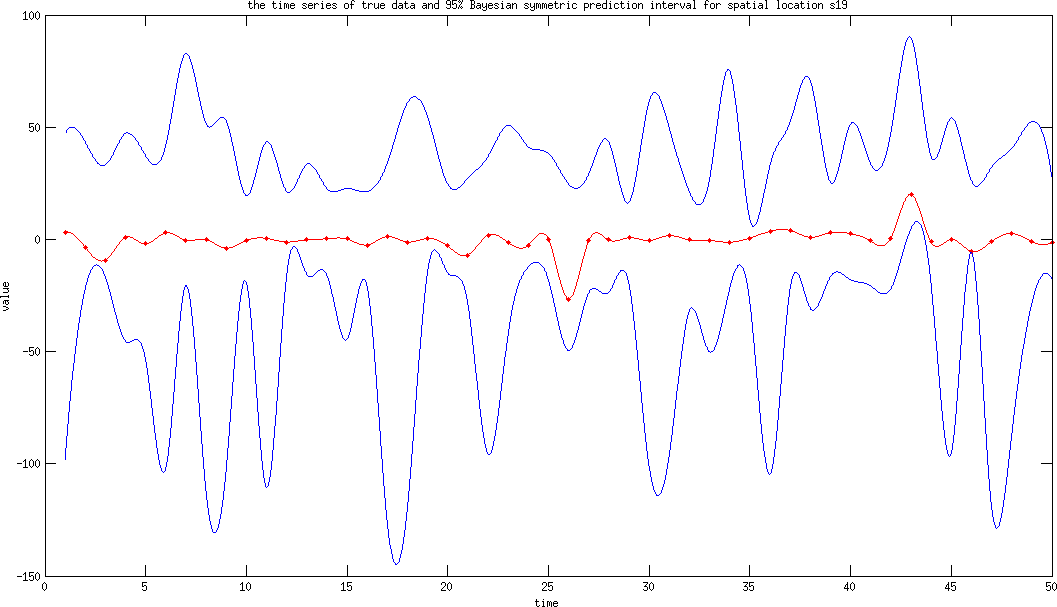}
        }%
        \subfigure[]{%
           \label{fig:s41}
           \includegraphics[width=0.31\textwidth,height=0.16\textheight]{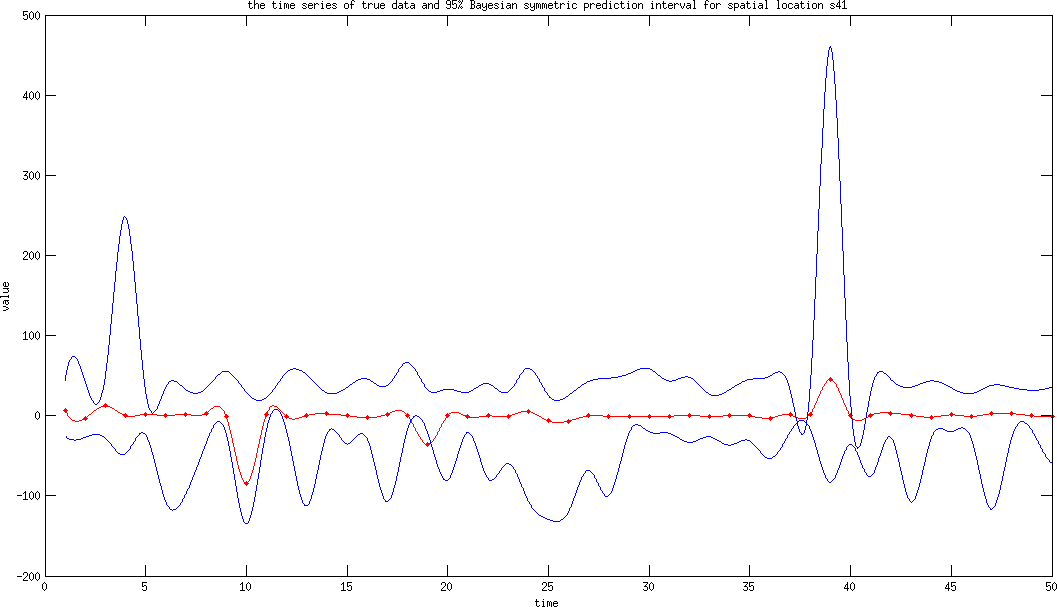}
        }%
        \subfigure[]{%
           \label{fig:s49}
           \includegraphics[width=0.29\textwidth,height=0.16\textheight]{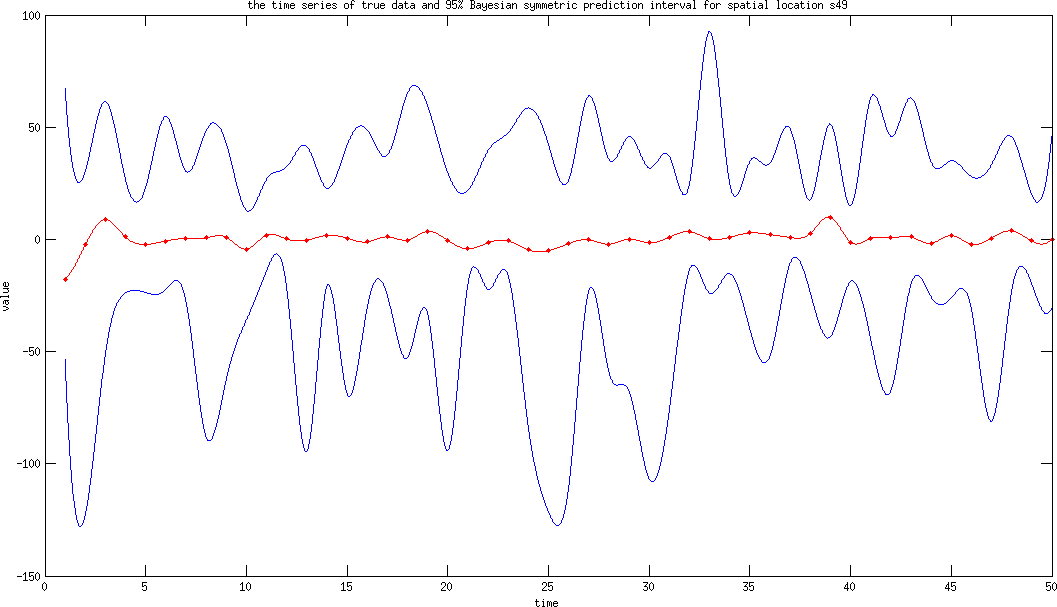}
        }\\ 

    \end{center}
\caption{Specimen pictures showing test data and prediction bands. Panel
(a) displays the test data and the 95\% Bayesian symmetric
prediction band given by the GRFDSTM at a particular representative
spatial test data location $\bold{s}_{19}$ for $t=1,2,\cdots,50$; the red starred line represents the time series of the test data at test
data location $\bold{s}_{19}$ for $t=1,2,\cdots,50$ and a smooth red curve is interpolated through the test data; the blue
band represents the prediction band associated with the GRFDSTM. Panels (b) and (c) display similar plots but for spatial test data locations $\bold{s}_{41}$ and $\bold{s}_{49}$ respectively.}	
    \label{fig:subfigures7}
\end{figure}

\subsection{Realdata Analysis : $SO_{2}$ Pollution over Europe}
Air pollution over large geographical regions is a topic of wide range of studies involving statistics and other disciplines. 
Among them, statistical modeling of pollution caused by $SO_{2}$ draws considerable attention. Here we consider a $SO_{2}$ 
pollution dataset over Europe. The dataset consists of monthly measurement of sulphur dioxide 
pollution observed at $46$ monitoring stations spread over Europe, for $60$ months, starting from January 1997 to December 2001. This dataset is a part of the data collected through the `European monitoring and evaluation programme' (EMEP) 
which co-ordinates 
the monitoring of airborne pollution over Europe. Further information is available at 
{\tt {http://www.emep.int}}, and the dataset is freely available there. Simple exploratory analysis reveals the presence of seasonal component and negligible trend component. It is important to note that instead of direct measurement what we have is the measurement of natural logarithm of $SO_{2}$ pollution for these stations. 

 \begin{figure}[H]
     \begin{center}
        \subfigure[]{%
            \label{fig:realmap}
            \includegraphics[width=0.40\textwidth,height=0.20\textheight]{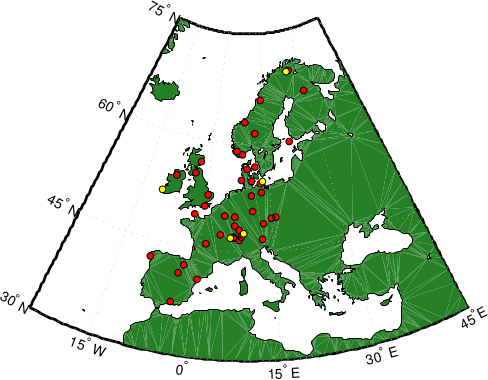}
        }
    \end{center}
    \caption{%
      Geographical location of the monitoring sites. The yellow circles indicate the sites with large proportion of missing data.
        }%
   \label{fig:subfigures5}
\end{figure}

The locations of the sites are shown in Figure \ref{fig:subfigures5}. 
Among the $46$ monitoring sites, there are $5$ sites (yellow circles in Figure \ref{fig:subfigures5}) where there are large proportions of missing data, making the data associated with those sites highly unreliable. So, we fit the GRFDSTM as specified in Section \ref{section:Prior_specification_fitting} to the available data and fully reconstruct the $SO_{2}$ pollution time series at those $5$ sites. The prior structure used for fitting the GRFDSTM is similar to that of the simulation study. We used diffused bivariate normal priors for $(\beta_{0g},\beta_{1g})$ and $(\beta_{0f},\beta_{1f})$. We consider the squared exponential covariance kernel with the following representation $c(\bold{u},\bold{v})=\sigma^2 e^{-\lambda||\bold{u}-\bold{v}||^2}$. Associated with five covariance kernels  $c_{f}(\cdot,\cdot),c_{g}(\cdot,\cdot),c_{\epsilon}(\cdot,\cdot),c_{\eta}(\cdot,\cdot), c_{0}(\cdot,\cdot)$ we have five scale parameters $\sigma_{f}^2,\sigma_{g}^2, \sigma_{\epsilon}^2,\sigma_{\eta}^2,\sigma_{0}^2$ and five 
smoothness parameters $\lambda_{f},\lambda_{g},
\lambda_{\epsilon},\lambda_{\eta},\lambda_{0}$. Recall that among them $\lambda_{f},\lambda_{g},\lambda_{0},\sigma_{0}^2$ are not really parameters but some fixed entities. For the rest of the scale and smoothness parameters, i.e. $\sigma_{f}^2,\sigma_{g}^2, \sigma_{\epsilon}^2,\sigma_{\eta}^2,\lambda_{\epsilon},\lambda_{\eta}$, we use independent lognormal priors. As discussed earlier, the lognormal priors associated with $\lambda_{\epsilon},\lambda_{\eta}$ are taken to be highly concentrated to avoid ill-conditioning of large covariance matrices, encountered during MCMC computation. The hyper parameters associated with the independent priors are selected based on minimizing $D_{0.5}$ goodness of fit statistic over multiple MCMC pilot runs on smaller datasets. 

Since the monitoring stations are spread over a large geographical region and distance stretches horizontally as latitude 
increases, the use of simple longitude and latitude as spatial coordinates would not be appropriate. 
The Lambert (or Schmidt) projection addresses this
problem by preserving the area. This projection is defined by the transformation of longitude and latitude, 
expressed in radians as $\psi $ and $\phi$, to the new co-ordinate system 
$\mathbf{s}=(2\sin(\frac{\pi}{4}-\frac{\phi}{2})\sin(\psi),-2\sin(\frac{\pi}{4}-\frac{\phi}{2})\cos(\psi))$. 
However, with respect to the temporal coordinate we simply take one month as one unit of time.

We implemented an MCMC chain with sufficiently large burn-in ($60,000$); visual inspection and tests based on CODA package suggests satisfactory convergence. Based on the $20,000$ post burn-in MCMC samples, we calculate the 95\% Bayesian symmetric prediction intervals associated with the posterior predictive distributions of $\ln{SO_{2}}$ at the $5$ sites, where we have a large proportion of missing data. 

Different scenarios are encountered at those $5$ monitoring sites, where we reconstruct the $\ln{SO_{2}}$ time series. At site $CH03$, the data is available for the first $36$ months of the time period covered. For site $NO30$, data associated with only 
the first three months is available. Data is available both at the beginning and at the end of the time period considered, for the site $IE01$ and we don't have any data for site $DE06$. Diagrams of the fully reconstructed time series at three monitoring sites are presented in Figure \ref{fig:subfigures6}. 

\begin{figure}[H]
     \begin{center}
        \subfigure[]{%
            \label{fig:DE06}
            \includegraphics[width=0.31\textwidth,height=0.16\textheight]{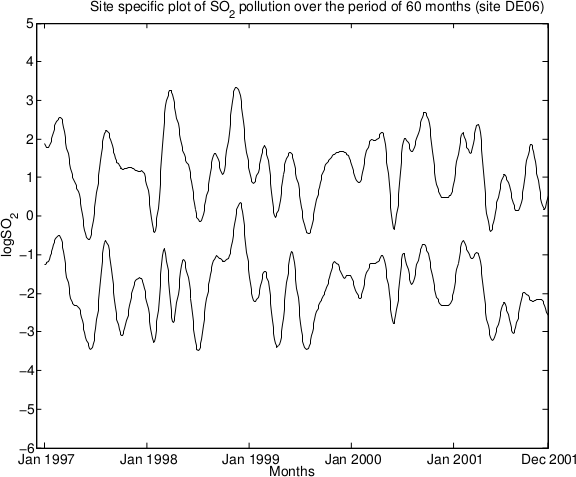}
        }%
        \subfigure[]{%
           \label{fig:IE01}
           \includegraphics[width=0.31\textwidth,height=0.16\textheight]{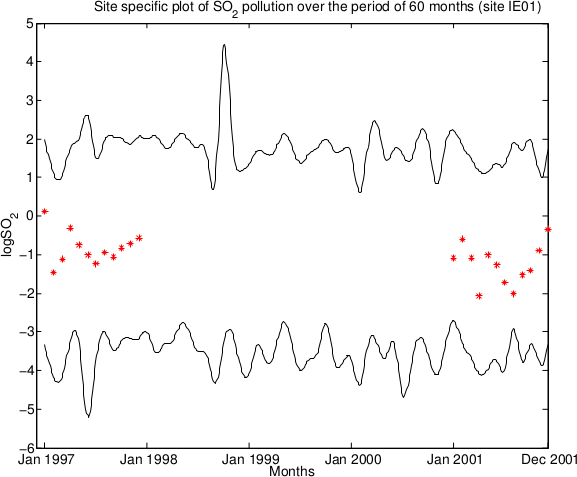}
        }%
        \subfigure[]{%
           \label{fig:NO30}
           \includegraphics[width=0.31\textwidth,height=0.16\textheight]{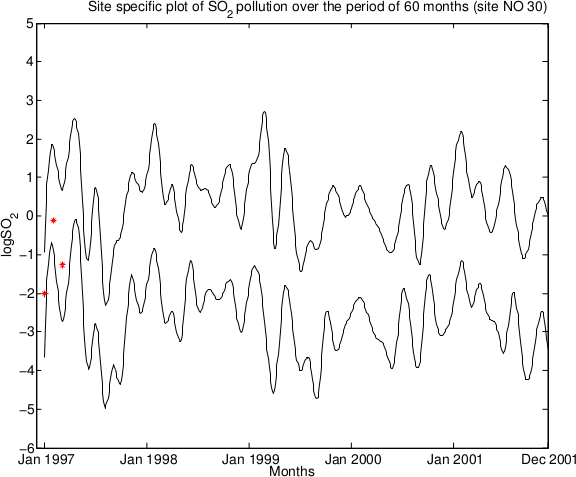}
        }\\ 

    \end{center}
    \caption{Reconstruction of time series of $logSO_{2}$ at monitoring sites $DE06, IE01$ and $NO30$. the black band represent the $95\%$ Bayesian symmetric prediction interval obtained by fitting the GRFDSTM. The true data points, whenever available, are denoted by red stars.}	
    \label{fig:subfigures6}
\end{figure}

The prediction band corresponding to the monitoring site $IE01$ is much wider and the reconstructed time series at that location is less accurate owing to the fact that its neighbouring stations are quite far away compared to the other sites. Note that in calculating the prediction interval we consider the $95\%$ Bayesian symmetric prediction interval and not the $95\%$  
highest posterior density (HPD) interval. The computational cost for HPD is much larger but the gain is little since the spatio-temporal location specific predictive distributions are almost symmetric and there is little difference between Bayesian symmetric prediction interval and the HPD interval.

\section{Discussion and Concluding Remarks}

Discrete-time spatial time series data often comes with additional covariate information. A simple modification of the GRFDSTM in the following way is capable of handling the covariates 
\begin{align*}
     Y(\bold{s},t)&=f(X(\bold{s},t),Z(\bold{s},t))+\epsilon(\bold{s},t);\\
     X(\bold{s},t)&=g(X(\bold{s},t-1))+\eta(\bold{s},t),
\end{align*}
where $Z(\bold{s},t)$ is the covariate process and $f(\cdot,\cdot)$ is now a Gaussian function on $\mathbb{R}^2$. The rest of the theory remains the same as before. In fact, we can have $k$ different covariate processes 
$Z_{1}(\bold{s},t),Z_{2}(\bold{s},t),\cdots,Z_{k}(\bold{s},t)$, in which case $f(\cdot,\cdot,\cdots,\cdot)$ will be a Gaussian function on $\mathbb{R}^{k+1}$. 
Note that, for this model, the number of regression parameters in the mean function of the modeled Gaussian
process and the number of smoothness parameters increase with the number of covariates, but since the dimensions 
of the matrices remain unchanged, the computational cost does not increase significantly. 
Another direction for extension is the time varying version of the GRFDSTM. Such modification is particularly useful when the time interval on which we are observing the data is very wide so that it is unlikely for the functions $f$ and $g$ to remain invariant with respect to time. In the context of purely temporal state space models, \cite{Ghosh14} 
consider time-varying functions $f_t(\cdot)$ and $g_t(\cdot)$, which they re-write as $f(\cdot,t)$ and $g(\cdot,t)$,
respectively. In other words, they consider the time component as an argument of their random functions $f$ and $g$, which are modeled by Gaussian functions in the usual manner. Such ideas can be easily adopted in the GRFDSTM. Currently, we are working on these extensions.

In brief, we recommend using an appropriate NLDSTM model when one clearly knows that the underlying dynamics is nonlinear and also has strong knowledge about the dynamics. The GRFDSTM with possibly nonlinear mean functional associated with $g(\cdot)$ and $f(\cdot)$ may be considered alternatively, but the performance is generally sub-optimal. However, under no or weak knowledge regarding the dynamics under the spatio-temporal process the GRFDSTM with linear mean functional would exhibit superior predictive performance. The computational cost is more than the prediction by the LDSTMs or NLDSTMs, but the gain is substantial.

\section*{Acknowledgments} The research of the first author is fully funded by CSIR SPM Fellowship SPM-07/093(0079)/2010-EMR-I, Govt. of India. We thank `European monitoring and evaluation programme' (EMEP) for the $SO_{2}$ dataset. We are thankful to Moumita Das for many fruitful discussions on this article.

\section*{Appendix}

Before proving the theorems let us make a notational clarification. The notations $[\bold{X}|\bold{Y}],[\bold{x}|\bold{y}]$ 
and $[\bold{X}=\bold{x}|\bold{Y}=\bold{y}]$ are equivalent and throughout this section they will denote the value of conditional pdf 
of $\bold{X}$ given $\bold{Y}=\bold{y}$ at $\bold{X}=\bold{x}$.

\begin{proof}[Proof of Proposition \ref{propn:measurable}] 

For this proof we assume that there exist continuous modifications of the Gaussian processes that we consider, that is,
there exist processes with sample paths that are continuous everywhere, not just almost everywhere, and
that our Gaussian processes equals such processes with probability one (see, for example, see \cite{Adler07} for details). Existence of such continous modifications are guaranteed under the correlation structure that we consider for our Gaussian processes. Note that sometimes the term continuous modification is used in a weaker sense, i.e. to denote a version that is continuous almost everywhere. However, a continuous everywhere version is obtained easily from that by mapping a continuous sample path to itself and a discontinuous sample path to zero function.

Let us first notice that, $\exists$ a probability space $(\Omega,\mathcal{F},P)$ such that
$$(g(x),X(\bold{s}_{1},0),\cdots,X(\bold{s}_{n},0)):
(\Omega,\mathcal{F})\rightarrow (C(\mathbb{R}),\mathcal{A})\bigotimes (\mathbb{R}^n,\mathcal{B}(\mathbb{R}^n))$$ 
where $C(\mathbb{R})$ is the space of all real valued continuous functions on $\mathbb{R}$ and $\mathcal{A}$ is the 
Borel sigma field obtained from the topology of compact convergence on the space $C(\mathbb{R})$. Such a joint measurability 
result need not hold unless $g(x)$ and $(X(\bold{s}_{1},0),\cdots,X(\bold{s}_{n},0))$ are independent. Now, we will show 
that $g(X(\bold{s}_{1},0))$ is a measurable real valued function or a proper random variable. To show this, first see that 
$g(X(\bold{s}_{1},0)):\Omega\rightarrow\mathbb{R}$ can be written as $T(g(\cdot),X(\bold{s}_{1},0))$ where 
$T:C(\mathbb{R})\bigotimes\mathbb{R}\rightarrow\mathbb{R}$ is a transformation such that, $T(g,x)=g(x)$ where $g$ is a real valued continuous function on $\mathbb{R}$ and $x$ is a real number.
\begin{lemma}
$T:C(\mathbb{R})\bigotimes\mathbb{R}\rightarrow\mathbb{R}$ is a continuous transformation where the topology associated with 
$C(\mathbb{R})$ is the topology of compact convergence and the topology associated with $\mathbb{R}$ is the usual Euclidean 
distance based topology on real numbers.
\label{lem:measurable}
\end{lemma}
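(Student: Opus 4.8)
The plan is to prove joint continuity of the evaluation map $T$ directly from the definition of the topology of compact convergence, using an $\varepsilon$--$\delta$ argument together with a single application of the triangle inequality. Fix an arbitrary point $(g,x)\in C(\mathbb{R})\times\mathbb{R}$ and let $\varepsilon>0$ be given; it suffices to exhibit a product neighbourhood of $(g,x)$ whose image under $T$ lies in $(g(x)-\varepsilon,\,g(x)+\varepsilon)$.

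First I would invoke the continuity of the fixed function $g$ at the point $x$: choose $\delta>0$ so that $|g(y)-g(x)|<\varepsilon/2$ whenever $|y-x|<\delta$. Set $K=[x-\delta,x+\delta]$, a compact subset of $\mathbb{R}$, and consider the basic open set
\[
U=\Big\{h\in C(\mathbb{R}):\ \sup_{y\in K}|h(y)-g(y)|<\varepsilon/2\Big\},
\]
which is a neighbourhood of $g$ in the topology of compact convergence, together with the Euclidean neighbourhood $V=(x-\delta,x+\delta)$ of $x$. Then for any $(h,y)\in U\times V$,
\[
|T(h,y)-g(x)|=|h(y)-g(x)|\le |h(y)-g(y)|+|g(y)-g(x)|<\tfrac{\varepsilon}{2}+\tfrac{\varepsilon}{2}=\varepsilon ,
\]
because $y\in K$ forces the first term below $\varepsilon/2$ by definition of $U$, and $|y-x|<\delta$ forces the second below $\varepsilon/2$ by the choice of $\delta$. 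Hence $T(U\times V)\subseteq(g(x)-\varepsilon,g(x)+\varepsilon)$, so $T$ is continuous at $(g,x)$; since $(g,x)$ was arbitrary, $T$ is continuous.

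I do not anticipate a genuine obstacle: this is essentially the standard fact that evaluation is jointly continuous for the compact-open topology on $C(\mathbb{R})$, which coincides with the topology of compact convergence here because $\mathbb{R}$ is locally compact Hausdorff. The one conceptual point worth flagging in the write-up is that this choice of topology is essential — with the topology of pointwise convergence $T$ would fail to be jointly continuous, and it is precisely the ability to control the function argument $h$ uniformly on a whole neighbourhood $K$ of $x$ (rather than only at $x$) that makes the triangle-inequality split go through. If a sequential phrasing is preferred, the identical two-term estimate applies to any convergent sequence $(g_n,x_n)\to(g,x)$ upon taking $K=\{x\}\cup\{x_n:n\ge 1\}$, which is compact; the two formulations are equivalent since the topology of compact convergence on $C(\mathbb{R})$ is metrizable ($\mathbb{R}$ being $\sigma$-compact). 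Either way, composing $T$ with the jointly measurable map $\omega\mapsto(g(\cdot),X(\bold{s}_1,0)(\omega))$ then yields measurability of $g(X(\bold{s}_1,0))$, which is the step Theorem 3.1 requires.
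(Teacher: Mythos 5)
Your proof is correct and rests on the same key step as the paper's: the triangle-inequality split $|h(y)-g(x)|\le|h(y)-g(y)|+|g(y)-g(x)|$, with the first term controlled by uniform closeness on a compact set containing the evaluation point and the second by continuity of $g$. The paper phrases this sequentially via the explicit metric $d(g_n,g)$ (taking $[-j_0,j_0]$ as the compact set that eventually contains the $x_n$) rather than via basic neighbourhoods, but as you note the two formulations are equivalent here, so this is essentially the same argument.
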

\begin{proof}[Proof of Lemma \ref{lem:measurable}] 
Let us consider the metric $d(g,g^\prime)=\mathlarger{\sum_{i=1}^{\infty}\frac{1}{2^i}
\frac{\sup\limits_{x\in [-i,i]}\abs{g(x)-g^\prime(x)}}{1+\sup\limits_{x\in[-i,i]}\abs{g(x)-g^\prime(x)}}}$. This metric induces the 
topology of compact convergence on the space $C(\mathbb{R})$. To prove continuity of $T$ one needs to show that 
$d(g_{n},g)\rightarrow 0$ and $\abs{x_{n}-x}\rightarrow 0 \Rightarrow \abs{T(g_{n},x_{n})-T(g,x)}\rightarrow 0$.
\\ 
Let us assume that $d(g_{n},g)\rightarrow 0$ and $\abs{x_{n}-x}\rightarrow 0$. So, $\exists \ N_{0}$ and $j_{0}$ such that 
$\forall \ n\geq N_{0}$, $x_{n}\in [-j_{0},j_{0}]$.
\\ 
Now, $\mathlarger{\frac{1}{2^{j_{0}}}}\sup\limits_{x\in [-j_{0},j_{0}]}\abs{g_{n}(x)-g(x)}\leq d(g_{n},g) \Rightarrow 
\sup\limits_{x\in [-j_{0},j_{0}]}\abs{g_{n}(x)-g(x)}\rightarrow 0$ 
\\ 
and $\abs{g(x_{n})-g(x)}\rightarrow 0$ because $g$ is continuous.
\\ 
But,
\begin{align*}
\abs{g_{n}(x_{n})-g(x)} \leq \abs{g_{n}(x_{n})-g(x_{n})} + \abs{g(x_{n})-g(x)}\\
\text{So,}\ \forall \ n\geq N_{0},\  \abs{g_{n}(x_{n})-g(x)} \leq \sup\limits_{x\in [-j_{0},j_{0}]}\abs{g_{n}(x)-g(x)} + \abs{g(x_{n})-g(x)}\\
\end{align*}
The RHS goes to $0$ as $n\rightarrow\infty$. Hence, $\abs{T(g_{n},x_{n})-T(g,x)}=\abs{g_{n}(x_{n})-g(x)}\rightarrow0$.
\end{proof}
Once continuity of $T$ is proved, note that $T^{-1}(U)$, for any open set $U \subseteq \mathcal{\mathbb{R}}$, is an open set 
in the product topology on the space $C(\mathbb{R})\bigotimes\mathbb{R}$. Hence, $T^{-1}(U)$ belongs to the Borel sigma field 
generated by this product topology which is in this case equivalent to the
product sigma field $\mathcal{A}\bigotimes\mathcal{B}(\mathbb{R})$ associated with $(C(\mathbb{R}),\mathcal{A})\bigotimes (\mathbb{R},\mathcal{B}(\mathbb{R}))$. 
This equivalence holds because both of the spaces $C(\mathbb{R})$ and $\mathbb{R}$ are separable. 
But $(g(x),X(\bold{s}_{1},0))$ is measurable with respect to $(\Omega,\mathcal{F})$ and $(C(\mathbb{R}),\mathcal{A})
\bigotimes (\mathbb{R},\mathcal{B}(\mathbb{R}))$. Hence, the inverse image of $T^{-1}(U)$ with respect to $(g(x),X(\bold{s}_{1},0))$ 
is in $\mathcal{F}$. So, the inverse image of any open set $U \subseteq \mathcal{\mathbb{R}}$ with respect to 
$g(X(\bold{s}_{1},0))$ is in $\mathcal{F}$. This proves the measurability of $g(X(\bold{s}_{1},0))$.

Following exactly same argument as above we can further prove that $g(X(\bold{s}_{2},0)),\cdots,g(X(\bold{s}_{n},0))$ are jointly measurable. 
Now, as $\eta(\bold{s},t)$ is independent of $g(x)$ and $(X(\bold{s}_{1},0),\cdots,X(\bold{s}_{n},0))$, we have the joint measurability of
$(X(\bold{s}_{1},1),\cdots,X(\bold{s}_{n},1))$ (See (\ref{eqn:npr2})). Infact, we can prove that $(g(x),X(\bold{s}_{1},1),\cdots,X(\bold{s}_{n},1))$ are jointly measurable. To do it we consider $T^{\prime}:C(\mathbb{R})\bigotimes\mathbb{R}\rightarrow C(\mathbb{R})\bigotimes\mathbb{R}$ such that $T^{\prime}(g,x)=(g,g(x))$ where $g$ is a real valued continuous function on $\mathbb{R}$ and $x$ is a real number. Then similarly as in case of $T$ we can prove that $T^{\prime}$ is also a continuous map which immediately implies that $(g,g(X(\bold{s}_{1},0)))$ are jointly measurable. Then $\eta(\bold{s},t)$ being independent of $g(x)$ and $(X(\bold{s}_{1},0),\cdots,X(\bold{s}_{n},0))$, implies the joint measurability of
$(g(x),X(\bold{s}_{1},1),\cdots,X(\bold{s}_{n},1))$.
Hence, starting with the joint measurability of 
$(g(x),X(\bold{s}_{1},0),\cdots,X(\bold{s}_{n},0))$ we prove the joint measurability of $(X(\bold{s}_{1},1),\cdots,X(\bold{s}_{n},1))$ and $(g(x),X(\bold{s}_{1},1),\cdots,X(\bold{s}_{n},1))$. 
Similarly, if we start with joint measurability of $(g(x),X(\bold{s}_{1},1),\cdots,X(\bold{s}_{n},1))$ we can prove the joint measurability 
of $(X(\bold{s}_{1},2),\cdots,X(\bold{s}_{n},2))$. Thus, the joint measurability of the whole collection of state variables 
${X(\bold{s}_{i},t)}$ $\forall\ i=1,2,\cdots,n;t=0,1,\cdots,T $ is mathematically established.

Now, to prove joint measurability of the collection of observed variables
${Y(\bold{s}_{i},t)}$ $\forall\ i=1,2,\cdots,n;t=1,\cdots,T $ recall the observational equation (1). 
Since, $f(x)$ takes values in $(C(\mathbb{R}),\mathcal{A})$ just as $g(x)$, 
and also since $\epsilon(\bold{s},t)$ is independent of $f(s)$ just as $\eta(\bold{s},t)$ is independent of $g(x)$, 
all the previous arguments go through in this case and joint measurability of ${Y(\bold{s}_{i},t)}$ $\forall\ i=1,2,\cdots,n;t=0,1,\cdots,T $ 
is established.

Finally, it remains to show that a valid spatio-temporal process is induced by this model. But it is immediate from the application 
of Kolmogorov consistency theorem. The consistency conditions of the theorem are trivially satisfied by our construction and hence the result follows.
\end{proof}

\begin{proof}[Alternative proof of Proposition \ref{propn:measurable}] 

In the previous proof, we need to assume the existence of continuous modification of the underlying Gaussian processes.
Here we present an alternative proof, which is lengthier, but remains valid even if the underlying Gaussian processes do not admit a continuous modification. In fact, the alternative proof go through if the underlying process admits at most countable number of discontinuities and hence might be of independent interest in the study of iterated random functions.

Note that it is possible to represent any stochastic process $\{Z(\bold{s});\bold{s}\in T\}$, for fixed
$\bold{s}$ as a random variable $\omega\mapsto Z(\bold{s},\omega)$, where $\omega\in\Omega$;
$\Omega$ being the set of all functions from $T$ into $\mathbb R$. 
Also, fixing $\omega\in\Omega$, the function $\bold{s}\mapsto Z(\bold{s},\omega);~\bold{s}\in T$,
represents a path of $Z(\bold{s});\bold{s}\in T$. Indeed, we can identify $\omega$ with the function
$\bold{s}\mapsto Z(\bold{s},\omega)$ from $T$ to $\mathbb R$; see, for example, \cite{Oksendal00}, for
a lucid discussion.  
This latter identification will be convenient for our purpose, and we adopt this for proving our result
on measurability.

Note that the $\sigma$-algebra $\mathcal F$ induced by $Z$
is generated by sets of the form
\[
\left\{\omega:\omega(\bold{s}_1)\in B_1,\omega(\bold{s}_2)\in B_2,\ldots,\omega(\bold{s}_k)\in B_k\right\},
\]
where $B_i\subset\mathbb R;i=1,\ldots,k$, are Borel sets in $\mathbb R$.   

In our case, the Gaussian process $g(\cdot)$ can be identified with $g(x)(\omega_1)=\omega_1(x)$, for any fixed
$x\in\mathbb R$ and $\omega_1\in\Omega_1$, where $\Omega_1$ is the set of all functions from $\mathbb R$ to $\mathbb R$. 
The initial Gaussian process $X(\cdot,0)$ can be identified with
$X(\bold{s},0)(\omega_2)=\omega_2(\bold{s})$, where $\bold{s}\in\mathbb R^d$ (although, we develop the GRFDSTM assuming $d=2$ this proof is given under the more general setting of $d\geq 2$) and $\omega_2\in\Omega_2$.
Here $\Omega_2$ is the set of all functions from $\mathbb R^d$ to $\mathbb R$. Let $\mathcal F_1$ and $\mathcal F_2$
be the Borel $\sigma$-fields associated with $\Omega_1$ and $\Omega_2$, respectively.

We first show that the composition of $g(\cdot)$ with 
$X(\cdot,0)$, given by $g(X(\bold{s},0))$ is a measurable random variable for any $\bold{s}$. Since $g$ and $X(\cdot,0)$ are independent, we need to consider the 
product space $\Omega_1\otimes\Omega_2$, and noting that $g(X(\bold{s},0)(\omega_2))(\omega_1)=\omega_1(\omega_2(\bold{s}))$,
where $(\omega_1,\omega_2)\in\Omega_1\otimes\Omega_2$, need to show that sets of the form
\[
A(\bold{s}_1,\ldots,\bold{s}_k)=\left\{(\omega_1,\omega_2):\omega_1(\omega_2(\bold{s}_1))\in B_1,\omega_1(\omega_2(\bold{s}_2))\in B_2,\ldots,
\omega_1(\omega_2(\bold{s}_k))\in B_k\right\},
\]
where $B_i\subset\mathbb R;i=1,\ldots,k$, are Borel sets in $\mathbb R$, are in $\mathcal F_1\otimes\mathcal F_2$,
the product Borel $\sigma$-field associated with $\Omega_1\otimes\Omega_2$.   
For our purpose, we let $B_i$ be of the form $[a_i,b_i]$ for real values $a_i<b_i$.

Now, suppose that $(\omega_1,\omega_2)\in A(\bold{s}_1,\ldots,\bold{s}_k)$.
Then $\omega_1(\omega_2(\bold{s}_i))\in [a_i,b_i]$, which implies that $\omega_2(\bold{s}_i)$ is at most a countable
union of sets of the form $[a^{(i)}_j,b^{(i)}_j];~j\in\mathcal D_i$, where $\mathcal D_i$ is a countable set of indices.
Also, it holds that $\omega_1(x^*)\in [a_i,b_i];~\forall~x^*\in\mathbb Q\cap\left\{\underset{j\in\mathcal D_i}
\cup [a^{(i)}_j,b^{(i)}_j]\right\}$.
Here $\mathbb Q$ is the countable set of rationals in $\mathbb R$. 
If necessary, we can envisage a countable set
$\mathcal D^*$ consisting of points of discontinuities of $\omega_1$. If $\xi$ is a point of discontinuity, then
$\omega_1(\xi)$ may be only the left limit of particular sequence $\{\omega_1(\xi_{1,m});m=1,2,\ldots\}$ or only 
the right limit of a 
particular sequence $\{\omega_1(\xi_{2,m});m=1,2,\ldots\}$, or $\omega_1(\xi)$ may be an isolated point, 
not reachable by sequences
of the above forms. 
It follows that $(\omega_1,\omega_2)$ must lie in
\[
A^*(\bold{s}_1,\ldots,\bold{s}_k)=\cap_{i=1}^k\left\{(\omega_1,\omega_2):\omega_1(x)\in [a_i,b_i]~\forall~x\in
\left(\mathbb Q\cap\left\{\underset{j\in\mathcal D_i}\cup [a^{(i)}_j,b^{(i)}_j]\right\}\right)\cup\mathcal D^*, 
\omega_2(\bold{s}_i)\in\underset{j\in\mathcal D_i}\cup [a^{(i)}_j,b^{(i)}_j]\right\}.
\]

Now, if $(\omega_1,\omega_2)\in A^*(\bold{s}_1,\ldots,\bold{s}_k)$, then, noting that for any point 
$x\in \underset{j\in\mathcal D_i}\cup [a^{(i)}_j,b^{(i)}_j]$ of $\omega_1$,
$\omega_1(x)=\underset{m\rightarrow\infty}\lim \omega_1(\xi_m)$, where
$\{\xi_m;m=1,2,\ldots\}\in \mathbb Q\cap\left\{\underset{j\in\mathcal D_i}\cup [a^{(i)}_j,b^{(i)}_j]\right\}$,
it is easily seen that $(\omega_1,\omega_2)\in A(\bold{s}_1,\ldots,\bold{s}_k)$. Hence, 
$A(\bold{s}_1,\ldots,\bold{s}_k)=A^*(\bold{s}_1,\ldots,\bold{s}_k)$. 

Now observe that $A^*(\bold{s}_1,\ldots,\bold{s}_k)$ is a finite intersection of countable union of measurable sets;
hence, $A^*(\bold{s}_1,\ldots,\bold{s}_k)$ is itself a measurable set.
In other words, we have proved that $g(X(\cdot,0))$ is measurable.
Now, as $\eta(\cdot,t)$ is independent of $g(\cdot)$ and $X(\cdot,0)$, it follows from (\ref{eqn:npr2}) that
$X(\cdot,1)$ is measurable.

To prove measurability of $X(\cdot,2)$, note that
\begin{align}
X(\bold{s},2)&=g(X(\bold{s},1))+\eta(\bold{s},2)\notag\\
&=g(g(X(\bold{s},0))+\eta(\bold{s},1))+\eta(\bold{s},2).
\label{eq:2nd}
\end{align}
The process $\eta(\cdot,1)$ requires introduction an extra sample space $\Omega_3$, so that
we can identify $\eta(\bold{s},1)(\omega_3)$ as $\omega_3(\bold{s})$. With this, we can represent
$g(g(X(\bold{s},0))+\eta(\bold{s},1))$ of (\ref{eq:2nd}) as
$\omega_1(\omega_1(\omega_2(\bold{s}))+\omega_3(\bold{s}))$. 

Now, $\omega_1(\omega_1(\omega_2(\bold{s}))+\omega_3(\bold{s}))\in [a_i,b_i]$ implies that
$\omega_1(\omega_2(\bold{s}))+\omega_3(\bold{s})\in\underset{j\in\mathcal D_i}\cup[a^{(j)}_i,b^{(j)}_i]$.
If $\omega_1(\omega_2(\bold{s}))+\omega_3(\bold{s})\in [a^{(k)}_i,b^{(k)}_i]$ for some $k\in\mathcal D_i$,
then the set of solutions is 
\begin{equation}
\underset{r\in\mathbb R}\cup\left\{\omega_1(\omega_2(\bold{s}))\in [a^{(k)}_i-r,b^{(k)}_i-r],w_3(\bold{s})=r\right\},
\label{eq:uncountable}
\end{equation}
where $\omega_1(\omega_2(\bold{s}))\in [a^{(k)}_i-r,b^{(k)}_i-r]$ implies, as before, that $\omega_2(\bold{s})$
belongs to a countable union of measurable sets in $\mathbb R$. Although the set (\ref{eq:uncountable}) 
is an uncountable union, following the technique used for proving measurability of $g(X(\cdot,0))$, we will intersect the set by $\mathbb Q$, 
the (countable) set of rationals in $\mathbb R$; this will render the intersection a countable set. The proof of measurability
then follows similarly as before. 

Proceeding likewise, we can prove measurability of $X(\cdot,t)$ is measurable for $t=2,3,\ldots$. Proceeding exactly in the same
way, we can also prove that $Y(\cdot,t);~t=1,2,\ldots,T$ are measurable. 
Moreover, it can be easily seen that the same methods employed for proving the above results on measurability can be
extended in a straightforward (albeit notationally cumbersome) manner to prove that the sets of the forms
\[
\left\{X(\bold{s}_i,t_i)\in [a_i,b_i];i=1,\ldots,k\right\}\ \ \mbox{and} \ \
\left\{Y(\bold{s}_i,t_i)\in [a_i,b_i];i=1,\ldots,k\right\}
\]
are also measurable. Furthermore, it can be easily verified that $X(\bold{s},t)$ and $Y(\bold{s},t)$ satisfy Kolmogorov's consistency
criteria. In other words, $X(\bold{s},t)$ and $Y(\bold{s},t)$ are well-defined stochastic processes in both space and time. 
\end{proof}

\begin{proof}[Proof of Theorem \ref{thm:state}]
Let us first observe that conditional on $g(x)$ our latent process satisfies the Markov property. 
That is, 
\begin{align}
&[(x(\bold{s}_{1},t),\cdots,x(\bold{s}_{n},t))\mid (g(x(\bold{s}_{1},t-1)),\cdots,g(x(\bold{s}_{n},t-1))),(x(\bold{s}_{1},t-1),\notag\\
&\quad\quad\cdots,x(\bold{s}_{n},t-1)),(x(\bold{s}_{1},t-2),\cdots,x(\bold{s}_{n},t-2)),\cdots,(x(\bold{s}_{1},0),\cdots,x(\bold{s}_{n},0))]\notag\\ 
&= [(x(\bold{s}_{1},t),\cdots,x(\bold{s}_{n},t))\mid (g(x(\bold{s}_{1},t-1)),\cdots,g(x(\bold{s}_{n},t-1))),(x(\bold{s}_{1},t-1),
\cdots,x(\bold{s}_{n},t-1))]\notag\\
&\sim \mathlarger{\frac{1}{|\mathbf{\Sigma_{\eta}}|^\frac{1}{2}}}
\exp\left[-\frac{1}{2}{\begin{pmatrix}x(\bold{s}_{1},t)-g(x(\bold{s}_{1},t-1))\\
x(\bold{s}_{2},t)-g(x(\bold{s}_{2},t-1))\\ 
\vdots\\
x(\bold{s}_{n},t)-g(x(\bold{s}_{n},t-1))\end{pmatrix}}^{\prime}{{\mathbf{\Sigma}}_{\eta}}^{-1}
{\begin{pmatrix}x(\bold{s}_{1},t)-g(x(\bold{s}_{1},t-1))\\
 x(\bold{s}_{2},t)-g(x(\bold{s}_{2},t-1))\\ 
\vdots\\
x(\bold{s}_{n},t)-g(x(\bold{s}_{n},t-1))\end{pmatrix}}\right],\notag
\end{align}

\vspace{3mm}
where $[x\mid y]$ denotes the conditional density of $X$ at $x$ given $Y=y$.
Now, let us represent $g(x(\bold{s}_{i},t-1))$ by $u(i,t)$ for all $i=1,\cdots,n$ and $t=1,2,\cdots,T$. 
Then repeatedly using the Markov property we have following
\begin{align*}
[x(\bold{s}_{1},T),\cdots,x(\bold{s}_{n},T),\cdots,x(\bold{s}_{1},0),\cdots,x(\bold{s}_{n},0) & \mid g(x(\bold{s}_{1},T-1)),\cdots\\ 
\cdots,g(x(\bold{s}_{n},T-1)),\cdots,g(x(\bold{s}_{1},0))& ,\cdots,g(x(\bold{s}_{n},0))]\\
\end{align*}
\begin{align*}
\sim
[x(\bold{s}_{1},T)& ,\cdots,x(\bold{s}_{n},T)\mid g(x(\bold{s}_{1},T-1)),\cdots,g(x(\bold{s}_{n},T-1)),
x(\bold{s}_{1},T-1),\cdots,x(\bold{s}_{n},T-1)]\times\\
\dotsm \times &[x(\bold{s}_{1},1),\cdots,x(\bold{s}_{n},1)\mid g(x(\bold{s}_{1},0)),\cdots,g(x(\bold{s}_{n},0)),
x(\bold{s}_{1},0),\cdots,x(\bold{s}_{n},0)]\\
& \times [x(\bold{s}_{1},0),\cdots,x(\bold{s}_{n},0)]\\
\vspace{3mm}
\sim \mathlarger{\frac{1}{(2\pi)^\frac{nT}{2}}}&\mathlarger{\frac{1}{|\mathbf{\Sigma_{\eta}}|^\frac{T}{2}}}\prod_{t=1}^{T}
\exp\left[-\frac{1}{2}{\begin{pmatrix}x(\bold{s}_{1},t)-u(1,t)\\x(\bold{s}_{2},t)-u(2,t)\\ 
\vdots\\
x(\bold{s}_{n},t)-u(n,t)\end{pmatrix}}^{\prime}{{\mathbf{\Sigma}}_{\eta}}^{-1}
{\begin{pmatrix}x(\bold{s}_{1},t)-u(1,t)\\x(\bold{s}_{2},t)-u(2,t)\\ \vdots\\x(\bold{s}_{n},t)-u(n,t)\end{pmatrix}}\right]\\
\vspace{4mm}
\times\mathlarger{\frac{1}{(2\pi)^\frac{n}{2}}}&\mathlarger{\frac{1}{|\mathbf{\Sigma}_{0}|^\frac{1}{2}}}
\exp\left[-\frac{1}{2}{\begin{pmatrix}x(\bold{s}_{1},0)-\mu_{01}\\
x(\bold{s}_{2},0)-\mu_{02}\\ 
\vdots\\
x(\bold{s}_{n},0)-\mu_{0n}\end{pmatrix}}^{\prime}{{\mathbf{\Sigma}}_{0}}^{-1}
{\begin{pmatrix}x(\bold{s}_{1},0)-\mu_{01}\\x(\bold{s}_{2},0)-\mu_{02}\\ \vdots\\x(\bold{s}_{n},0)-\mu_{0n}\end{pmatrix}}\right]
\end{align*}
But, this is the joint density of the state variables conditioned on $g(x)$. To obtain the joint density of the state variables 
one needs to marginalize it with respect to the Gaussian process $g(\cdot)$. 
After marginalization, the joint density takes the following form:
\begin{align*}
\mathlarger{\frac{1}{(2\pi)^\frac{n(T+1)}{2}}}\mathlarger{\frac{1}{|\mathbf{\Sigma}_{0}|^\frac{1}{2}}}
\mathlarger{\frac{1}{|\mathbf{\Sigma_{\eta}}|^\frac{T}{2}}}
\exp\left[-\frac{1}{2}{\begin{pmatrix}x(\bold{s}_{1},0)-\mu_{01}\\x(\bold{s}_{2},0)-\mu_{02}\\ 
\vdots\\
x(\bold{s}_{n},0)-\mu_{0n}\end{pmatrix}}^{\prime}{{\mathbf{\Sigma}}_{0}}^{-1}
{\begin{pmatrix}x(\bold{s}_{1},0)-\mu_{01}\\x(\bold{s}_{2},0)-\mu_{02}\\ \vdots\\x(\bold{s}_{n},0)-\mu_{0n}\end{pmatrix}}\right]\\
\end{align*}

\begin{align*}
\times \mathlarger{\mathlarger{\int_{\mathbb{R}^{nT}}}}\prod_{t=1}^{T}
\exp\left[-\frac{1}{2}{\begin{pmatrix}x(\bold{s}_{1},t)-u(1,t)\\
x(\bold{s}_{2},t)-u(2,t)\\ 
\vdots\\
x(\bold{s}_{n},t)-u(n,t)\end{pmatrix}}^{\prime}
{{\mathbf{\Sigma}}_{\eta}}^{-1}
{\begin{pmatrix}x(\bold{s}_{1},t)-u(1,t)\\x(\bold{s}_{2},t)-u(2,t)\\ 
\vdots\\x(\bold{s}_{n},t)-u(n,t)\end{pmatrix}}\right]
\mathlarger{\frac{1}{(2\pi)^\frac{nT}{2}}}\mathlarger{\frac{1}{|\mathbf{\Sigma}|^\frac{1}{2}}}\\
\exp\left[-\frac{1}{2}{\begin{pmatrix}u(1,1)-\beta_{0g}-\beta_{1g}x(\bold{s}_{1},0)\\u(2,1)-\beta_{0g}-\beta_{1g}x(\bold{s}_{2},0)\\ 
\vdots\\u(n,T)-\beta_{0g}-\beta_{1g}x(\bold{s}_{n},T-1)\end{pmatrix}}^{\prime}{{\mathbf{\Sigma}}}^{-1}
{\begin{pmatrix}u(1,1)-\beta_{0g}-\beta_{1g}x(\bold{s}_{1},0)\\u(2,1)-\beta_{0g}-\beta_{1g}x(\bold{s}_{2},0)\\ 
\vdots\\u(n,T)-\beta_{0g}-\beta_{1g}x(\bold{s}_{n},T-1)\end{pmatrix}}\right]d\mathbf{u}\\
\end{align*}
where $\mathbf{\Sigma}$ is as in (3.2). 
This is nothing but a convolution of two $\mathbb{R}^{nT}$ dimensional Gaussian densities, one with mean vector $\mathbf{0}$ 
and covariance matrix ${\mathbf{I}}_{T\times T}\bigotimes\mathbf{\Sigma_{\eta}}$ and the other one with mean vector \\ 
$(\beta_{0g}+\beta_{1g}x(\bold{s}_{1},0),\cdots,\beta_{0g}+\beta_{1g}x(\bold{s}_{n},T-1))^\prime $ and covariance matrix $\mathbf{\Sigma}$.\\
Hence, the integral boils down to
\begin{align*}
\mathlarger{\frac{1}{(2\pi)^\frac{n}{2}}}\mathlarger{\frac{1}{|\mathbf{\Sigma}_{0}|^\frac{1}{2}}}
\exp\left[-\frac{1}{2}{\begin{pmatrix}x(\bold{s}_{1},0)-\mu_{01}\\x(\bold{s}_{2},0)-\mu_{02}\\ 
\vdots\\x(\bold{s}_{n},0)-\mu_{0n}\end{pmatrix}}^{\prime}{{\mathbf{\Sigma}}_{0}}^{-1}
{\begin{pmatrix}x(\bold{s}_{1},0)-\mu_{01}\\x(\bold{s}_{2},0)-\mu_{02}\\ \vdots\\x(\bold{s}_{n},0)-\mu_{0n}\end{pmatrix}}\right]
\mathlarger{\mathlarger{\frac{1}{(2\pi)^\frac{nT}{2}}\frac{1}{|\tilde{\mathbf{\Sigma}}|^\frac{1}{2}}}}\\ 
\times\exp\left[-\frac{1}{2}{\begin{pmatrix}x(\bold{s}_{1},1)-\beta_{0g}-\beta_{1g}
x(\bold{s}_{1},0)\\x(\bold{s}_{2},1)-\beta_{0g}-\beta_{1g}x(\bold{s}_{2},0)\\ 
\vdots\\x(\bold{s}_{n},T)-\beta_{0g}-\beta_{1g}
x(\bold{s}_{n},T-1)\end{pmatrix}}^{\prime}{\tilde{\mathbf{\Sigma}}}^{-1}{\begin{pmatrix}x(\bold{s}_{1},1)-\beta_{0g}-\beta_{1g}
x(\bold{s}_{1},0)\\x(\bold{s}_{2},1)-\beta_{0g}-\beta_{1g}
x(\bold{s}_{2},0)\\ \vdots\\x(\bold{s}_{n},T)-\beta_{0g}-\beta_{1g}
x(\bold{s}_{n},T-1)\end{pmatrix}}\right],
\end{align*}
where $\tilde{\mathbf{\Sigma}}$ is as in (3.2). 
\end{proof}

\begin{proof}[Proof of Theorem \ref{thm:observe}]
First, see that for fixed $x(\bold{s}_{i},t_{1})$, $Y(\bold{s}_{i},t_{1})$ is distributed as a Gaussian with mean 
$\beta_{0f}+\beta_{1f}x(\bold{s}_{i},t_{1})$ and variance ${\sigma_{f}}^2+{\sigma_{\epsilon}}^2$ where ${\sigma_{f}}^2$ and 
${\sigma_{\epsilon}}^2$ are respectively the process variance associated with the isotropic Gaussian processes $\epsilon(\cdot,t)$ 
and $f(x)$ (see (3) and (1)). 
Now, see that for fixed $x(\bold{s}_{i},t_{1})$ and $x(\bold{s}_{j},t_{2})$, 
$f(x(\bold{s}_{i},t_{1}))$ and $f(x(\bold{s}_{j},t_{2}))$ has covariance $c_{f}(x(\bold{s}_{i},t_{1}),x(\bold{s}_{j},t_{2}))$. 
Also, $\epsilon(\cdot,t_{1})$ and $\epsilon(\cdot,t_{2})$ are mutually independent spatial Gaussian processes for $t_{1}\neq t_{2}$. 
Hence, conditional on state variables the covariance between $Y(\bold{s}_{i},t_{1})$ and $Y(\bold{s}_{j},t_{2})$ is 
$c_{f}(x(\bold{s}_i,t_1),x(\bold{s}_j,t_2))+c_{\epsilon}(\bold{s}_i,\bold{s}_j)\delta(t_1-t_2)$. Here $\delta(\cdot)$ is the 
delta function i.e. $\delta(t)=1$ for $t=0$ and $=0$ otherwise.

So, the joint density of the observed variables, which is denoted by $[y(\bold{s}_{1},1),y(\bold{s}_{2},1),\cdots,y(\bold{s}_{n},T)]$, 
is given by
\begin{align*}
&[y(\bold{s}_{1},1),y(\bold{s}_{2},1),\cdots,y(\bold{s}_{n},T)]=\\
&\mathlarger{\int_{\mathbb{R}^{nT}}}[y(\bold{s}_{1},1),y(\bold{s}_{2},1),\cdots,y(\bold{s}_{n},T)\mid x(\bold{s}_{1},1),
x(\bold{s}_{2},1),\cdots,x(\bold{s}_{n},T)][x(\bold{s}_{1},1),x(\bold{s}_{2},1),\cdots,x(\bold{s}_{n},T)]d\mathbf{x}
\end{align*}
Hence, part $(a)$ follows.

For part $(b)$ note that if $\sigma_{f}^{2}=0$, the conditional density\\ 
$[y(\bold{s}_{1},1),y(\bold{s}_{2},1),\cdots,y(\bold{s}_{n},T)\mid x(\bold{s}_{1},1),x(\bold{s}_{2},1),\cdots,x(\bold{s}_{n},T)]$ 
is Gaussian with block diagonal covariance matrix ${\mathbf{I}}_{T\times T}\bigotimes\mathbf{\Sigma_{\epsilon}}$. On the other hand, 
we have already noted that if $\sigma_{g}^{2}=0$, the joint density of the state variables boils down to Gaussian 
(see the discussion following Theorem \ref{thm:state}). Let us consider only the state variables from time $t=1$ onwards. 
They jointly follow an $nT$ dimensional Gaussian distribution. It is not difficult to see that the mean vector and the covariance matrix 
of the $nT$ dimensional Gaussian distribution are of following forms:\\
the $((t-1)n+i)$ th entry of the mean vector is $\beta_{1g}^{t}\mu_{0i}+{(\beta_{1g}-1)^2}+\beta_{0g}\frac{\beta_{1g}^{t}-1}{\beta_{1g}-1}$ where $1\leq t\leq T$\\
and the $(((t_{1}-1)n+i),((t_{2}-1)n+j))$ th entry of the covariance matrix is\\
$\beta_{1g}^{t_{1}+t_{2}}\sigma_{i,j}^{0}+(\beta_{1g}^{t_{1}+t_{2}-2}+\beta_{1g}^{t_{1}+t_{2}-4}+\cdots+\beta_{1g}^{\abs{t_{1}-t_{2}}})c_{\eta}(s_{i},s_{j})$\ where $1\leq t_{1},t_{2}\leq T$ and $1\leq i,j\leq n$\\
$\sigma_{i,j}^{0}$ is the $(i,j)$ th entry of the covariance matrix $\boldsymbol{\Sigma}_{0}$.

Now, using part $(a)$ we see that the joint distribution of $Y(\bold{s}_{1},1),Y(\bold{s}_{2},1),\cdots,Y(\bold{s}_{n},T)$ is nothing but 
a convolution of two $\mathbb{R}^{nT}$-dimensional Gaussian densities. Hence, it is a Gaussian distribution whose mean vector 
has $((t-1)n+i)$ th entry as $\beta_{0f}+\beta_{1f}\left(\beta_{1g}^{t}\mu_{0i}
+\beta_{0g}\frac{\beta_{1g}^{t}-1}{\beta_{1g}-1}\right)\\
\text{where $1\leq t\leq T$}$ and the $(((t_{1}-1)n+i),((t_{2}-1)n+j))$ th entry of the covariance matrix is \\ 
$\beta_{1f}^{2}\left(\beta_{1g}^{t_{1}+t_{2}}\sigma_{i,j}^{0}+(\beta_{1g}^{t_{1}+t_{2}-2}+\beta_{1g}^{t_{1}+t_{2}-4}
+\cdots+\beta_{1g}^{\abs{t_{1}-t_{2}}})c_{\eta}(s_{i},s_{j})\right)+c_{\epsilon}(\bold{s}_{i},\bold{s}_{j})\delta(t_{1}-t_{2})$ \\
where $1\leq t_{1},t_{2}\leq T$ and $1\leq i,j\leq n$. So, part $(b)$ is proved.
\end{proof}

\begin{proof}[Proof of Theorem \ref{thm:covariance}] 
To prove part $(a)$, we first show that $E(X^2(\bold{s},t))$ is finite. Then by using the formula
\begin{align*}
E(Y^2(\bold{s},t))&=E\left(E(Y^2(\bold{s},t)|X(\bold{s},t)\right))\\
&=E\left(Var(Y(\bold{s},t)|X(\bold{s},t)\right))+E\left(E(Y(\bold{s},t)|X(\bold{s},t)\right))^2\\
&=E(\sigma_{f}^2+\sigma_{\epsilon}^2)+E(\beta_{0f}+\beta_{1f}X(\bold{s},t))^2\\
&=\sigma_{f}^2+\sigma_{\epsilon}^2+E(\beta_{0f}+\beta_{1f}X(\bold{s},t))^2
\end{align*}
we establish that $E(Y^2(\bold{s},t))$ is finite that in turn implies that $Var(Y(\bold{s},t))$ is finite.
To show $E(X^2(\bold{s},t))$ is finite we use principle of mathematical induction, i.e. we first show that $E(X^2(\bold{s},0))$ is 
finite and then we show that if\\
$E(X^2(\bold{s},0)),E(X^2(\bold{s},1)),\cdots,E(X^2(\bold{s},t-1))$ 
are finite 
then $E(X^2(\bold{s},t))$ is finite. These two steps together compel $E(X^2(\bold{s},t))$ to be finite for all $\bold{s}$ and $t$.

The first step is trivially shown as $X(\bold{s},0)$ is a  Gaussian random variable.
Now we show the second step of mathematical induction, that is, we show that if
\\ 
$E(X^2(\bold{s},0)),E(X^2(\bold{s},1)),\cdots,E(X^2(\bold{s},t-1))$ are finite then $E(X^2(\bold{s},t))$ is finite. 

Let us consider the following:
\begin{align*}
& Var(X(\bold{s},t)|X(\bold{s},t-1)=x_{t-1},X(\bold{s},t-2)=x_{t-2}\cdots,X(\bold{s},0)=x_{0})\\
&=Var(g(X(\bold{s},t-1))+\eta(\bold{s},t)|X(\bold{s},t-1)=x_{t-1},X(\bold{s},t-2)=x_{t-2}\cdots,X(\bold{s},0)=x_{0})\\
&=Var(g(X(\bold{s},t-1))|X(\bold{s},t-1)=x_{t-1},X(\bold{s},t-2)=x_{t-2}\cdots,X(\bold{s},0)=x_{0})+\sigma_{\eta}^2\\
&=Var(g(x_{t-1})|X(\bold{s},t-1)=x_{t-1},X(\bold{s},t-2)=x_{t-2}\cdots,X(\bold{s},0)=x_{0})+\sigma_{\eta}^2\\
&=Var(g(x_{t-1})|g(x_{t-2})+\eta(\bold{s},t-1)=x_{t-1},\cdots,g(x_{0})+\eta(\bold{s},1)=x_{1},X(\bold{s},0)=x_{0})+\sigma_{\eta}^2\\
&=\sigma_{g}^2-\bold{\Sigma}_{g12}^{'}(\bold{\Sigma}_{g22}+\sigma_{\eta}^2\bold{I})^{-1}\bold{\Sigma}_{g12}+\sigma_{\eta}^2\ \ 
\text{(see page 16 of \cite{Rasmussen:Williams})}
\end{align*}
where $\bold{\Sigma}_{g12}^{'}$ is the row vector $(c_{g}(x_{t-1},x_{0})\  c_{g}(x_{t-1}x_{1})\cdots\  c_{g}(x_{t-1},x_{t-2}))$ 
and $\bold{\Sigma}_{g22}$ is the variance covarince matrix $\begin{pmatrix}c_{g}(x_{0},x_{0})\ c_{g}(x_{0},x_{1})\cdots c_{g}(x_{0},x_{t-2})
\\
c_{g}(x_{1},x_{0})\ c_{g}(x_{1},x_{1})\cdots c_{g}(x_{1},x_{t-2})\\ 
\vdots\\c_{g}(x_{t-2},x_{0})\ c_{g}(x_{t-2},x_{1})\cdots c_{g}(x_{t-2},x_{t-2})\end{pmatrix}$ induced by covariance function $c_{g}(\cdot,\cdot)$. 
Now, we consider $E(Var(X(\bold{s},t)|X(\bold{s},t-1)=x_{t-1},X(\bold{s},t-2)=x_{t-2}\cdots,X(\bold{s},0)=x_{0}))$. 
We want to show that this quantity is finite. But the problem is that we have to deal with the inverse of a random matrix 
$(\bold{\Sigma}_{g22}+\sigma_{\eta}^2\bold{I}) $. Fortunately, 
the random matrix $(\bold{\Sigma}_{g22}+\sigma_{\eta}^2\bold{I}) $ is non-negative definite (nnd). Hence, 
$\sigma_{g}^2-\bold{\Sigma}_{g12}^{'}(\bold{\Sigma}_{g22}+\sigma_{\eta}^2\bold{I})^{-1}\bold{\Sigma}_{g12}
+\sigma_{\eta}^2\leq \sigma_{g}^2+\sigma_{\eta}^2$. On the other hand, this quantity being a conditional variance is always nonnegative. 
So, the following inequality holds\\
$$0\leq \sigma_{g}^2-\bold{\Sigma}_{g12}^{'}(\bold{\Sigma}_{g22}+\sigma_{\eta}^2\bold{I})^{-1}\bold{\Sigma}_{g12}
+\sigma_{\eta}^2\leq \sigma_{g}^2+\sigma_{\eta}^2.$$
Hence, it follows that
$$0\leq E(\sigma_{g}^2-\bold{\Sigma}_{g12}^{'}(\bold{\Sigma}_{g22}+\sigma_{\eta}^2\bold{I})^{-1}\bold{\Sigma}_{g12}
+\sigma_{\eta}^2)\leq \sigma_{g}^2+\sigma_{\eta}^2.$$ So, the quantity $E(Var(X(\bold{s},t)|X(\bold{s},t-1)=x_{t-1},X(\bold{s},t-2)
=x_{t-2}\cdots,X(\bold{s},0)=x_{0}))$ being equivalent to $E(\sigma_{g}^2-\bold{\Sigma}_{g12}^{'}(\bold{\Sigma}_{g22}
+\sigma_{\eta}^2\bold{I})^{-1}\bold{\Sigma}_{g12}+\sigma_{\eta}^2)$, is finite.\\

Now we consider the term $E(X(\bold{s},t)|X(\bold{s},t-1)=x_{t-1},X(\bold{s},t-2)=x_{t-2}\cdots,X(\bold{s},0)=x_{0})$.
\begin{align*}
 &E(X(\bold{s},t)|X(\bold{s},t-1)=x_{t-1},X(\bold{s},t-2)=x_{t-2}\cdots,X(\bold{s},0)=x_{0})\\
&=E(g(X(\bold{s},t-1))+\eta(\bold{s},t)|X(\bold{s},t-1)=x_{t-1},X(\bold{s},t-2)=x_{t-2}\cdots,X(\bold{s},0)=x_{0})\\
&=E(g(X(\bold{s},t-1))|X(\bold{s},t-1)=x_{t-1},X(\bold{s},t-2)=x_{t-2}\cdots,X(\bold{s},0)=x_{0})+0\\
&=E(g(x_{t-1})|X(\bold{s},t-1)=x_{t-1},X(\bold{s},t-2)=x_{t-2}\cdots,X(\bold{s},0)=x_{0})\\
&=E(g(x_{t-1})|g(x_{t-2})+\eta(\bold{s},t-1)=x_{t-1},\cdots,g(x_{0})+\eta(\bold{s},1)=x_{1},X(\bold{s},0)=x_{0})\\
&=\beta_{g0}+\beta_{g1}x_{t-1}+\bold{\Sigma}_{g12}^{'}(\bold{\Sigma}_{g22}+\sigma_{\eta}^2\bold{I})^{-1}\bold{Z(\bold{s})}\ \ 
\text{(see page 16 of \cite{Rasmussen:Williams})}
\end{align*}
where $\bold{Z(\bold{s})}^{\prime}$ is the row vector $(x_{1}-\beta_{g0}-\beta_{g1}x_{0}\  \ x_{2}-\beta_{g0}-\beta_{g1}x_{1}\  
\cdots \ x_{t-1}-\beta_{g0}-\beta_{g1}x_{t-2})$. We want to show that 
$E(E(X(\bold{s},t)|X(\bold{s},t-1)=x_{t-1},X(\bold{s},t-2)=x_{t-2}\cdots,X(\bold{s},0)=x_{0}))^2$ is finite. Equivalently, 
we want to show $E(\beta_{g0}+\beta_{g1}x_{t-1}+\bold{\Sigma}_{g12}^{'}(\bold{\Sigma}_{g22}+\sigma_{\eta}^2\bold{I})^{-1}\bold{Z(\bold{s})})^2$ 
is finite. For that it is enough to show $E(\bold{\Sigma}_{g12}^{'}(\bold{\Sigma}_{g22}+\sigma_{\eta}^2\bold{I})^{-1}\bold{Z(\bold{s})})^2$ 
is finite since our induction hypothesis already assume that $E(X(\bold{S},t-1))^2$ is finite.

Now we show that $E(\bold{\Sigma}_{g12}^{'}(\bold{\Sigma}_{g22}+\sigma_{\eta}^2\bold{I})^{-1}\bold{Z(\bold{s})})^2$ is finite. 
First note that $\bold{\Sigma}_{g12}^{'}(\bold{\Sigma}_{g22}+\sigma_{\eta}^2\bold{I})^{-1}\bold{Z(\bold{s})}$ can be expressed as a 
linear combination of the elements of $\bold{Z(\bold{s})}$ as $w_{1}z_{1}(\bold{s})+w_{2}z_{2}(\bold{s})+\cdots+w_{t-1}z_{t-1}(\bold{s})$. 
If the $w_{i}(\bold{S})$ are fixed numbers it is easy to see that $w_{1}z_{1}(\bold{s})+w_{2}z_{2}(\bold{s})+\cdots+w_{t-1}z_{t-1}(\bold{s})$ 
has finite second moment. Unfortunately, $w_{i}(\bold{S})$ are random. However we will show that they are bounded random variables and 
then using a lemma we will prove that $E(w_{1}z_{1}(\bold{s})+w_{2}z_{2}(\bold{s})+\cdots+w_{t-1}z_{t-1}(\bold{s}))^2$ is finite.

First we show that $w_{i}(\bold{S})$ are bounded random variables. Consider the spectral decomposition of the real symmetric (nnd) 
matrix $\bold{\Sigma}_{g22}$. Let us assume that $\bold{\Sigma}_{g22}=\bold{U}\bold{D}\bold{U}^{\prime}$, where $\bold{U}$ is an 
orthogonal matrix and $\bold{D}$ is the diagonal matrix whose diagonal elements are eigenvalues. Then 
\begin{align*}
 \bold{\Sigma}_{g12}^{'}(\bold{\Sigma}_{g22}+\sigma_{\eta}^2\bold{I})^{-1}\bold{Z(\bold{s})}
 &= \bold{\Sigma}_{g12}^{'}(\bold{U}\bold{D}\bold{U}^{\prime}+\sigma_{\eta}^2\bold{I})^{-1}\bold{Z(\bold{s})}\\
&=\bold{\Sigma}_{g12}^{'}(\bold{U}\bold{D}\bold{U}^{\prime}+\sigma_{\eta}^2\bold{U}\bold{U}^{\prime})^{-1}\bold{Z(\bold{s})}\\
&=\bold{\Sigma}_{g12}^{'}(\bold{U}(\bold{D}+\sigma_{\eta}^2\bold{I})\bold{U}^{\prime})^{-1}\bold{Z(\bold{s})}\\
&=\bold{\Sigma}_{g12}^{'}{\bold{U}^{\prime}}^{-1}(\bold{D}+\sigma_{\eta}^2\bold{I})^{-1}\bold{U}^{-1}\bold{Z(\bold{s})}\\
&=\bold{\Sigma}_{g12}^{'}\bold{U}(\bold{D}+\sigma_{\eta}^2\bold{I})^{-1}{\bold{U}^{\prime}}\bold{Z(\bold{s})} \ \ 
\text{(Since $\bold{U}$ is an orthogonal matrix)}.
\end{align*}

Since $\bold{U}$ is a (random) orthogonal matrix its elements are bounded random variables between $-1$ and $1$.
The (random) elements of the row vector $\bold{\Sigma}_{g12}^{'}$ are covariances induced by the isotropic covariance kernel 
$c_{g}(\cdot,\cdot)$. Hence, they are bounded random variables between $-\sigma_{g}^{2}$ and $\sigma_{g}^{2}$. Finally, the 
(random) elements of $(\bold{D}+\sigma_{\eta}^2\bold{I})^{-1}$ are bounded random variables between $0$ and $\frac{1}{\sigma_{\eta}^{2}}$. 
Hence, the (random) row vector $\bold{\Sigma}_{g12}^{'}(\bold{\Sigma}_{g22}+\sigma_{\eta}^2\bold{I})^{-1}$, being a product of some 
random matrices whose elements are bounded random variables, is itself composed of bounded random variables. So, 
its elements $w_{i}(\bold{S})$, although random, are bounded. Now, we state a crucial lemma.

\begin{lemma}
Let us assume that $X_{1},X_{2},\cdots,X_{n}$ are random variables with finite second moment and $W_{1},W_{2},\cdots,W_{n}$ are bounded 
random variables all defined on same probability space. Then the random variables $Y=(W_{1}X_{1}+W_{2}X_{2}+\cdots+W_{n}X_{n})$ also 
has finite second moment.

\begin{proof}
Let us assume that $W_{1},W_{2},\cdots W_{n}$ lie between $[-M,M]$ and $E(X_{i}^2)\leq K$ for $i=1,2\cdots,n$. 
Now $E(W_{i}X_{i})^2= E(E(W_{i}^2X_{i}^2|X_{i}))=E(X_{i}^2E(W_{i}^2|X_{i}))$. 
But $E(W_{i}^2|X_{i})\leq M^2$. So, $E(X_{i}^2E(W_{i}^2|X_{i}))\leq M^2E(X_{i}^2) \leq M^2K$.  

So, 
\begin{align*}
|E(Y^2)|&=|\sum_{i=1}^{n}E(W_{i}^2X_{i}^2)+2\sum_{1\leq i<j\leq n}E(W_{i}X_{i}W_{j}X_{j})|\\
&\leq \sum_{i=1}^{n}E(W_{i}^2X_{i}^2)+2\sum_{1\leq i<j\leq n}|E(W_{i}X_{i}W_{j}X_{j})|\\
&\leq \sum_{i=1}^{n}E(W_{i}^2X_{i}^2)+2\sum_{1\leq i<j\leq n}{E(W_{i}^2X_{i}^2)}^{\frac{1}{2}}{E(W_{j}^2X_{j}^2)}^{\frac{1}{2}}\\
&\leq nM^2K+2n(n-1)M^2K
\end{align*}

So, $Y$ has finite second moment.

\end{proof}

\end{lemma}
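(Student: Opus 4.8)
The plan is to reduce the claim to the elementary fact that, on a probability space, $L^2$ is a vector space that is stable under multiplication by bounded random variables. First I would fix constants: since each $W_i$ is bounded there is $M>0$ with $|W_i|\le M$ almost surely for every $i$, and since each $X_i$ has finite variance it also has finite second moment, so there is $K<\infty$ with $E(X_i^2)\le K$ for every $i$. (This last point is the only place where one uses that ``finite variance'' on a probability space entails finite mean, and hence finite $E(X_i^2)$.)

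The first key step is to show that every summand $W_iX_i$ is square-integrable. This is immediate from boundedness of $W_i$: almost surely $(W_iX_i)^2 = W_i^2 X_i^2 \le M^2 X_i^2$, so $E\big((W_iX_i)^2\big)\le M^2 E(X_i^2)\le M^2K<\infty$. In particular each $W_iX_i$ has finite mean and finite variance, with $Var(W_iX_i)\le E\big((W_iX_i)^2\big)\le M^2K$.

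The second step is to pass from the summands to $Y=\sum_{i=1}^n W_iX_i$. The cleanest route is Minkowski's inequality in $L^2$: $\|Y\|_2\le\sum_{i=1}^n\|W_iX_i\|_2\le nM\sqrt{K}<\infty$, whence $E(Y^2)<\infty$ and therefore $Var(Y)\le E(Y^2)<\infty$. Equivalently, one may expand $Var(Y)=\sum_{i=1}^n Var(W_iX_i)+2\sum_{1\le i<j\le n}Cov(W_iX_i,W_jX_j)$ and bound each cross term by the Cauchy--Schwarz inequality, $|Cov(W_iX_i,W_jX_j)|\le Var^{1/2}(W_iX_i)\,Var^{1/2}(W_jX_j)\le M^2K$, so that $Var(Y)\le n M^2K + n(n-1)M^2K<\infty$.

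There is essentially no serious obstacle here. The only point that requires a moment's attention is the remark, used in the first step, that finite variance on a probability space forces finite second moment, so that the pointwise bound $(W_iX_i)^2\le M^2X_i^2$ may legitimately be integrated; after that, finiteness of the cross-covariances is automatic from Cauchy--Schwarz, and no measurability or integrability subtlety can arise because the sum has only finitely many terms.
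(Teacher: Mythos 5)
Your proof is correct, and it is simpler than the one in the paper at the key step. The paper bounds each $Var(W_iX_i)$ by conditioning on $X_i$ and applying the law of total variance, $Var(W_iX_i)=E\bigl(X_i^2Var(W_i|X_i)\bigr)+Var\bigl(X_iE(W_i|X_i)\bigr)$, bounding each piece by $M^2E(X_i^2)$ to get $Var(W_iX_i)\leq 2M^2E(X_i^2)$; you instead use the pointwise domination $(W_iX_i)^2\leq M^2X_i^2$ and integrate, which gives the sharper bound $Var(W_iX_i)\leq E\bigl((W_iX_i)^2\bigr)\leq M^2E(X_i^2)$ with no conditional expectations at all. From there your second route (expand $Var(Y)$ and control the cross terms by Cauchy--Schwarz) coincides with the paper's combination step, while your first route (Minkowski in $L^2$, $\|Y\|_2\leq\sum_i\|W_iX_i\|_2$) bypasses the covariance expansion entirely and is arguably the cleanest way to see the result: it makes explicit that the lemma is just the statement that $L^2$ is closed under finite sums and under multiplication by $L^\infty$ elements. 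The only care needed in either version is the observation you already make, that finite variance entails finite second moment so the pointwise bound may be integrated; your constants ($nM^2K+n(n-1)M^2K$) are accordingly a factor of two better than the paper's, though of course only finiteness matters.
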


Once we apply the lemma to $w_{1}z_{1}(\bold{s})+w_{2}z_{2}(\bold{s})+\cdots+w_{t-1}z_{t-1}(\bold{s})$ the 
finiteness of $E(E(X(\bold{s},t)|X(\bold{s},t-1)=x_{t-1},X(\bold{s},t-2)=x_{t-2}\cdots,X(\bold{s},0)=x_{0}))^2$ is immediate. 
Then by the formula $E(X(\bold{s},t))^2=E(E(X^2(\bold{s},t)|X(\bold{s},t-1)=x_{t-1},X(\bold{s},t-2)=x_{t-2}\cdots,X(\bold{s},0)=x_{0}))=E(Var(X(\bold{s},t)|X(\bold{s},t-1)=x_{t-1},X(\bold{s},t-2)=x_{t-2}\cdots,X(\bold{s},0)
=x_{0}))+E(E(X(\bold{s},t)|X(\bold{s},t-1)=x_{t-1},X(\bold{s},t-2)=x_{t-2}\cdots,X(\bold{s},0)=x_{0}))^2$ we get 
that $E(X^2(\bold{s},t))$ is finite.
\end{proof}

We now prove part $(b)$. 
Since we have already proved in part $(a)$ that the coordinate variables of the observed spatio-temporal process have finite variances, 
now we can consider the covariance function associated with the process and study its properties.
Let us denote the covariance between $Y(\bold{s},t)$ and $Y(\bold{s}^*,t^*)$ by $c_{y}((\bold{s},t),(\bold{s}^*,t^*))$. 
Then
\begin{align}
c_{y}((\bold{s},t),(\bold{s}^*,t^*))=&E[Cov(Y(\bold{s},t),Y(\bold{s}^*,t^*)\mid 
x(\bold{s},t),x(\bold{s}^*,t^*))]\notag\\ &+Cov[E(Y(\bold{s},t)\mid x(\bold{s},t)),
E(Y(\bold{s}^*,t^*)\mid x(\bold{s}^*,t^*))]\notag\\
=E[c_{f}(X(\bold{s},t),X(\bold{s}^*,t^*))]+&c_{\epsilon}(\bold{s},\bold{s}^*)\delta(t-t^*)
+\beta_{1f}^{2}Cov[X(\bold{s},t),X(\bold{s}^*,t^*)].\notag\\
\end{align}
Now, the term $E[c_{f}(X(\bold{s},t),X(\bold{s}^*,t^*))]$ will 
be nonstationary and hence $E[c_{f}(X(\bold{s}+\bold{h},t+k),X(\bold{s}^*+\bold{h},t^*+k))]
\neq E[c_{f}(X(\bold{s},t),X(\bold{s}^*,t^*))]$. In fact, $| X(\bold{s}+\bold{h},
t+k)-X(\bold{s}^*+\bold{h},t^*+k)|\neq| X(\bold{s},t)-X(\bold{s}^*,t^*)|$ with probability 1 
because $X(\bold{s},t)$ has density with respect to Lebesgue measure and this heuristically justifies our argument. So, 
the covariance function $c_{y}(\cdot,\cdot)$ is nonstationary in both space and time.

To prove non separability, first see that $c_{f}(x(\bold{s},t),x(\bold{s}^*,t^*))$ is non separable in space and time, 
because both space and time are involved in it through $x(\bold{s},t)$. Hence, $E[c_{f}(X(\bold{s},t),X(\bold{s}^*,t^*))]$ 
is nonseparable and therefore $c_{y}(\cdot,\cdot)$ is nonseparable in space and time.


\begin{proof}[Proof of Theorem \ref{thm:closed_covariance}]
\label{sec:proof5}
First consider $Cov(X(\bold{s},t),X(\bold{s}^{*},t^{*}))$ where WLOG we assume $t>t^*$. 
Also, assume that $g^{*}(\cdot)$ is the centered Gaussian process obtained from $g(\cdot)$. Then
\begin{align*}
&Cov(X(\bold{s},t),X(\bold{s}^*,t^*))=Cov(g(X(\bold{s},t-1))+\eta(\bold{s},t),X(\bold{s}^*,t^*))\\
&=Cov(\beta_{0g}+\beta_{1g}
X(\bold{s},t-1)+g^{*}(X(\bold{s},t-1))+\eta(\bold{s},t),X(\bold{s}^*,t^*))\\
&=\beta_{1g}Cov(X(\bold{s},t-1),X(\bold{s}^*,t^*))+Cov(g^{*}(X(\bold{s},t-1)),X(\bold{s}^*,t^*))\\
\end{align*}
Repeatedly expanding the term in the same way we get
\begin{align}
&=\beta_{1g}^{t-t^*}Cov(X(\bold{s},t^*),X(\bold{s}^*,t^*))+\beta_{1g}^{t-t^*-1}
Cov(g^{*}(X(\bold{s},t^*)),X(\bold{s}^*,t^*))+\label{cov}\\
&\cdots+Cov(g^{*}(X(\bold{s},t-1)),X(\bold{s}^*,t^*))\notag
\end{align}
Just as the previous paragraph we can further see that
\begin{align}
&Cov(X(\bold{s},t^*),X(\bold{s}^*,t^*))\notag\\
&=Cov(\beta_{0g}+\beta_{1g}
X(\bold{s},t^*-1)+g^{*}(X(\bold{s},t^*-1))+\eta(\bold{s},t^*),\beta_{0g}\notag\\
&+\beta_{1g}X(\bold{s}^*,t^*-1)+g^{*}(X(\bold{s}^*,t^*-1))+\eta(\bold{s}^*,t^*))\notag\\
&=\beta_{1g}^2Cov(X(\bold{s},t^*-1),X(\bold{s}^*,t^*-1))+\beta_{1g}Cov(X(\bold{s},t^*-1),g^{*}(X(\bold{s}^*,t^*-1)))+\notag\\
&\beta_{1g}Cov(X(\bold{s}^*,t^*-1),g^{*}(X(\bold{s},t^*-1)))+Cov(g^{*}(X(\bold{s},t^*-1)),g^{*}(X(\bold{s}^*,t^*-1)))+c_{\eta}(\bold{s},\bold{s}^*)
\end{align}
Now we plan to show that terms of the types $Cov(g^{*}(X(\bold{s}^*,t^*-1)),X(\bold{s},t^*-1))$ and 
$Cov(g^{*}(X(\bold{s},t^*-1)),g^{*}(X(\bold{s}^*,t^*-1)))$ are negligible if $\sigma_{g}^2$ is small enough.
Our next lemma proves it rigorously.
\begin{lemma}
\label{lemma:small_cov}
For arbitrarily small  $\epsilon>0$, $\exists\ \delta>0$ such that $Cov(g^{*}(X(\bold{s},t-1)),X(\bold{s}^*,t^*))<\epsilon$ 
for $0<\sigma_{g}^2<\delta$.
\end{lemma}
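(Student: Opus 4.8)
The plan is to reduce the covariance bound to a variance estimate and then control that variance by an induction on the time index. By the Cauchy--Schwarz inequality, $\abs{Cov(g^{*}(X(\bold{s},t-1)),X(\bold{s}^{*},t^{*}))}\leq Var^{1/2}(g^{*}(X(\bold{s},t-1)))\,Var^{1/2}(X(\bold{s}^{*},t^{*}))$, and $Var(X(\bold{s}^{*},t^{*}))<\infty$ by Theorem \ref{thm:covariance}(a). Hence it suffices to prove that $Var(g^{*}(X(\bold{s},t-1)))\to 0$ as $\sigma_{g}^{2}\to 0$ while $Var(X(\bold{s}^{*},t^{*}))$ stays bounded for $\sigma_{g}^{2}$ near $0$; both will come out of one and the same induction, and the required $\delta$ is then read off.

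I would carry the induction on the time $\tau$, maintaining two claims: $Var(g^{*}(X(\bold{s},\tau)))\to 0$ as $\sigma_{g}^{2}\to 0$, and $Var(X(\bold{s},\tau))$ is bounded uniformly for $\sigma_{g}^{2}\in(0,\delta_{0})$ for some $\delta_{0}>0$. For $\tau=0$: $X(\bold{s},0)$ is Gaussian and independent of $g^{*}$, so conditioning on $X(\bold{s},0)=x$ gives $g^{*}(x)\sim N(0,\sigma_{g}^{2})$, whence $Var(g^{*}(X(\bold{s},0)))=\sigma_{g}^{2}\to 0$, while $Var(X(\bold{s},0))=\sigma_{0}^{2}$ does not depend on $\sigma_{g}^{2}$. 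For the step on the boundedness claim, from (\ref{eqn:npr2}) we have $X(\bold{s},\tau)=\beta_{0g}+\beta_{1g}X(\bold{s},\tau-1)+g^{*}(X(\bold{s},\tau-1))+\eta(\bold{s},\tau)$, and since $Var(A+B+C)\leq 3\{Var(A)+Var(B)+Var(C)\}$, the induction hypotheses together with $Var(\eta(\bold{s},\tau))=\sigma_{\eta}^{2}$ immediately give boundedness of $Var(X(\bold{s},\tau))$.

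The heart of the matter is the induction step for $Var(g^{*}(X(\bold{s},\tau)))$, and here I would reuse the conditioning device from the proofs of Theorems \ref{thm:state} and \ref{thm:covariance}. Conditioning on the latent path $\mathcal F_{\tau}=\sigma(X(\bold{s},0),\ldots,X(\bold{s},\tau))$ equal to $(v_{0},\ldots,v_{\tau})$, the quantities $Z_{j}:=v_{j}-\beta_{0g}-\beta_{1g}v_{j-1}=g^{*}(v_{j-1})+\eta(\bold{s},j)$, $j=1,\ldots,\tau$, act as observations of $g^{*}$ at the points $v_{0},\ldots,v_{\tau-1}$ corrupted by independent $N(0,\sigma_{\eta}^{2})$ noise, so the conditional law of $g^{*}(X(\bold{s},\tau))=g^{*}(v_{\tau})$ is Gaussian with conditional variance $\sigma_{g}^{2}-\bold{\Sigma}_{g12}^{\prime}(\bold{\Sigma}_{g22}+\sigma_{\eta}^{2}\bold{I})^{-1}\bold{\Sigma}_{g12}\in[0,\sigma_{g}^{2}]$ and conditional mean $\bold{\Sigma}_{g12}^{\prime}(\bold{\Sigma}_{g22}+\sigma_{\eta}^{2}\bold{I})^{-1}\bold{Z}$, where $\bold{\Sigma}_{g12}$, $\bold{\Sigma}_{g22}$ and $\bold{Z}$ are built from $c_{g}$ and the $v_{j}$'s exactly as in the proof of Theorem \ref{thm:covariance}. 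Taking expectations,
\[
E\big(g^{*}(X(\bold{s},\tau))^{2}\big)\leq\sigma_{g}^{2}+E\Big(\big(\bold{\Sigma}_{g12}^{\prime}(\bold{\Sigma}_{g22}+\sigma_{\eta}^{2}\bold{I})^{-1}\bold{Z}\big)^{2}\Big).
\]
Since $\bold{\Sigma}_{g22}$ is non-negative definite, $\bold{\Sigma}_{g22}+\sigma_{\eta}^{2}\bold{I}$ has smallest eigenvalue $\geq\sigma_{\eta}^{2}$, so its inverse has operator norm $\leq\sigma_{\eta}^{-2}$; as the entries of $\bold{\Sigma}_{g12}$ have modulus $\leq\sigma_{g}^{2}$, we get $\bold{\Sigma}_{g12}^{\prime}(\bold{\Sigma}_{g22}+\sigma_{\eta}^{2}\bold{I})^{-1}\bold{Z}=\sum_{j=1}^{\tau}w_{j}(\bold{s})Z_{j}$ with $\abs{w_{j}(\bold{s})}\leq C_{\tau}\sigma_{g}^{2}$ for a constant $C_{\tau}$ not depending on $\sigma_{g}^{2}$ --- this is the ``bounded random coefficients'' observation already used in the proof of Theorem \ref{thm:covariance}. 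Because $E(Z_{j}^{2})=E(g^{*}(X(\bold{s},j-1))^{2})+\sigma_{\eta}^{2}$ (the cross term vanishes by independence of $\eta(\bold{s},j)$) is bounded for small $\sigma_{g}^{2}$ by the induction hypothesis, we obtain $E\big((\sum_{j}w_{j}Z_{j})^{2}\big)\leq C_{\tau}^{2}\sigma_{g}^{4}\,\tau\sum_{j}E(Z_{j}^{2})\to 0$, hence $Var(g^{*}(X(\bold{s},\tau)))\leq E(g^{*}(X(\bold{s},\tau))^{2})\to 0$, closing the induction.

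I expect the one genuine obstacle to be the conditioning step of the third paragraph: the evaluation sites $v_{0},\ldots,v_{\tau}$ are themselves functionals of $g^{*}$, so ``conditionally on the path, $g^{*}$ has been observed with Gaussian noise at fixed sites'' is not literally the textbook Gaussian-conditioning statement and must be justified by a disintegration/change-of-variables argument --- for fixed $g^{*}$ the recursion $(\eta(\bold{s},1),\ldots,\eta(\bold{s},\tau))\mapsto(X(\bold{s},1),\ldots,X(\bold{s},\tau))$ is a triangular homeomorphism with unit Jacobian, which makes the conditional densities come out as stated --- together with the remark that the latent values are almost surely distinct so that $\bold{\Sigma}_{g22}+\sigma_{\eta}^{2}\bold{I}$ is almost surely positive definite. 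This is the same device, at the same level of rigor, as in the proofs of Theorems \ref{thm:state} and \ref{thm:covariance}, so I would not belabour it. The only remaining loose end, boundedness of $Var(X(\bold{s}^{*},t^{*}))$ as $\sigma_{g}^{2}\downarrow 0$, needed to finish the Cauchy--Schwarz step, is supplied by the boundedness claim of the second paragraph with $\tau=t^{*}$ and location $\bold{s}^{*}$.
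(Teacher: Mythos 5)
Your argument is correct, and it reaches the lemma by a genuinely different route from the paper's. Both proofs reduce the claim to showing that $Var(g^{*}(X(\bold{s},t-1)))\rightarrow 0$ as $\sigma_{g}^{2}\rightarrow 0$; the paper then expands $Cov(g^{*}(X(\bold{s},t-1)),X(\bold{s}^{*},t^{*}))$ along the evolution recursion and applies Cauchy--Schwarz term by term, whereas you apply Cauchy--Schwarz once against $X(\bold{s}^{*},t^{*})$ and therefore also need $Var(X(\bold{s}^{*},t^{*}))$ to remain bounded as $\sigma_{g}^{2}\downarrow 0$ --- an ingredient not supplied by Theorem \ref{thm:covariance}(a) alone (which fixes the parameters), but which you correctly identify and build into your induction. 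The real divergence is in how the variance of $g^{*}(X(\bold{s},t-1))$ is controlled. The paper dominates it by $E\bigl(\sup_{x}\abs{g^{*}(x)}^{2}\bigr)$ and invokes the Borell--TIS inequality together with Dudley's entropy bound, which yields a bound uniform in both $\bold{s}$ and $t$ in one stroke but imports concentration-of-measure machinery and requires the entropy integral of $g^{*}$ over its (unbounded) index set to be finite --- a delicate point. Your induction on the time index, reusing the Gaussian-process regression formulas already deployed in the paper's proof of Theorem \ref{thm:covariance}(a), never touches the supremum of $g^{*}$: the conditional variance is at most $\sigma_{g}^{2}$ outright, and the conditional mean is a linear combination with coefficients of order $\sigma_{g}^{2}$ of quantities whose second moments the induction hypothesis bounds, giving an explicit $\sigma_{g}^{2}+O(\sigma_{g}^{4})$ rate. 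The price is constants growing with the time index, so uniformity in $t$ holds only over the finitely many times actually needed in Theorem \ref{thm:closed_covariance} (which suffices there), and reliance on the conditioning-on-the-latent-path device whose justification you rightly flag --- but that device appears at exactly the same level of rigor in the paper's own proofs of Theorems \ref{thm:state} and \ref{thm:covariance}, so you are at parity with the paper on that point.
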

See that it is enough to prove that $Var(g^{*}(X(\bold{s},t-1)))$ is arbitrarily small $\forall \bold{s},t$. Then Cauchy-Schwartz inequality implies $Cov^{2}(g^{*}(X(\bold{s},t-1)),g^{*}(X(\bold{s}^*,t^*)))
\leq Var(g^{*}(X(\bold{s},t-1)))Var(g^{*}(X(\bold{s}^*,t^*)))$ is arbitrarily small. Similarly, Cauchy-Schwartz inequality implies $Cov^{2}(g^{*}(X(\bold{s},t-1)),\eta(\bold{s}^*,t^*))
\leq Var(g^{*}(X(\bold{s},t-1)))Var(\eta(\bold{s}^*,t^*))=Var(g^{*}(X(\bold{s},t-1)))\sigma_{\eta}^2$ is arbitrarily small. Then we are done by the expansion 
\begin{align*}
&Cov(g^{*}(x(\bold{s},t)),x(\bold{s^*},t^*))=\\
&Cov(g^{*}(x(\bold{s},t-1)),g^{*}(x(\bold{s}^*,t^*-1))+\cdots+\beta_{1g}^{t^*}Cov(g^{*}(x(\bold{s},t-1)),g^{*}(x(\bold{s}^*,0)))\\
&+Cov(g^{*}(x(\bold{s},t-1)),\eta(\bold{s}^*,t^*-1))+\cdots+\beta_{1g}^{t^*}Cov(g^{*}(x(\bold{s},t-1)),\eta(\bold{s}^*,0))
\end{align*}

Before proceeding towards the proof we mention two results from Gaussian process (see \cite{Adler07} for details) that will be used subsequently.
\begin{result}[Borell-TIS inequality] 
\label{result:Borell-TIS}
Let us assume that $g$ is an almost surely bounded centered Gaussian process on index set 
$T\subseteq\mathbb{R}$. Define $\sigma_{T}^{2}=\sup\limits_{t\in T}E(g_{t}^2)$.\\ 
Then $P(\|g\|>s)\leq \exp(-\frac{(s-E\|g\|)^2}{2\sigma_{T}^2})$ for $s>E(\|g\|)$ where $\|g\|=\sup\limits_{t}g_{t}$.
\end{result}

\begin{result}[Dudley's metric entropy bound] 
\label{result:Dudley}
Under the assumption of the Borel-TIS inequality,\\

$E\|g\|\leq K\mathlarger{\int_{0}^{\mbox{diam}(T)}}\sqrt{H(\epsilon)}d\epsilon$,\\

where $\mbox{diam}(T)=\underset{\bold{s}_1,\bold{s}_2\in T}\sup d(\bold{s}_1,\bold{s}_2)$ is the diameter of the index set 
$T$ with respect to the canonical pseudo-metric $d$ associated with the Gaussian process $g$
given by $d(\bold{s}_1,\bold{s}_2)=\sqrt{E(g(\bold{s}_1)-g(\bold{s}_2))^2}$,
and 
$H(\epsilon)=\ln{N(\epsilon)}$ where $N(\epsilon)$ is the minimum number of $\epsilon$ balls required to cover the index set 
$T$ with respect to the canonical pseudo-metric $d$; $K$ is a universal constant.
\end{result}
With the above two results, we are ready to prove Lemma \ref{lemma:small_cov}. 

\begin{proof}[Proof of Lemma \ref{lemma:small_cov}]
Consider $Var((g^{*}(X(\bold{s},t-1))))$. Observe that
\begin{align*}
Var((g^{*}(X(\bold{s},t-1))))&\leq E((g^{*}(X(\bold{s},t-1))))^{2}\\ 
& \leq E(\sup\limits_{x}|g^{*}(x)|^{2})\ \ \ \\ 
& =\mathlarger{\int_{0}^{\infty}}P(\sup\limits_{x}|g^{*}(x)|^{2}>u)du\ \ \  (\text{by the tail sum formula})\\
&\leq 2\mathlarger{\int_{0}^{\infty}}P(\sup\limits_{x} g^{*}(x)>\sqrt{u})du\\
&= 2\mathlarger{\int_{0}^{L^{2}}}P(\sup\limits_{x} g^{*}(x)>\sqrt{u})du+2\mathlarger{\int_{L^{2}}^{\infty}}
P(\sup\limits_{x} g^{*}(x)>\sqrt{u})du\\
&(\text{where $L=\max{(E(\sup\limits_{x}g^{*}(x)),0)}$}\\
&\leq 2L^{2}+2\mathlarger{\int_{L^{2}}^{\infty}}e^{-\frac{(\sqrt{u}-L)^2}{2\sigma_{g}^{2}}}\ \ \ (\text{using Result \ref{result:Borell-TIS}})
\end{align*}

Now, using the change of variable $\sqrt{u}=z+L$ the integral $\mathlarger{\int_{L^{2}}^{\infty}}e^{-\frac{(\sqrt{u}-L)^2}{2\sigma_{g}^{2}}}$ 
can be reduced to the form
\begin{align*}
&\mathlarger{\int_{0}^{\infty}}e^{-\frac{z^2}{2\sigma_{g}^{2}}}2zdz+2L\mathlarger{\int_{0}^{\infty}}e^{-\frac{z^2}{2\sigma_{g}^{2}}}dz\\
&=2\sigma_{g}^{2}+L\sigma_{g}(\sqrt{2\pi})
\end{align*}
Hence, $Var((g^{*}(X(\bold{s},t-1))))\leq 2L^{2}+4\sigma_{g}^{2}+2L\sigma_{g}(\sqrt{2\pi})$.
\\But, $0\leq L\leq K\mathlarger{\int_{0}^{diam(T)}}\sqrt{H(\epsilon)}d\epsilon$ by Result \ref{result:Dudley} 
and it is not difficult to see that $H(\epsilon)$ is a decreasing function of $\sigma_{g}^2$. The same is true of $diam(T)$ 
when as a function of $\sigma_{g}^2$. These two facts together permit applicability of the monotone convergence theorem to yield
\begin{align*}
0\leq \lim_{\sigma_{g}^{2}\rightarrow 0^+}L\leq \lim_{\sigma_{g}^{2}\rightarrow 0^+}K\mathlarger{\int_{0}^{diam(T)}}\sqrt{H(\epsilon)}d\epsilon 
& \leq \lim_{\sigma_{g}^{2}\rightarrow 0^+}K\mathlarger{\int_{0}^{\infty}}\sqrt{H(\epsilon)}\mathbb{I}(\epsilon\leq diam(T))d\epsilon\\
&\leq K\mathlarger{\int_{0}^{\infty}}\lim_{\sigma_{g}^{2}\rightarrow 0^+}\sqrt{H(\epsilon)}\mathbb{I}(\epsilon\leq diam(T))d\epsilon=0.
\end{align*}
So, $\lim_{\sigma_{g}^{2}\rightarrow 0^+}L=0$ which in turn implies $Var((g^{*}(X(\bold{s},t-1))))$ can be made 
arbitrarily small by making $\sigma_{g}^{2}$ small. This proves Lemma \ref{lemma:small_cov}.
\end{proof}

Arguing similarly one can also show that for arbitrarily small $\epsilon>0$, $\exists\ \delta>0$ such that 
$Cov(g^{*}(X(\bold{s},t^*-1)),g^{*}(X(\bold{s}^*,t^*-1)))<\epsilon$ for $0<\sigma_{g}^2<\delta$. Moreover, 
see that the bound is uniform in $\bold{s}$ and $t$. Since $|\beta_{1g}|<1$, using the bound repeatedly in (15), we obtain
\begin{align*}
&|Cov(X(\bold{s},t),X(\bold{s}^*,t^*))-\beta_{1g}^{t-t^*}Cov(X(\bold{s},t^*),X(\bold{s}^*,t^*))|\\
&\leq \frac{\epsilon}{1-|\beta_{1g}|}
\end{align*}
Similarly, using the bound repeatedly in (16), we obtain
\begin{align*}
&|Cov(X(\bold{s},t^*),X(\bold{s}^*,t^*))-Cov(X(\bold{s},0),X(\bold{s}^*,0))-[\frac{1-\beta_{1g}^{2(t^*+1)}}{1-\beta_{1g}^2}]c_{\eta}(\bold{s},\bold{s}^*)|\\
&\leq \left[\frac{\epsilon}{1-|\beta_{1g}|}+\frac{\epsilon}{1-|\beta_{1g}|}+\frac{\epsilon}{1-|\beta_{1g}|}\right].
\end{align*}
Combining them we get
\begin{align*}
&|Cov(X(\bold{s},t),X(\bold{s}^*,t^*))-\beta_{1g}^{t-t^*}Cov(X(\bold{s},0),X(\bold{s}^*,0))-\beta_{1g}^{t-t^*}[\frac{1-\beta_{1g}^{2(t^*+1)}}{1-\beta_{1g}^2}]c_{\eta}(\bold{s},\bold{s}^*)|\\
&\leq \frac{\epsilon}{1-|\beta_{1g}|}\left[1+3|\beta_{1g}|^{t-t^*}\right] \leq \frac{4\epsilon}{1-|\beta_{1g}|}
\end{align*}
Now plugging in this approximation 
in the expression for $c_{y}((\bold{s},t),(\bold{s}^*,t^*))$ we get
the desired result.
So, Theorem 3.5 
is finally proved.

\end{proof}

\begin{proof}[Proof of Theorem \ref{thm:sample_path}]
Part $(a)$: Arguing in the similar lines as in the proof of Proposition \ref{propn:measurable}, one can show that $\exists$ a probability space $(\Omega,\mathcal{F},P)$ such that for $\omega\in \Omega$, $X(\bold{s},0)(\omega),\eta(\bold{s},t)(\omega),
\epsilon(\bold{s},t)(\omega)$ are continuous functions in $\bold{s}$ where $t=1,2,3\cdots$ and $g(x)(\omega),f(x)(\omega)$ 
are continuous functions in $x$. Then by the property of composition of two functions $X(\bold{s},1)(\omega)
=g(X(\bold{s},0)(\omega))(\omega)+\eta(\bold{s},1)(\omega)$ is a continuous function in $\bold{s}$. Proceeding recursively, 
one can prove that $X(\bold{s},t)(\omega)$ is a continuous function in $\bold{s}$ for any $t$. Once we show $X(\bold{s},t)(\omega)$ 
is a continuous function, we prove $Y(\bold{s},t)(\omega)=f(X(\bold{s},t)(\omega))(\omega)+\epsilon(\bold{s},t)(\omega)$ 
is a continuous function in $\bold{s}$. So, part $(a)$ is proved.

Part $(b)$: Proof of part $(b)$ is similar to that of part $(a)$. Firstly, we state a simple lemma.
\begin{lemma}
Let us consider two real valued functions $u(z)$ and $v(x,y)$ such that both of them are $k$ times differentiable. 
Then the composition function $u(v(x,y))$ is also $k$ times differentiable.
\end{lemma}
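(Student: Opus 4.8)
The plan is to prove the lemma by induction on $k$, with the chain rule and the Leibniz product rule doing all the work. For $k=1$ the ordinary multivariate chain rule gives $\frac{\partial}{\partial x}u(v(x,y))=u'(v(x,y))\,\frac{\partial}{\partial x}v(x,y)$ and the analogous identity for $\partial/\partial y$; since $u'$ exists and $v$ is differentiable, both first-order partials of $u\circ v$ exist, so $u\circ v$ is once differentiable.

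For the inductive step, assume the claim for $k-1$ and let $u$ and $v$ be $k$ times differentiable. Then $u'$ is $(k-1)$ times differentiable, and $v$, being $k$ times differentiable, is in particular $(k-1)$ times differentiable; so by the induction hypothesis $u'\circ v$ is $(k-1)$ times differentiable. The partial $\partial v/\partial x$ is likewise $(k-1)$ times differentiable, and since the class of $(k-1)$-times-differentiable functions is closed under products (Leibniz rule), the product $\left(u'\circ v\right)\cdot\left(\partial v/\partial x\right)=\partial(u\circ v)/\partial x$ is $(k-1)$ times differentiable; the same holds for $\partial(u\circ v)/\partial y$. Thus every first-order partial of $u\circ v$ is $(k-1)$ times differentiable, which says exactly that $u\circ v$ is $k$ times differentiable. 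This closes the induction.

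The only real obstacle is a bookkeeping one: fixing what ``$k$ times differentiable'' should mean for a function of several variables, namely that all mixed partials up to order $k$ exist (and, if one wants the manipulations above to be clean, are continuous, i.e. the function is $C^k$). Once this convention is adopted, closure of that class under sums, products, and composition with a scalar function is routine, and one also gets for free the obvious variants needed below: the inner function may depend on $d>2$ coordinates (so that everything extends to $\mathbb R^d$, as noted after Theorem 3.1), and one may freely add a $k$-times-differentiable function.

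With the lemma in hand, Theorem 3.6(b) is finished as in part (a). On a probability-one event $A$ on which $X(\mathbf s,0)(\omega)$, $\eta(\mathbf s,t)(\omega)$, $\epsilon(\mathbf s,t)(\omega)$ are $k$ times differentiable in $\mathbf s$ and $g(\cdot)(\omega)$, $f(\cdot)(\omega)$ are $k$ times differentiable in their scalar argument --- such an $A$ has probability one, being a countable intersection of probability-one events guaranteed by the smoothness assumptions on the covariance kernels --- the lemma applied to $g(X(\mathbf s,0)(\omega))(\omega)$ shows that $X(\mathbf s,1)(\omega)=g(X(\mathbf s,0)(\omega))(\omega)+\eta(\mathbf s,1)(\omega)$ is $k$ times differentiable in $\mathbf s$; recursively $X(\mathbf s,t)(\omega)$ is $k$ times differentiable for every $t$, and then $Y(\mathbf s,t)(\omega)=f(X(\mathbf s,t)(\omega))(\omega)+\epsilon(\mathbf s,t)(\omega)$ is $k$ times differentiable, which is the assertion of part (b).
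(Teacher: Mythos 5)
Your proof is correct and follows essentially the same route as the paper's: induction on $k$, the multivariate chain rule for the base case, and in the inductive step writing each first-order partial of $u\circ v$ as the product $\left(u'\circ v\right)\cdot\left(\partial v/\partial x\right)$ of two $(k-1)$-times differentiable functions. Your explicit remark that one should really work in the class $C^k$ (so that the chain rule and Leibniz manipulations are clean) is a point the paper glosses over, but it does not change the argument.
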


\begin{proof}
Proof of this lemma is basically a generalization of chain rule for multivariate functions and can be found in advanced 
multivariate calculus text books. We give a brief sketch of the proof. First we clarify the term $k$ times differentiable 
for the function $v(x,y)$. It means all mixed partial derivatives of $v(x,y)$ of order $k$ exist. We prove the lemma using 
mathematical induction. Firstly, We show that the lemma is true for $k=1$ and then we show that if the lemma is true for $k-1$ 
then it must be true for $k$ as well. 

That the lemma is true for $k=1$ easily follows from the chain rule for multivariate functions. Now we prove the second step.
By the induction hypothesis the lemma is true for the $k-1$ case and $u(z)$ and $v(x,y)$ are $k$ times differentiable. 
We want to show that $u(v(x,y))$ is also $k$ times differentiable. Without loss of generality, we consider the mixed partial 
derivative $\frac{\partial}{\partial x^{k_{1}}\partial y^{k_{2}}}\left(u(v(x,y))\right)$ where $k_{1}+k_{2}=k$ and show that it exists. 
Observe that the partial derivative is equivalent to $\frac{\partial}{\partial x^{k_{1}}\partial y^{k_{2}-1}}
\left(u^{\prime}(v(x,y))(\frac{\partial}{\partial y}v(x,y))\right)$ provided it exists. Since, by the induction hypothesis the 
lemma is true for the $k-1$ case and $u^{\prime}(z)$ and $v(x,y)$ are $k-1$ times differentiable, the composition of them 
$u^{\prime}(v(x,y))$ is also $k-1$ times differentiable. On the other hand, $\frac{\partial}{\partial y}v(x,y)$ is 
also $k-1$ times differentiable. So, the product of them $u^{\prime}(v(x,y))(\frac{\partial}{\partial y}v(x,y))$ is also $k-1$ 
times differentiable. Hence the partial derivative $\frac{\partial}{\partial x^{k_{1}}\partial y^{k_{2}-1}}\left(u^{\prime}
(v(x,y))(\frac{\partial}{\partial y}v(x,y))\right)$ exists. Equivalently, $\frac{\partial}{\partial x^{k_{1}}\partial y^{k_{2}}}
\left(u(v(x,y))\right)$ exists. Similarly one can prove the existence of other mixed partial derivatives of $u(v(x,y))$ of order $k$. 
Hence, by induction the proof follows.
\end{proof}

Part $(b)$: From the condition of the theorem it is clear that $\exists$ a probability space $(\Omega,\mathcal{F},P)$ such that for $\omega\in \Omega$, $X(\bold{s},0)(\omega),\eta(\bold{s},t)(\omega),\epsilon(\bold{s},t)(\omega)$ 
are $k$ times differentiable functions in $\bold{s}$ where $t=1,2,3\cdots$ and $g(x)(\omega),f(x)(\omega)$ are $k$ times differentiable 
functions in $x$. Then by the above lemma $X(\bold{s},1)(\omega)=g(X(\bold{s},0)(\omega))(\omega)+\eta(\bold{s},1)(\omega)$ is 
a $k$ times differentiable function in $\bold{s}$. The rest of the proof is exactly similar as in part $(a)$. 

\end{proof}

\begin{proof}[Proof of Propositon \ref{propn:identifiability}] 

Borrowing the notations from Theorem \ref{thm:state} and Theorem \ref{thm:observe} let us denote the pdf  $[\bold{Y=y}|\boldsymbol{\theta}_{1}]$ by \\  $\mathlarger{\mathlarger{\int}_{\mathbb R^{nT}}\frac{1}{(2\pi)^\frac{nT}{2}}\frac{1}{|{\mathbf\Sigma}_{1,f,\epsilon}(\bold{x})|^\frac{1}{2}}}\times $ 

\vspace{3mm}

$\exp\left[-\frac{1}{2}{\begin{pmatrix}y(\bold{s}_{1},1)-\beta_{0f}-\beta_{1f}x(\bold{s}_{1},1)\\y(\bold{s}_{2},1)-\beta_{0f}-\beta_{1f}x(\bold{s}_{2},1)\\ \vdots\\y(\bold{s}_{n},T)-\beta_{0f}-\beta_{1f}x(\bold{s}_{n},T)\end{pmatrix}}^{\prime}{{\mathbf\Sigma}_{1,f,\epsilon}}(\bold{x})^{-1}{\begin{pmatrix}y(\bold{s}_{1},1)-\beta_{0f}-\beta_{1f}x(\bold{s}_{1},1)\\y(\bold{s}_{2},1)-\beta_{0f}-\beta_{1f}x(\bold{s}_{2},1)\\ \vdots\\y(\bold{s}_{n},T)-\beta_{0f}-\beta_{1f}x(\bold{s}_{n},T)\end{pmatrix}}\right]\mathlarger{ h_{1}(\mathbf{x}) \,d\mathbf{x}} $

\vspace{3mm}

where $\mathbf{\Sigma}_{1,f,\epsilon}(\bold{x})$ is the matrix $\mathbf{\Sigma}_{f,\epsilon}$ associated with parameter value $\boldsymbol{\theta}=\boldsymbol{\theta}_{1}$ and $\bold{x}$. $h_{1}(\mathbf{x})$ is the pdf \\ $[x(\bold{s}_{1},1),x(\bold{s}_{2},1),\cdots,x(\bold{s}_{n},T)|\boldsymbol{\theta}]$ associated with parameter value $\boldsymbol{\theta}=\boldsymbol{\theta}_{1}$. Similarly the pdf  $[\bold{Y=y}|\boldsymbol{\theta}_{2}]$ is denoted by $\mathlarger{\mathlarger{\int}_{\mathbb R^{nT}}\frac{1}{(2\pi)^\frac{nT}{2}}\frac{1}{|{\mathbf\Sigma}_{2,f,\epsilon}(\bold{x})|^\frac{1}{2}}}\times $ 

\vspace{3mm}

$\exp\left[-\frac{1}{2}{\begin{pmatrix}y(\bold{s}_{1},1)-\beta_{0f}-\frac{\beta_{1f}}{c}x(\bold{s}_{1},1)\\y(\bold{s}_{2},1)-\beta_{0f}-\frac{\beta_{1f}}{c}x(\bold{s}_{2},1)\\ \vdots\\y(\bold{s}_{n},T)-\beta_{0f}-\frac{\beta_{1f}}{c}x(\bold{s}_{n},T)\end{pmatrix}}^{\prime}{{\mathbf\Sigma}_{2,f,\epsilon}}(\bold{x})^{-1}{\begin{pmatrix}y(\bold{s}_{1},1)-\beta_{0f}-\frac{\beta_{1f}}{c}x(\bold{s}_{1},1)\\y(\bold{s}_{2},1)-\beta_{0f}-\frac{\beta_{1f}}{c}x(\bold{s}_{2},1)\\ \vdots\\y(\bold{s}_{n},T)-\beta_{0f}-\frac{\beta_{1f}}{c}x(\bold{s}_{n},T)\end{pmatrix}}\right]\mathlarger{ h_{2}(\mathbf{x}) \,d\mathbf{x}} $

\vspace{3mm}

where $\mathbf{\Sigma}_{2,f,\epsilon}(\bold{x})$ is the matrix $\mathbf{\Sigma}_{f,\epsilon}$ associated with parameter value $\boldsymbol{\theta}=\boldsymbol{\theta}_{2}$ and $\bold{x}$. $h_{2}(\mathbf{x})$ is the pdf \\ $[x(\bold{s}_{1},1),x(\bold{s}_{2},1),\cdots,x(\bold{s}_{n},T)|\boldsymbol{\theta}]$ associated with parameter value $\boldsymbol{\theta}=\boldsymbol{\theta}_{2}$. 

\vspace{3mm}

Now recall that $h_{1}(\mathbf{x})$ is given by  

$ \mathlarger{\mathlarger{{\int}_{\mathbb R^{n}} \frac{1}{(2\pi)^\frac{n}{2}}\frac{1}{|{\mathbf{\Sigma}}_{0}|^\frac{1}{2}}}}\exp\left[-\frac{1}{2}{\begin{pmatrix}x(\bold{s}_{1},0)-\mu_{01}\\x(\bold{s}_{2},0)-\mu_{02}\\ \vdots\\x(\bold{s}_{n},0)-\mu_{0n}\end{pmatrix}}^{\prime}{{\mathbf{\Sigma}}_{0}}^{-1}{\begin{pmatrix}x(\bold{s}_{1},0)-\mu_{01}\\x(\bold{s}_{2},0)-\mu_{02}\\ \vdots\\x(\bold{s}_{n},0)-\mu_{0n}\end{pmatrix}}\right] \mathlarger{\mathlarger{\frac{1}{(2\pi)^\frac{nT}{2}}\frac{1}{|\tilde{\mathbf{\Sigma}}_{1}(\bold{x})|^\frac{1}{2}}}}\times$

\vspace{3mm}

$\exp\left[-\frac{1}{2}{\begin{pmatrix}x(\bold{s}_{1},1)-\beta_{0g}-\beta_{1g}
x(\bold{s}_{1},0)\\x(\bold{s}_{2},1)-\beta_{0g}-\beta_{1g}
x(\bold{s}_{2},0)\\ \vdots\\x(\bold{s}_{n},T)-\beta_{0g}-\beta_{1g}
x(\bold{s}_{n},T-1)\end{pmatrix}}^{\prime}{\tilde{\mathbf{\Sigma}}}_{1}(\bold{x})^{-1}{\begin{pmatrix}x(\bold{s}_{1},1)-\beta_{0g}-\beta_{1g}
x(\bold{s}_{1},0)\\x(\bold{s}_{2},1)-\beta_{0g}-\beta_{1g}
x(\bold{s}_{2},0)\\ \vdots\\x(\bold{s}_{n},T)-\beta_{0g}-\beta_{1g}
x(\bold{s}_{n},T-1)\end{pmatrix}}\right]$

\vspace{3mm}

$ dx(\bold{s}_{1},0)dx(\bold{s}_{2},0)\cdots dx(\bold{s}_{n},0) $
 
\vspace{3mm} 

where $\tilde{\mathbf{\Sigma}}_{1}(\bold{x})$ is the $\tilde{\mathbf{\Sigma}}$ matrix associated with associated with parameter value $\boldsymbol{\theta}=\boldsymbol{\theta}_{1}$ and $\bold{x}$. Similarly, we may define $\tilde{\mathbf{\Sigma}}_{2}(\bold{x})$ as the $\tilde{\mathbf{\Sigma}}$ matrix associated with associated with parameter value $\boldsymbol{\theta}=\boldsymbol{\theta}_{2}$ and $\bold{x}$. Then, it is easy to see that $\tilde{\mathbf{\Sigma}}_{2}(c\bold{x})=\tilde{\mathbf{\Sigma}}_{1}(\bold{x})$, and that in turn implies $h_{2}(c\mathbf{x})=\frac{h_{1}(\mathbf{x})}{c^nT}$. \\
Hence, $[\bold{Y=y}|\boldsymbol{\theta}_{2}]=\mathlarger{\mathlarger{\int}_{\mathbb R^{nT}}\frac{1}{(2\pi)^\frac{nT}{2}}\frac{1}{|{\mathbf\Sigma}_{2,f,\epsilon}(\bold{x})|^\frac{1}{2}}}\times $
\vspace{3mm}

$\exp\left[-\frac{1}{2}{\begin{pmatrix}y(\bold{s}_{1},1)-\beta_{0f}-\frac{\beta_{1f}}{c}x(\bold{s}_{1},1)\\y(\bold{s}_{2},1)-\beta_{0f}-\frac{\beta_{1f}}{c}x(\bold{s}_{2},1)\\ \vdots\\y(\bold{s}_{n},T)-\beta_{0f}-\frac{\beta_{1f}}{c}x(\bold{s}_{n},T)\end{pmatrix}}^{\prime}{{\mathbf\Sigma}_{2,f,\epsilon}}(\bold{x})^{-1}{\begin{pmatrix}y(\bold{s}_{1},1)-\beta_{0f}-\frac{\beta_{1f}}{c}x(\bold{s}_{1},1)\\y(\bold{s}_{2},1)-\beta_{0f}-\frac{\beta_{1f}}{c}x(\bold{s}_{2},1)\\ \vdots\\y(\bold{s}_{n},T)-\beta_{0f}-\frac{\beta_{1f}}{c}x(\bold{s}_{n},T)\end{pmatrix}}\right]\mathlarger{ h_{2}(\mathbf{x}) \,d\mathbf{x}} $

\vspace{3mm}

$=\mathlarger{\mathlarger{\int}_{\mathbb R^{nT}}\frac{1}{(2\pi)^\frac{nT}{2}}\frac{1}{|{\mathbf\Sigma}_{2,f,\epsilon}(c\bold{x})|^\frac{1}{2}}}\times $
\vspace{3mm}

$\exp\left[-\frac{1}{2}{\begin{pmatrix}y(\bold{s}_{1},1)-\beta_{0f}-\frac{\beta_{1f}}{c}cx(\bold{s}_{1},1)\\y(\bold{s}_{2},1)-\beta_{0f}-\frac{\beta_{1f}}{c}cx(\bold{s}_{2},1)\\ \vdots\\y(\bold{s}_{n},T)-\beta_{0f}-\frac{\beta_{1f}}{c}cx(\bold{s}_{n},T)\end{pmatrix}}^{\prime}{{\mathbf\Sigma}_{2,f,\epsilon}}(c\bold{x})^{-1}{\begin{pmatrix}y(\bold{s}_{1},1)-\beta_{0f}-\frac{\beta_{1f}}{c}cx(\bold{s}_{1},1)\\y(\bold{s}_{2},1)-\beta_{0f}-\frac{\beta_{1f}}{c}cx(\bold{s}_{2},1)\\ \vdots\\y(\bold{s}_{n},T)-\beta_{0f}-\frac{\beta_{1f}}{c}cx(\bold{s}_{n},T)\end{pmatrix}}\right]\mathlarger{ h_{2}(c\mathbf{x}) c^nT\,d\mathbf{x}} $

\vspace{3mm}

$=\mathlarger{\mathlarger{\int}_{\mathbb R^{nT}}\frac{1}{(2\pi)^\frac{nT}{2}}\frac{1}{|{\mathbf\Sigma}_{1,f,\epsilon}(\bold{x})|^\frac{1}{2}}}\times $ 

\vspace{3mm}

$\exp\left[-\frac{1}{2}{\begin{pmatrix}y(\bold{s}_{1},1)-\beta_{0f}-\beta_{1f}x(\bold{s}_{1},1)\\y(\bold{s}_{2},1)-\beta_{0f}-\beta_{1f}x(\bold{s}_{2},1)\\ \vdots\\y(\bold{s}_{n},T)-\beta_{0f}-\beta_{1f}x(\bold{s}_{n},T)\end{pmatrix}}^{\prime}{{\mathbf\Sigma}_{1,f,\epsilon}}(\bold{x})^{-1}{\begin{pmatrix}y(\bold{s}_{1},1)-\beta_{0f}-\beta_{1f}x(\bold{s}_{1},1)\\y(\bold{s}_{2},1)-\beta_{0f}-\beta_{1f}x(\bold{s}_{2},1)\\ \vdots\\y(\bold{s}_{n},T)-\beta_{0f}-\beta_{1f}x(\bold{s}_{n},T)\end{pmatrix}}\right]\mathlarger{ h_{2}(c\mathbf{x}) c^nT\,d\mathbf{x}} $

\vspace{3mm}

$=\mathlarger{\mathlarger{\int}_{\mathbb R^{nT}}\frac{1}{(2\pi)^\frac{nT}{2}}\frac{1}{|{\mathbf\Sigma}_{1,f,\epsilon}(\bold{x})|^\frac{1}{2}}}\times $ 

\vspace{3mm}

$\exp\left[-\frac{1}{2}{\begin{pmatrix}y(\bold{s}_{1},1)-\beta_{0f}-\beta_{1f}x(\bold{s}_{1},1)\\y(\bold{s}_{2},1)-\beta_{0f}-\beta_{1f}x(\bold{s}_{2},1)\\ \vdots\\y(\bold{s}_{n},T)-\beta_{0f}-\beta_{1f}x(\bold{s}_{n},T)\end{pmatrix}}^{\prime}{{\mathbf\Sigma}_{1,f,\epsilon}}(\bold{x})^{-1}{\begin{pmatrix}y(\bold{s}_{1},1)-\beta_{0f}-\beta_{1f}x(\bold{s}_{1},1)\\y(\bold{s}_{2},1)-\beta_{0f}-\beta_{1f}x(\bold{s}_{2},1)\\ \vdots\\y(\bold{s}_{n},T)-\beta_{0f}-\beta_{1f}x(\bold{s}_{n},T)\end{pmatrix}}\right]\mathlarger{ h_{1}(c\mathbf{x}) \,d\mathbf{x}} $

\vspace{3mm}

$=[\bold{Y=y}|\boldsymbol{\theta}_{1}]$.
 
\end{proof}

\begin{proof}[Proof of Theorem \ref{thm:mathematical_identifiability}]
 Assume $[Y(\bold{s}_{1},1),\cdots, Y(\bold{s}_{n},T)|\boldsymbol{\theta}_{1}] \stackrel{d}{=} [Y(\bold{s}_{1},1),\cdots, Y(\bold{s}_{n},T)|\boldsymbol{\theta}_{2}]$. We want to show $\boldsymbol{\theta}_{1}=\boldsymbol{\theta}_{2}$. Since, in this case $\lambda_{f},\lambda_{g},\sigma_{f}^2,\sigma_{g}^2,\lambda_{0},\sigma_{0}^2,\lambda_{\eta},\sigma_{\eta}^2,\beta_{0f},\beta_{1f},{\boldsymbol{\mu}}_{0}$ are fixed so it is enough to show $(\beta_{0g1},\beta_{1g1},\lambda_{\epsilon1},\sigma_{\epsilon1}^2)=(\beta_{0g2},\beta_{1g2},\lambda_{\epsilon2},\sigma_{\epsilon2}^2)$.
 
 From the proof of part (a) of Theorem \ref{thm:observe} we get that for fixed $x(\bold{s},t)$, $Y(\bold{s},t)$ is distributed as a Gaussian with mean $\beta_{0f}+\beta_{1f}x(\bold{s},t)$. Hence, $E(Y(\bold{s},t))=E_{X(\bold{s},t)}(E(Y(\bold{s},t)|X(\bold{s},t))=E(\beta_{0f}+\beta_{1f}X(\bold{s},t))=\beta_{0f}+\beta_{1f}E(X(\bold{s},t))$. Now, put $t=1$ to get $E(Y(\bold{s},1))=\beta_{0f}+\beta_{1f}E(X(\bold{s},1))$. But, $E(X(\bold{s},1))=E_{X(\bold{s},0)}(E(X(\bold{s},1)|X(\bold{s},0))=E_{X(\bold{s},0)}(g(X(\bold{s},0))+\eta(\bold{s},0))=\beta_{0g}+\beta_{1g}\mu_{0}(\bold{s})$. Combining these results we get
 \begin{equation}
  E(Y(\bold{s},1))=\beta_{0f}+\beta_{1f}(\beta_{0g}+\beta_{1g}\mu_{0}(\bold{s}))
 \end{equation}

 Again from the proof of part (a) of Theorem \ref{thm:observe} we get that $Cov(Y(\bold{s},t),Y(\bold{s}^{\prime},t^{\prime})|X(\bold{s},t),X(\bold{s}^{\prime},t^{\prime}))=c_{f}(x(\bold{s},t),x(\bold{s}^{\prime},t^{\prime}))+c_{\epsilon}(\bold{s},\bold{s}^{\prime})\delta(t-t^{\prime})$ where $\delta(\cdot)$ is the 
delta function i.e. $\delta(t)=1$ for $t=0$ and $=0$ otherwise. Since, by assumptions of the Theorem \ref{thm:mathematical_identifiability} all the covariance kernels are squared exponential type this expression becomes $\sigma^{2}_{f}e^{-\lambda_{f}(x(\bold{s},t)-x(\bold{s}^{\prime},t^{\prime}))^{2}}+\sigma^{2}_{\epsilon}e^{-\lambda_{\epsilon}||\bold{s}-\bold{s}^{\prime}||^{2}}\delta(t-t^{\prime})$. Now, put $t=t^{\prime}=1$ to get $Cov(Y(\bold{s},1),Y(\bold{s}^{\prime},1)|X(\bold{s},1),X(\bold{s}^{\prime},1))=\sigma^{2}_{f}e^{-\lambda_{f}(x(\bold{s},1)-x(\bold{s}^{\prime},1))^{2}}+\sigma^{2}_{\epsilon}e^{-\lambda_{\epsilon}||\bold{s}-\bold{s}^{\prime}||^{2}}$. Hence,
\begin{align*}
&Cov(Y(\bold{s},1),Y(\bold{s}^{\prime},1))\\
&=E_{X(\bold{s},1),X(\bold{s}^{\prime},1)}(Cov(Y(\bold{s},1),Y(\bold{s}^{\prime},1)|X(\bold{s},1),X(\bold{s}^{\prime},1)))+\\
&Cov_{X(\bold{s},1),X(\bold{s}^{\prime},1)}(E(Y(\bold{s},1)|X(\bold{s},1)),E(Y(\bold{s}^{\prime},1)|X(\bold{s}^{\prime},1)))\\
=&\sigma^{2}_{f}E(e^{-\lambda_{f}(x(\bold{s},1)-x(\bold{s}^{\prime},1))^{2}})+\sigma^{2}_{\epsilon}e^{-\lambda_{\epsilon}||\bold{s}-\bold{s}^{\prime}||^{2}}+Cov_{X(\bold{s},1),X(\bold{s}^{\prime},1)}(E(Y(\bold{s},1)|X(\bold{s},1)),E(Y(\bold{s}^{\prime},1)|X(\bold{s}^{\prime},1)))\\
=&\sigma^{2}_{f}E(e^{-\lambda_{f}(x(\bold{s},1)-x(\bold{s}^{\prime},1))^{2}})+\sigma^{2}_{\epsilon}e^{-\lambda_{\epsilon}||\bold{s}-\bold{s}^{\prime}||^{2}}+Cov(\beta_{0f}+\beta_{1f}X(\bold{s},1),\beta_{0f}+\beta_{1f}X(\bold{s}^{\prime},1))\\
=&\sigma^{2}_{f}E(e^{-\lambda_{f}(x(\bold{s},1)-x(\bold{s}^{\prime},1))^{2}})+\sigma^{2}_{\epsilon}e^{-\lambda_{\epsilon}||\bold{s}-\bold{s}^{\prime}||^{2}}+\beta^{2}_{1f}Cov(X(\bold{s},1),X(\bold{s}^{\prime},1))\\
\end{align*}
Now, 
\begin{align*}
 &Cov(X(\bold{s},1),X(\bold{s}^{\prime},1))\\
&=E_{X(\bold{s},0),X(\bold{s}^{\prime},0)}(Cov(X(\bold{s},1),X(\bold{s}^{\prime},1)|X(\bold{s},0),X(\bold{s}^{\prime},0)))+\\
&Cov_{X(\bold{s},0),X(\bold{s}^{\prime},0)}(E(X(\bold{s},1)|X(\bold{s},0)),E(X(\bold{s}^{\prime},1)|X(\bold{s}^{\prime},0)))\\
=&\sigma^{2}_{g}E(e^{-\lambda_{g}(x(\bold{s},0)-x(\bold{s}^{\prime},0))^{2}})+\sigma^{2}_{\eta}e^{-\lambda_{\eta}||\bold{s}-\bold{s}^{\prime}||^{2}}+Cov_{X(\bold{s},0),X(\bold{s}^{\prime},0)}(E(X(\bold{s},1)|X(\bold{s},0)),E(X(\bold{s}^{\prime},1)|X(\bold{s}^{\prime},0)))\\
=&\sigma^{2}_{g}E(e^{-\lambda_{g}(x(\bold{s},0)-x(\bold{s}^{\prime},0))^{2}})+\sigma^{2}_{\eta}e^{-\lambda_{\eta}||\bold{s}-\bold{s}^{\prime}||^{2}}+Cov(\beta_{0g}+\beta_{1g}X(\bold{s},0),\beta_{0g}+\beta_{1g}X(\bold{s}^{\prime},0))\\
=&\sigma^{2}_{g}E(e^{-\lambda_{g}(x(\bold{s},0)-x(\bold{s}^{\prime},0))^{2}})+\sigma^{2}_{\eta}e^{-\lambda_{\eta}||\bold{s}-\bold{s}^{\prime}||^{2}}+\beta^{2}_{1g}Cov(X(\bold{s},0),X(\bold{s}^{\prime},0))\\
=&\sigma^{2}_{g}E(e^{-\lambda_{g}(x(\bold{s},0)-x(\bold{s}^{\prime},0))^{2}})+\sigma^{2}_{\eta}e^{-\lambda_{\eta}||\bold{s}-\bold{s}^{\prime}||^{2}}+\beta^{2}_{1g}\sigma^{2}_{0}e^{-\lambda_{0}||\bold{s}-\bold{s}^{\prime}||^{2}}\\
\end{align*}
Combining these results we get
 \begin{align}
  &Cov(Y(\bold{s},1),Y(\bold{s}^{\prime},1))\notag \\ 
  =&\sigma^{2}_{f}E(e^{-\lambda_{f}(x(\bold{s},1)-x(\bold{s}^{\prime},1))^{2}})+\sigma^{2}_{\epsilon}e^{-\lambda_{\epsilon}||\bold{s}-\bold{s}^{\prime}||^{2}}+\beta^{2}_{1f}\sigma^{2}_{g}E(e^{-\lambda_{g}(x(\bold{s},0)-x(\bold{s}^{\prime},0))^{2}}) \notag \\
  &+\beta^{2}_{1f}\sigma^{2}_{\eta}e^{-\lambda_{\eta}||\bold{s}-\bold{s}^{\prime}||^{2}}+\beta^{2}_{1f}\beta^{2}_{1g}\sigma^{2}_{0}e^{-\lambda_{0}||\bold{s}-\bold{s}^{\prime}||^{2}}
 \end{align}
Now, see that $[Y(\bold{s}_{1},1),\cdots, Y(\bold{s}_{n},T)|\boldsymbol{\theta}_{1}] \stackrel{d}{=} [Y(\bold{s}_{1},1),\cdots, Y(\bold{s}_{n},T)|\boldsymbol{\theta}_{2}]$ implies $E_{\boldsymbol{\theta}_{1}}(Y(\bold{s}_{i},1))=E_{\boldsymbol{\theta}_{2}}(Y(\bold{s}_{i},1))$. Hence, by equation (12) 
\begin{equation}
 (\beta_{0f}-\beta_{0f})+\beta_{1f}(\beta_{0g1}-\beta_{0g2})+\beta_{1f}(\beta_{1g1}\mu_{0}(\bold{s}_{i})-\beta_{1g2}\mu_{0}(\bold{s}_{i}))=0\notag
\end{equation} 
\begin{equation}
 \Leftrightarrow\beta_{1f}(\beta_{0g1}-\beta_{0g2})+\beta_{1f}(\beta_{1g1}\mu_{0}(\bold{s}_{i})-\beta_{1g2}\mu_{0}(\bold{s}_{i}))=0
\end{equation} 
Similarly, noting that the term $\sigma^{2}_{f}E(e^{-\lambda_{f}(x(\bold{s},1)-x(\bold{s}^{\prime},1))^{2}})+\beta^{2}_{1f}\sigma^{2}_{g}E(e^{-\lambda_{g}(x(\bold{s},0)-x(\bold{s}^{\prime},0))^{2}})$ remains same under $\boldsymbol{\theta}_{1}$ and $\boldsymbol{\theta}_{2}$; $Cov_{\boldsymbol{\theta}_{1}}(Y(\bold{s}_{i},1),Y(\bold{s}_{j},1))=Cov_{\boldsymbol{\theta}_{2}}(Y(\bold{s}_{i},1),Y(\bold{s}_{j},1))$ implies (by equation (13))
\begin{equation}
 \beta^{2}_{1f}\sigma^{2}_{0}e^{-\lambda_{0}||\bold{s}_{i}-\bold{s}_{j}||^{2}}(\beta^{2}_{1g1}-\beta^{2}_{1g2})+(\sigma^{2}_{\epsilon1}e^{-\lambda_{\epsilon1}||\bold{s}_{i}-\bold{s}_{j}||^{2}}-\sigma^{2}_{\epsilon2}e^{-\lambda_{\epsilon2}||\bold{s}_{i}-\bold{s}_{j}||^{2}})=0
\end{equation}
Now we, solve equation (14) and (15) to show that $(\beta_{0g1},\beta_{1g1},\lambda_{\epsilon1},\sigma_{\epsilon1}^2)=(\beta_{0g2},\beta_{1g2},\lambda_{\epsilon2},\sigma_{\epsilon2}^2)$. Note that, equation (15) can be represented as $ax^{t}+by^{t}=cz^{t}$ where $a=\beta^{2}_{1f}\sigma^{2}_{0}(\beta^{2}_{1g1}-\beta^{2}_{1g2}), x=e^{-\lambda_{0}},b=\sigma^{2}_{\epsilon1},y= e^{-\lambda_{\epsilon1}},c=\sigma^{2}_{\epsilon2}$ and $z=e^{-\lambda_{\epsilon2}}$. Then by assumption (C) $\exists \  t_{1}=d_{1}^{2},t_{2}=d_{2}^{2},t_{3}=d_{3}^{2}$ such that 
\begin{align}
 ax^{t_{1}}+by^{t_{1}}&=cz^{t_{1}}\\
 ax^{t_{2}}+by^{t_{2}}&=cz^{t_{2}}\\
 ax^{t_{3}}+by^{t_{3}}&=cz^{t_{3}}
\end{align}
WLOG we assume $t_{1}<t_{2}<t_{3}$. There are three possibilities, i.e. $a=0$, $a>0$ and $a< 0$.\\
Case I: Assume $a=0$. Then equation (16) $\Rightarrow by^{t_{1}}=cz^{t_{1}}$ \ $\Rightarrow by^{t_{1}}y^{t_{2}-t_{1}}=cz^{t_{1}}y^{t_{2}-t_{1}}$ \ $\Rightarrow by^{t_{2}}=cz^{t_{1}}y^{t_{2}-t_{1}}$. Now, using equation (17) we get that  
$cz^{t_{2}}=cz^{t_{1}}y^{t_{2}-t_{1}}$ \ $\Rightarrow cz^{t_{2}-t_{1}}=cy^{t_{2}-t_{1}}$ \ $\Rightarrow z^{t_{2}-t_{1}}=y^{t_{2}-t_{1}}$ (since by the assumption $c=\sigma^{2}_{\epsilon2}\neq 0$). So, $z=y$ \ $\Rightarrow \lambda_{\epsilon1}=\lambda_{\epsilon2}$. Now, $a=0$ \ $\Rightarrow \beta^{2}_{1g1}=\beta^{2}_{1g2}$. Moreover, using $z=y$ in equation (16) \ $\Rightarrow bz^{t_{1}}=cz^{t_{1}}$\ $\Rightarrow b=c$\ $\Rightarrow \sigma^{2}_{\epsilon1}=\sigma^{2}_{\epsilon2}$. \\
Case II: Assume $a> 0$. We show it leads to contradiction. Since, by assumption $x\neq y$ and $x \neq z$, the case II can be exhaustively represented by $3$ sub cases; sub case IIa : $x < min(y,z)$, sub case IIb : $x > max(y,z)$ and sub case IIc : $min(y,z) < x < max(y,z)$. \\
Subcase IIa: If $x<y\leq z$ then see that $cz^{t_{2}}=ax^{t_{2}}+by^{t_{2}}=ax^{t_{1}}x^{t_{2}-t_{1}}+by^{t_{1}}y^{t_{2}-t_{1}}<(ax^{t_{1}}+by^{t_{1}})z^{t_{2}-t_{1}}=cz^{t_{1}}z^{t_{2}-t_{1}}=cz^{t_{2}}$; contradiction. Or, if $x<z<y$; then $x^{t_{1}}<z^{t_{1}}<y^{t_{1}}$. Hence, $z^{t_{1}}=\frac{a}{c}x^{t_{1}}+\frac{b}{c}y^{t_{1}}$ implies $\frac{a}{c}+\frac{b}{c}=1$. Then by Jensen's inequality, $z^{t_{2}}={z^{t_{1}}}^{\frac{t_{2}}{t_{1}}}=(\frac{a}{c}x^{t_{1}}+\frac{b}{c}y^{t_{1}})^{\frac{t_{2}}{t_{1}}}< \frac{a}{c}(x^{t_{1}})^{\frac{t_{2}}{t_{1}}}+\frac{b}{c}(y^{t_{1}})^{\frac{t_{2}}{t_{1}}}=\frac{a}{c}x^{t_{2}}+\frac{b}{c}y^{t_{2}}=z^{t_{2}}$; contradiction.\\
Subcase IIb: If $x>y\geq z$ then see that $cz^{t_{2}}=ax^{t_{2}}+by^{t_{2}}=ax^{t_{1}}x^{t_{2}-t_{1}}+by^{t_{1}}y^{t_{2}-t_{1}}>(ax^{t_{1}}+by^{t_{1}})z^{t_{2}-t_{1}}=cz^{t_{1}}z^{t_{2}-t_{1}}=cz^{t_{2}}$; contradiction. Or, if $x>z>y$; then $x^{t_{1}}>z^{t_{1}}>y^{t_{1}}$. Hence, $z^{t_{1}}=\frac{a}{c}x^{t_{1}}+\frac{b}{c}y^{t_{1}}$ implies $\frac{a}{c}+\frac{b}{c}=1$. Then again by Jensen's inequality, we get contradiction.\\
Subcase IIc: If $y<x<z$ then see that $cz^{t_{2}}=ax^{t_{2}}+by^{t_{2}}=ax^{t_{1}}x^{t_{2}-t_{1}}+by^{t_{1}}y^{t_{2}-t_{1}}<(ax^{t_{1}}+by^{t_{1}})z^{t_{2}-t_{1}}=cz^{t_{1}}z^{t_{2}-t_{1}}=cz^{t_{2}}$; contradiction. Or, if $z<x<y$; then $cz^{t_{2}}=ax^{t_{2}}+by^{t_{2}}=ax^{t_{1}}x^{t_{2}-t_{1}}+by^{t_{1}}y^{t_{2}-t_{1}}>(ax^{t_{1}}+by^{t_{1}})z^{t_{2}-t_{1}}=cz^{t_{1}}z^{t_{2}-t_{1}}=cz^{t_{2}}$; contradiction.\\
Case III: Assume $a< 0$. Now, we put $-a$ in place of $a$ and then symmetry and subcase IIb leads to contradiction.\\
With that we establish that $a$ must be $0$ and $\lambda_{\epsilon1}=\lambda_{\epsilon2},\sigma^{2}_{\epsilon1}=\sigma^{2}_{\epsilon2}$ and $\beta^{2}_{1g1}=\beta^{2}_{1g2}$. \\
Now we show that $\beta_{1g1}=-\beta_{1g2}$ leads to contradiction implying $\beta_{1g1}=\beta_{1g2}$. We substitute $\beta_{1g1}=-\beta_{1g2}$ in equation (14) to get $\beta_{1f}(\beta_{0g1}-\beta_{0g2})+2\beta_{1f}(\beta_{1g1}\mu_{0}(\bold{s}_{i}))=0$ i.e. $\mu_{0}(\bold{s}_{i})=\frac{\beta_{0g1}-\beta_{0g2}}{-2\beta_{1g1}}$; contradiction since by assumption (C) of Theorem \ref{thm:mathematical_identifiability} $\exists$ at least one pair $i,j$ such that $\mu_{0}(\bold{s}_{i}) \neq \mu_{0}(\bold{s}_{j})$. So, $\beta_{1g1}=\beta_{1g2}$. Now, we replace that in equation (14) and get $\beta_{1f}(\beta_{0g1}-\beta_{0g2})=0$ which implies $\beta_{0g1}=\beta_{0g2}$.
\end{proof}

\end{document}